\def \withParikhbyBlocks {}
\def \VersionarXiv {}
	\titleformat{\section}{\large\bfseries}{\thesection}{1em}{}
	\titleformat{\subsection}{\normalsize\bfseries}{\thesubsection}{0.75em}{}
	\newcommand{\homepage}[1]{\href{#1}{\color{gray}\faHome}}
	\newenvironment{acks}
		{\section*{Acknowledgments}}
		{\medskip}
\def\@doi#1{\href{https://doi.org/#1}
			{\ttfamily https://doi.org/#1}\egroup}}
\def\@doi#1{\ttfamily https://doi.org/#1\egroup}}
	\def\doi{\bgroup\catcode`\_=12\relax\@doi}}
\newenvironment{oneenumerate}
{\ifdefined\VersionarXiv\begin{enumerate}\else\begin{inparaenum}[1)]\fi}
	{\ifdefined\VersionarXiv\end{enumerate}\else\end{inparaenum}\fi}
\definecolor{LSdeuxNrichblack}{HTML}{020A13}
\definecolor{LSdeuxNbazul}{HTML}{1273CE}
\definecolor{LSdeuxNcornflower}{HTML}{5297FF}
\definecolor{LSdeuxNAntiflashwhite}{HTML}{F8FBFF}
\crefname{line}{\text{line}}{\text{lines}} %
	\newcommand{\orcidID}[1]{\orcidlink{#1}}
\newcommand{\defProblemGen}[4]
{%
	\noindent\fcolorbox{LSdeuxNrichblack}{LSdeuxNAntiflashwhite}{
		\begin{minipage}{.95\columnwidth}
			\textbf{#1#4:}\\
			\textsc{Input}: #2\\
			\textsc{Problem}: #3
		\end{minipage}
	}
	
	\smallskip
	
}
\newcommand{\defProblem}[3]
{%
	\defProblemGen{#1}{#2}{#3}{~problem}
}
\tikzstyle{every node}     = [initial text=]
\tikzstyle{PTA}            = [auto, ->, >=stealth']
\tikzstyle{NFA}            = [auto, ->, >=stealth']
\tikzstyle{location}       = [rectangle, rounded corners, minimum size= 12pt, draw=black, fill=blue!15, inner sep=2pt, align=center]
\tikzstyle{location2CM}    = [location,thick,fill=green!30]
\tikzstyle{locbad}         = [location,fill=orange!50]
\tikzstyle{locunreachable} = [location, gray, fill=white]
\tikzstyle{sink}           = [fill=gray!50]
\tikzstyle{final}          = [double,fill=green!50]
\tikzstyle{private}        = [thick,fill=red!50]
\tikzstyle{urgent}         = [semithick,fill=yellow]
\tikzstyle{invariant}      = [draw=black, dotted, fill=purple!10, inner sep= 1pt, node distance=0] %
\tikzstyle{invariantNorth} = [invariant, yshift=1em]
\tikzstyle{invariantSouth} = [invariant, yshift=-1em]
\tikzstyle{invariantWest}  = [invariant, xshift=-1.5em]
\tikzstyle{invariantEast}  = [invariant, xshift=+1.5em]
\tikzstyle{edge}           = [->,  >=stealth']
\tikzstyle{state}          = [rectangle, minimum size= 12pt, draw=black, fill=black!10, inner sep=2pt, align=center]
\definecolor{coloract}{rgb}{0, 0.3, 0}
\definecolor{colorclock}{rgb}{0.7, 0, 0}
\definecolor{colordisc}{rgb}{1, 0, 1}
\definecolor{colorenergy}{rgb}{1, 0, 1}
\definecolor{colorloc}{rgb}{0.4, 0.4, 0.65}
\definecolor{colorparam}{rgb}{1, 0.6, 0.0}
\newcommand{\rowHeader}{\rowcolor{LSdeuxNcornflower}}
\newcommand{\cellDec}{\cellcolor{green!40}}
\newcommand{\cellUndec}{\cellcolor{red!40}}
\newcommand{\cellOpen}{\cellcolor{yellow!40}}
\theoremstyle{plain}
\newtheorem{lemma}{Lemma}
\newtheorem{proposition}{Proposition}
\newtheorem{theorem}{Theorem}
\newtheorem{corollary}{Corollary}
\theoremstyle{definition}
\newtheorem{definition}{Definition}
\newtheorem{example}{Example}
\theoremstyle{remark}
\newtheorem{remark}{Remark}
\newcommand{\setN}{\ensuremath{\mathbb N}}
\newcommand{\setQ}{\ensuremath{{\mathbb Q}}}
\newcommand{\setQplus}{\ensuremath{\setQ_{+}}} %
\newcommand{\setR}{\ensuremath{\mathbb R}}
\newcommand{\setRgeqzero}{\ensuremath{\setR_{\geq 0}}}
\newcommand{\setZ}{\ensuremath{\mathbb Z}}
\newcommand{\init}{\ensuremath{_0}}
\newcommand{\generalFigsScaleFactor}{1}
\newif\ifinfigure
\let\infigure\iffalse %
\let\oldtikzpicture\tikzpicture
\let\endoldtikzpicture\endtikzpicture
\renewenvironment{tikzpicture}
  {\let\ifinfigure\iftrue\oldtikzpicture}
  {\endoldtikzpicture\let\ifinfigure\iffalse}
\newcommand{\styleclock}[1]{\ensuremath{
	\ifinfigure%
		\textcolor{colorclock}{{#1}}%
	\else%
		#1%
	\fi%
	}%
}
\newcommand{\styleenergy}[1]{\ensuremath{
	\ifinfigure%
		\textcolor{colorenergy}{{#1}}%
	\else%
		#1%
	\fi%
	}%
}
\newcommand{\styleact}[1]{\ensuremath{
	\ifinfigure%
		\textcolor{coloract}{{#1}}%
	\else%
		#1%
	\fi%
	}%
}
\newcommand{\styleloc}[1]{\ensuremath{
	\ifinfigure%
		\textcolor{colorloc}{{#1}}%
	\else%
		#1%
	\fi%
	}%
}
\newcommand{\TA}{\ensuremath{\mathcal{A}}}
\newcommand{\TN}{\ensuremath{\mathcal{N}}}
\newcommand{\TM}{\ensuremath{\mathcal{M}}}
\newcommand{\absTimeRun}{\ensuremath{\mathit{absT}}}
\newcommand{\abstime}{\ensuremath{\tau}}
\newcommand{\destutter}{\ensuremath{\mathit{destutter}}}
\newcommand{\subseqAndProject}{\ensuremath{\mathit{subseqProj}}}
\newcommand{\seqrun}{\ensuremath{\mathit{seq}}}
\newcommand{\Apriv}[1]{\ensuremath{{#1}_\mathit{priv}}}
\newcommand{\Apub}[1]{\ensuremath{{#1}_\mathit{pub}}}
\newcommand{\Actions}{\ensuremath{\Sigma}}
\newcommand{\action}{\ensuremath{\styleact{a}}}
\newcommand{\silentaction}{\ensuremath{\styleact{\varepsilon}}}
\newcommand{\assign}{\leftarrow}
\newcommand{\Clock}{\ensuremath{\styleclock{\mathcal{X}}}} %
\newcommand{\ClockCard}{H} %
\newcommand{\clock}{\ensuremath{\styleclock{x}}}
\newcommand{\clocki}[1]{\ensuremath{\styleclock{\clock_{#1}}}}
\newcommand{\clocky}{\styleclock{y}} %
\newcommand{\clockval}{w} %
\newcommand{\clockvalZero}{\ensuremath{\vec{0}_\Clock}}
\newcommand{\clockZeroTime}{\ensuremath{\styleclock{z_0}}} %
\newcommand{\clockTickOne}{\ensuremath{\styleclock{z_t}}} %
\newcommand{\duration}{\ensuremath{\mathit{dur}}} %
\newcommand{\runduration}[1]{\ensuremath{\duration}(#1)}
\newcommand{\edge}{\ensuremath{e}}
\newcommand{\edgei}[1]{\ensuremath{\edge_{#1}}}
\newcommand{\Edges}{E}
\newcommand{\fleche}[1]{\stackrel{#1}{\rightarrow}}
\newcommand{\flow}{\ensuremath{\mathit{fl}}}
\newcommand{\TransConcrete}{\ensuremath{{\rightarrow}}}
\newcommand{\FlecheConcrete}[1]{\stackrel{#1}{\rightarrowtail}}
\newcommand{\TransConcreteEdge}{\ensuremath{{\rightarrowtail}}}
\newcommand{\subR}{{\mathbb D}}
\newcommand{\guard}{\ensuremath{g}}
\newcommand{\subinterval}{\ensuremath{\nu}}
\newcommand{\invariant}{\ensuremath{I}}
\newcommand{\Language}{\ensuremath{\mathcal{L}}}
\newcommand{\LanguageIncrDecr}{\ensuremath{\Language_{\geq 0}}}
\newcommand{\loc}{\ensuremath{\ell}} %
\newcommand{\loci}[1]{\ensuremath{\loc_{#1}}}
\newcommand{\lochalt}{\ensuremath{\cmspta_{\cmshaltname}}}
\newcommand{\locinit}{\loc\init}
\newcommand{\LocSet}{\ensuremath{L}} %
\newcommand{\locpriv}{\ensuremath{\loc_{\mathit{priv}}}}
\newcommand{\LocsPriv}{\ensuremath{\LocSet_{\mathit{priv}}}}
\newcommand{\locfinal}{\ensuremath{\loc_f}}
\newcommand{\LocsFinal}{\ensuremath{\LocSet_{f}}}
\newcommand{\maxConstantTReachability}{\ensuremath{C}} %
\newcommand{\maxConstantEnergy}{\ensuremath{M}} %
\newcommand{\paramd}{\ensuremath{d}}
\newcommand{\ParikbB}{\ensuremath{\mathit{PbB}}}
\newcommand{\Q}{\ensuremath{S}}
\newcommand{\q}{\ensuremath{s}}
\newcommand{\qinit}{\ensuremath{\q\init}}
\newcommand{\qfinal}{\ensuremath{\q_f}}
\newcommand{\qi}[1]{\ensuremath{\q_{#1}}}
\newcommand{\resets}{\ensuremath{R}}
\newcommand{\concstate}{\ensuremath{\q}} %
\newcommand{\tickIncr}{\ensuremath{\styleact{\mathit{inc}}}}
\newcommand{\tickDecr}{\ensuremath{\styleact{\mathit{dec}}}}
\newcommand{\tickFin}{\ensuremath{\styleact{f}}}
\newcommand{\tickTime}{\ensuremath{\styleact{t}}}
\newcommand{\tickLastTimeRational}{\ensuremath{\styleact{t_{>0}}}}
\newcommand{\timelapseFlow}{\ensuremath{\mathit{te}}}
\newcommand{\Variables}{\ensuremath{V}}
\newcommand{\variable}{\ensuremath{v}}
\newcommand{\varrun}{\ensuremath{\rho}} %
\newcommand{\updateConstant}{\ensuremath{N}}
\newcommand{\wv}[2]{#1|#2} %
\newcommand{\compop}{\bowtie}
\newcommand{\reset}[2]{\ensuremath{[#1]_{#2}}}
\newcommand{\cardinality}[1]{\ensuremath{\lvert #1 \rvert}}
\newcommand{\PrivDurVisit}[1]{\ensuremath{\mathit{DVisit}^\mathit{priv}(#1)}}
\newcommand{\PrivEnerVisit}[1]{\ensuremath{\mathit{EVisit}^\mathit{priv}(#1)}}
\newcommand{\PubDurVisit}[1]{\ensuremath{\mathit{DVisit}^{\mathit{pub}}}(#1)}
\newcommand{\PubEnerVisit}[1]{\ensuremath{\mathit{EVisit}^{\mathit{pub}}}(#1)}
\newcommand{\existsWeakFull}{\ensuremath{\delta}}
\newcommand{\ENorETEN}{\ensuremath{\sigma}}
\newcommand{\PrivVisit}[1]{\ensuremath{\mathit{Visit}^{\mathit{priv}}(#1)}}
\newcommand{\PubVisit}[1]{\ensuremath{\mathit{Visit}^{\mathit{pub}}(#1)}}
\newcommand{\buffDiscreteEnergyObs}{\ensuremath{\mathit{bDEO}}}
\newcommand{\DiscreteEnergyObs}{\ensuremath{\mathit{DEO}}}
\newcommand{\Energies}{\ensuremath{\mathcal{E}}}
\newcommand{\EnergiesCard}{\ensuremath{M}}
\newcommand{\enervar}{\ensuremath{\styleenergy{\eta}}}
\newcommand{\enervari}[1]{\ensuremath{\styleenergy{\enervar_{#1}}}}
\newcommand{\enerval}{\ensuremath{v}}
\newcommand{\enervalZero}{\ensuremath{\vec{0}_\Energies}}
\newcommand{\EnerLevelT}{\ensuremath{\mathit{EL}}}
\newcommand{\EnerLevelBufferT}{\ensuremath{\mathit{bEL}}}
\newcommand{\enerupdates}{\ensuremath{U}}
\newcommand{\rates}{\ensuremath{\phi}}
\newcommand{\finalEnergy}{\ensuremath{\mathit{FE}}}
\newcommand{\emptyseq}{\ensuremath{\epsilon}}
\newcommand{\emptyword}{\ensuremath{\epsilon}}
\newcommand{\battery}{\ensuremath{\styleenergy{b}}}
\newcommand{\temperature}{\ensuremath{\styleenergy{t}}}
\newcommand{\LargestConstant}{\ensuremath{M}}
\newcommand{\region}{\ensuremath{r}}
\newcommand{\regioni}[1]{\ensuremath{\region_{#1}}}
\newcommand{\Regions}[1]{\ensuremath{\mathcal{R}_{#1}}}
\newcommand{\RegionAutomaton}[1]{\ensuremath{\mathcal{RA}({#1})}}
\newcommand{\intpart}[1]{\ensuremath{\lfloor#1\rfloor}}
\newcommand{\ceiling}[1]{\ensuremath{\lceil #1 \rceil}}
\newcommand{\fract}[1]{\ensuremath{\text{frac}(#1)}}
\newcommand{\stylePDA}[1]{\ensuremath{
	\ifinfigure%
		\textcolor{blue!70!green}{{#1}}%
	\else%
		#1%
	\fi%
	}%
}
\newcommand{\fontPDA}[1]{\ensuremath{\mathfrak{#1}}}
\newcommand{\PDAstackEnergy}{\ensuremath{\stylePDA{\fontPDA{e}}}}
\newcommand{\PDAstackAlphabet}{\ensuremath{\Gamma}}
\newcommand{\initialStackSymbol}{\ensuremath{\stylePDA{\bot}}}
\newcommand{\stackSymbol}{\ensuremath{\stylePDA{\fontPDA{a}}}}
\newcommand{\PDAstate}{\ensuremath{\stylePDA{\fontPDA{q}}}}
\newcommand{\PDAstateinit}{\ensuremath{\stylePDA{\PDAstate_0}}}
\newcommand{\PDAStates}{\ensuremath{\stylePDA{\fontPDA{Q}}}}
\newcommand{\PDAStatesFinal}{\ensuremath{\stylePDA{\PDAStates_F}}}
\newcommand{\PDAEdges}{\ensuremath{\stylePDA{\fontPDA{E}}}}
\newcommand{\PDAedge}{\ensuremath{\stylePDA{\fontPDA{e}}}}
\newcommand{\AtomicProp}{\ensuremath{AP}}
\newcommand{\LabelFunc}{\ensuremath{\gamma}}
\newcommand{\droneActCharge}{\ensuremath{\styleact{\mathit{charge}}}}
\newcommand{\droneActCool}{\ensuremath{\styleact{\mathit{cool}}}}
\newcommand{\droneActExplode}{\ensuremath{\styleact{\mathit{explode}}}}
\newcommand{\droneActFlash}{\ensuremath{\styleact{\mathit{flash}}}}
\newcommand{\droneActFly}{\ensuremath{\styleact{\mathit{fly}}}}
\newcommand{\droneActLand}{\ensuremath{\styleact{\mathit{land}}}}
\newcommand{\droneActOff}{\ensuremath{\styleact{\mathit{off}}}}
\newcommand{\droneActRest}{\ensuremath{\styleact{\mathit{rest}}}}
\newcommand{\droneLocCharging}{\ensuremath{\styleloc{\text{charging}}}}
\newcommand{\droneLocCooling}{\ensuremath{\styleloc{\text{cooling}}}}
\newcommand{\droneLocExploded}{\ensuremath{\styleloc{\text{exploded}}}}
\newcommand{\droneLocFlying}{\ensuremath{\styleloc{\text{flying}}}}
\newcommand{\droneLocOutOfBattery}{\ensuremath{\styleloc{\text{out of battery}}}}
\newcommand{\droneLocFlashed}{\ensuremath{\styleloc{\text{flashed}}}}
\newcommand{\droneLocStandby}{\ensuremath{\styleloc{\text{standby}}}}
\newcommand{\droneLocSuccess}{\ensuremath{\styleloc{\text{mission success}}}}
\newcommand{\calM}{\ensuremath{\mathcal{M}}}
\newcommand{\Counter}{\ensuremath{\mathcal{C}}}
\newcommand{\counterGenericIndex}{\ensuremath{l}}
\newcommand{\cmshaltname}{\ensuremath{\textrm{halt}}} %
\newcommand{\cms}{\ensuremath{\mathtt{q}}} %
\newcommand{\cmshalt}{\ensuremath{\cms_{\cmshaltname}}} %
\newcommand{\cmspta}{\ensuremath{q}} %
\newcommand{\clockCMt}{\ensuremath{\styleclock{t}}}
\newcommand{\ComplexityFont}[1]{{\sffamily\upshape #1}}
\newcommand{\PSPACE}{\ComplexityFont{PSPACE}\xspace}
\newcommand{\coNP}{\ComplexityFont{co-NP}\xspace}
\newcommand{\EXPSPACE}{\ComplexityFont{EXPSPACE}\xspace}
\newcommand{\twoEXPSPACE}{\ComplexityFont{2EXPSPACE}\xspace}
\newcommand{\threeEXPSPACE}{\ComplexityFont{3EXPSPACE}\xspace}
\newcommand{\eg}{e.g.,\xspace}
\newcommand{\ie}{i.e.,\xspace}
\newcommand{\wrt}{w.r.t.\@}
\newcommand{\ourAbstract}{%
	Cyber-physical systems can be subject to information leakage; in the presence of continuous variables such as time and energy, these leaks can be subtle to detect.
	We study here the verification of opacity problems over systems with observation over both timing and energy information.
	We introduce guarded multi-energy timed automata as an extension of timed automata with multiple energy variables and guards over such variables.
	Despite undecidability of this general formalism, we establish positive results over a number of subclasses, notably when the attacker observes the final energy and/or the execution time, but also when they have access to the value of the energy variables every time unit.
}
\newcommand{\ourKeywords}{verification, security properties, opacity, timed systems}
	\author{}
	\date{}
\begin{document}

\title{Opacity problems in multi-energy timed automata\thanks}
\twocolumn[
  \begin{@twocolumnfalse}
	\maketitle

	\noindent{}\textbf{Étienne André\homepage{https://lipn.univ-paris13.fr/~andre/}%
		\orcidID{0000-0001-8473-9555}}
	\\
	{\em\small{}Nantes Université, CNRS, LS2N, Nantes, France}
	\\
	{\em\small{}Institut universitaire de France (IUF)}

	\smallskip

	\noindent{}\textbf{Lydia Bakiri\orcidID{0009-0003-8655-6455}}
	\\
	{\em\small{}Université Sorbonne Paris Nord, LIPN, CNRS UMR 7030, F-93430 Villetaneuse, France}
	\\
	{\em\small{}LIX, CNRS, École polytechnique, Institut Polytechnique de Paris, Palaiseau, France}

	\bigskip
	\bigskip

	{\color{gray}\hrule}

	\medskip

	\begin{abstract}
		\ourAbstract{}

		\medskip
		\noindent{}\textbf{Keywords:} \ourKeywords{}
	\end{abstract}

	\medskip

	{\color{gray}\hrule}

	\bigskip
	\bigskip

  \end{@twocolumnfalse}
]

\footnotetext{This is the author version (extended with all proofs) of the manuscript of the same name published in the proceedings of the 41st~ACM/SIGAPP Symposium on Applied Computing (SAC 2026).}

	\section{Introduction}\label{section:introduction}

	Modelling and verifying real-time systems require formal models that can accurately represent timing constraints.
	\emph{Timed automata} (TAs)~\cite{AD94} became a foundational model for specifying and analysing such systems.
	TAs~extend finite automata with real-valued clocks, enabling the representation of time-dependent behaviours.

	However, in addition to timing, many cyber-physical systems are also subject to quantitative constraints, particularly related to the consumption or accumulation of resources like energy, memory, or cost.
	To incorporate such quantitative considerations, various extensions of timed automata have been proposed.
	\emph{Priced timed automata} (also known as \emph{weighted} or \emph{cost timed automata}) augment the model with cost variables that accumulate according to rates and discrete updates~\cite{BFHLPRV01,BLR05}.
	More specifically, \emph{energy timed automata} model the dynamic consumption and replenishment of energy in systems (\eg{} \cite{BFLMS08,BBFLMR21,CHL24}).
	These extensions enable the specification and verification of energy-aware properties, such as ``the system can always recharge before the battery is depleted'' or ``the total energy cost never exceeds a budget''.

	\paragraph{Opacity}
The concept of \emph{opacity}~\cite{Mazare04,BKMR08,LLH18} pertains to the potential leakage of information from a system to an attacker. Specifically, it captures the attacker's ability to infer secret information based on publicly observable behaviours.
A system is considered opaque if an attacker with access to only a subset of the system's observable actions cannot determine whether a specific sequence of actions has taken place.
Time particularly influences the deductive capabilities of the attacker.
It has been shown in~\cite{GMR07} that it is possible for models that are opaque when timing constraints are omitted, to become non-opaque when those constraints are added to the model.
Franck~Cassez proposed in~\cite{Cassez09} a first definition of \emph{timed} opacity for TAs: the system is opaque when an attacker can never deduce whether some secret sequence of actions (possibly with timestamps) was performed, by only observing a given set of observable actions together with their timestamp.
It is then proved in~\cite{Cassez09} that it is undecidable whether a TA is opaque.
The aforementioned negative result leaves hope only if the definition or the setting is changed.
First, in~\cite{WZ18,WZA18}, the input model is simplified to \emph{real-time automata}.
\cite{LLHL22}~exhibits decidability results for constant-time labelled automata.
In~\cite{Zhang24}, Zhang studies decidability for labelled real-timed automata.

Second, in~\cite{AGWZH24,ADL24,KKG24}, decidability results are exhibited in the setting of Cassez' definition, but with restrictions in the model: one-clock automata, one-action automata, or over discrete time.
Third, in~\cite{AEYM21}, the authors consider a \emph{time-bounded} notion of the opacity of~\cite{Cassez09}, where the attacker has to disclose the secret before a deadline, using a partial observability.
A somehow similar framework is considered in~\cite{SLR23}, in which the attacker has a bounded memory, and a finite duration between distinct observations is required, in which case the problem is decidable and \PSPACE{}-complete.

Fourth, in~\cite{ALLMS23}, an alternative definition is proposed, by studying execution-time opacity: the attacker has only access to the \emph{execution time} of the system, as opposed to Cassez' partial observations where some events (with their timestamps) are observable.
The goal for the attacker is to deduce whether a special secret location was visited, by observing only the execution time.
In that case, most problems for TAs become decidable.
Control is studied in~\cite{ABLM22,ADLL24}.

Regarding non-interference for TAs, some decidability results are proved in~\cite{BDST02,BT03}, while control was considered in~\cite{BCLR15} and a parametric timed extension in~\cite{AK20}.
General security problems for TAs are surveyed in~\cite{AA23survey}.

\paragraph{Contributions}
Here, we extend execution-time opacity~\cite{ALLMS23} by adding the observation of energy.
Our attacker model is as follows: the attacker knows the input model, and observes only final energy and/or execution time; from this observation, they aim at deducing whether a special secret location was visited.
Our first contribution is to define a general framework for opacity in timed automata extended with multiple energy variables and energy constraints---a formalism called \emph{guarded multi-energy timed automaton} (guarded META).
In this general setting, energy can continuously increase or decrease, and be subject to discrete updates, thus encoding concepts such as battery or temperature; energy constraints can also be part of our model, encoding behaviours such as ``when the battery level decreases below some threshold, the system enters degraded mode''.
Our second and presumably main contribution is to study opacity (with an observation of either final energy, or final time and energy) in \emph{discrete} guarded METAs, in which energy is updated only via discrete updates (in an entirely continuous, real-time setting).
The general subclass is undecidable without surprise (as discrete updates can simulate a 2-counter machine); we therefore exhibit several decidable subclasses.
Our third contribution is to study a setting with a stronger attacker, able to observe the energy level at each time unit, and for which we also exhibit decidable subclasses.
We finally identify a first decidable subclass with continuous energy rates in addition to the discrete updates.
Our proofs rely on different techniques, notably by defining modified versions of the input META and/or of the region graph, and subsequently studying (untimed) regular or context-free language problems.
To the best of our knowledge, this work is the first to define and study opacity problems over TAs extended with multiple energy variables.

\paragraph{Outline}
\cref{section:preliminaries} introduces the necessary material.
\cref{section:problems} introduces opacity problems in guarded METAs.
\cref{section:generic} introduces common constructions for our subsequent proofs.
\cref{section:discrete} studies opacity in discrete guarded METAs.
\cref{section:DE} studies opacity with an observation of the energy level every time unit.
\cref{section:ISMETA} exhibits a preliminary decidable subclass for non-necessarily discrete models, \ie{} with multiple energy rates in addition to the discrete updates.
\cref{section:conclusion} concludes and highlights perspectives.
	\section{Preliminaries}\label{section:preliminaries}

	Let $\setN$,
	$\setZ$,
	$\setQplus$,
	$\setRgeqzero$,
	$\setR$
	denote the sets of non-negative integers, integers, non-negative rationals, non-negative reals, %
	and reals respectively.
	Let $\subR \subseteq \setR$.
	Given a finite variables set~$\Variables$, a $\subR$-valuation over~$\Variables$ is a function from~$\Variables$ to~$\subR$.

	Throughout this paper, we assume a set~$\Clock = \{ \clocki{1}, \dots, \clocki{\ClockCard} \} $ of \emph{clocks}, \ie{} non-negative real-valued variables that evolve at the same rate.
	A \emph{clock valuation} is an $\setRgeqzero$-valuation over~$\Clock$.
	
	We assume a set~$\Energies = \{ \enervar, \dots, \enervar_\EnergiesCard \} $ of \emph{energy variables}, \ie{} non-negative real-valued variables that evolve at potentially different rates.
	An \emph{energy valuation} is an $\setRgeqzero$-valuation over~$\Energies$.

	We write $\clockvalZero$ for the valuation such that for all $\clock \in \Clock$, $\clockvalZero(\clock) = 0$;
	similarly, $\enervalZero$ denotes the valuation s.t.\ for all $\enervar \in \Energies$, $\enervalZero(\enervar) = 0$.

	Given $\resets \subseteq \Clock$, we define the \emph{reset} of a clock valuation~$\clockval$, denoted by $\reset{\clockval}{\resets}$, as follows: $\reset{\clockval}{\resets}(\clock) = 0$ if $\clock \in \resets$, and $\reset{\clockval}{\resets}(\clock)=\clockval(\clock)$ otherwise.
	Given $d \in \setRgeqzero$, $\clockval + d$ denotes the clock valuation such that $(\clockval + d)(\clock) = \clockval(\clock) + d$, for all $\clock \in \Clock$.
	Similarly, given $d \in \setRgeqzero$, and a flow function $ \flow : \Energies \to \setZ$ assigning each variable with a rate (\ie{} the value of its derivative), we define the time elapsing function $\timelapseFlow$ as follows:
$\timelapseFlow(\enerval, \flow, d)$ is the valuation such that
\(\forall \enervar \in \Energies : \timelapseFlow(\enerval, \flow, d)(\variable) = \enerval(\enervar) + \flow(\enervar) \times d \).

	We denote by $\enerupdates$ an energy update function, which assigns to each $\enervar\in\Energies$ an offset in~$\setZ$.
	We define the \emph{update} of an energy valuation $\enerval$, denoted by $\reset{\enerval}{\enerupdates}$, as follows: $\reset{\enerval}{\enerupdates}(\enervar) = \enerval(\enervar) + \enerupdates (\enervar)$.

	In the following, we assume ${\bowtie} \in \{<, \leq, \geq, >\}$.
	A \emph{simple constraint}~$\guard$ is a constraint over $\Clock \cup \Energies$ defined by a conjunction of inequalities of the form $\variable \compop d$,
	with $\variable \in \Clock \cup \Energies$, %
	and~$d \in \setZ$.
	A simple clock constraint (resp.\ simple energy constraint) is a simple constraint over~$\Clock$ (resp.~$\Energies$).
	
	A clock valuation~$\clockval$ \emph{satisfies} a simple clock constraint~$\guard$, denoted by $\clockval \models \guard$, if replacing each $\clock \in \Clock$ with $\clockval(\clock)$ within~$\guard$ evaluates to true.
	We define similarly the satisfaction of simple energy constraints.
	Given a clock valuation~$\clockval$, an energy valuation~$\enerval$ and a simple constraint~$\guard$, we write $\wv{\clockval}{\enerval} \models \guard$, if replacing each $\clock \in \Clock$ with $\clockval(\clock)$ and each $\enervar \in \Energies$ with $\enerval(\enervar)$ within~$\guard$ evaluates to true.

	\subsection{Guarded multi-energy timed automata}
	We first define a very general formalism, extending TAs with multiple energy variables, and guards and invariants involving both energy and clock variables.
	Our formalism is seemingly more expressive than any formalism from the energy timed automata literature; we will restrict this formalism to various subclasses later.
	It can also be seen as a subclass of the highly undecidable formalism of \emph{hybrid automata}~\cite{ACHH92}.
	
	\begin{definition}\label{def:gMETA}
		A \emph{guarded multi-energy timed automaton (guarded META)} $\TA$ over a set~$\AtomicProp$ of atomic propositions is a tuple \mbox{$\TA = (\Actions, \LocSet, \locinit, \LocsPriv, \LocsFinal, \LabelFunc, \Clock, \Energies, \invariant, \rates, \Edges)$}, where:
		\begin{enumerate}%
			\item $\Actions$ is a finite set of actions,
			\item $\LocSet$ is a finite set of locations,
			 $\locinit \in \LocSet$ is the initial location,
			\item $\LocsPriv \subseteq \LocSet$ is the non-empty set of private locations,
			\item $\LocsFinal \subseteq \LocSet \setminus \LocsPriv$ is the non-empty set of final locations,
			\item $\LabelFunc$ is a label function $\LabelFunc : \LocSet \fleche{} 2^{\AtomicProp}$,
			\item $\Clock$ is a finite set of clocks,
			 $\Energies$ is a finite set of energy variables,
			\item $\invariant$ is the invariant, assigning to each $\loc \in \LocSet$ a simple constraint,
			\item $\rates$ is the energy rate function, assigning to every $\loc \in \LocSet$ a flow function assigning to each energy variable a rate in~$\setZ$,
			and
			\item $\Edges$ is a set of edges  $\edge = (\loc, \guard, \action, \resets, \enerupdates, \loc')$
			where
			$\loc,\loc'\in \LocSet$ are the source and target locations,
			$\guard$ (the ``guard'') is a simple constraint,
			$\action \in \Actions \cup \{ \silentaction \}$,
			$\resets\subseteq \Clock$ is a set of clocks to be reset to~$0$,
			and
			$\enerupdates$ is the energy update function.
		\end{enumerate}
	\end{definition}
	W.l.o.g., we assume that final locations do not have outgoing transitions.

	\begin{figure*}[tb]
			\centering
			\scalebox{.9}{
			\begin{tikzpicture}[PTA, node distance=1.25cm and 2.5cm]
				
				\node[location, initial, initial above] (standby) {\droneLocStandby{}};
				\node[location, left = of standby, align=center](cool) {\droneLocCooling{} \\ $\dot{\temperature} = -1$};
				
				\node[location, right = of standby, align=center] (fly) {\droneLocFlying{}\\$\dot{\temperature} = 2$\\$\dot{\battery} = -2$};
				\node[location, private, below = of fly] (shoot) {\droneLocFlashed{}};
				\node[locbad, below = of standby] (crash) {\droneLocExploded{}};
				\node[location, below = of cool, align=center] (charge) {\droneLocCharging{} \\ $\dot{\temperature} = 1$\\$\dot{\battery} = 2$};
				\node[location, final,  right = of fly] (end) {\droneLocSuccess{}};
				
				\node[locbad, above = of end] (oob) {\droneLocOutOfBattery{}};
				
				\node[invariantNorth] at (cool.north) {$\temperature \geq 0$};
				\node[invariantNorth] at (fly.north) {$\temperature \leq 100 \land \battery \geq 0$};
				\node[invariantWest, align=center, xshift=-1em] at (charge.west) {$\temperature \leq 100$ \\ $ \land \battery \leq 100$};
				\node[invariantSouth] at (shoot.south) {$\clock = 0$}; %

				\path (standby) edge[sloped] node[align=center] {\droneActCharge} (charge);
				\path (charge) edge node[left, align=center] {\droneActCool} (cool);
				\path (cool) edge[sloped] node[align=center]{$\temperature = 0$} node[below, align=center] {\droneActRest} (standby);
				
				\path (standby) edge node[above, align=center] {\droneActFly}(fly);
				\path (charge) edge[sloped] node[align=center] {\droneActFly} (fly);
				\path (cool) edge[bend left] node[above, align=center] {\droneActFly} (fly);
				
				\path (fly) edge[bend left] node[right, align=center, yshift=-0.5em] {$\battery \geq 5$\\ \droneActFlash \\ $\battery: -5$ ; $\temperature : +5$ \\ $\clock \assign 0$}(shoot);
				\path (shoot) edge[bend left] node[below left, align=center] {$\silentaction$} (fly);
				
				\path (charge) edge[sloped] node[below, align=center]{$\temperature \geq 100$ \\ \droneActExplode{}} (crash); %
				\path (fly) edge[sloped] node[below left, align=center]{$\temperature \geq 100$ \\ \droneActExplode{}} (crash); %
				\path (shoot) edge[bend left] node[below, align=center]{\droneActExplode{}}  node[above, align=center] {$\temperature \geq 100$} (crash);
				
				\path (fly) edge node[above, align=center] {$\battery > 0$} node[below, align=center] {\droneActLand} (end);
				
				\path (fly) edge[sloped]  node[align=center]{$\battery = 0$} node[below, align=center] {\droneActOff{}} (oob);
			\end{tikzpicture}
		}

		\caption{Example of a guarded-META: drone emitting light flashes}
		\label{figure:example-drone}
	\end{figure*}
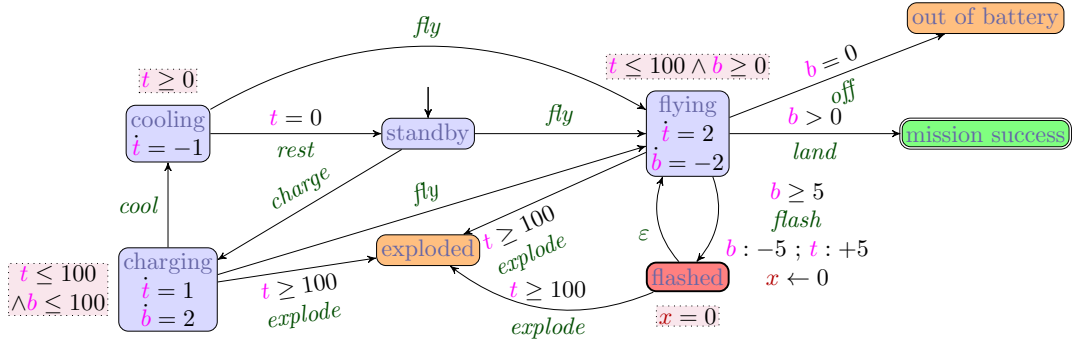
	
	\begin{example}
	\cref{figure:example-drone} models a fictitious drone that can charge and then fly (a single time).
	When it flies, it can emit light flashes (as in an entertainment drone show).
	The guarded META in \cref{figure:example-drone} has one clock~$\clock$ and two energy variables (initially~0): $\temperature$ denotes the temperature of the drone while $\battery$ denotes the battery percentage.
	Initially, the drone is on standby.
	When it starts charging (location ``\droneLocCharging''), both its battery and its temperature increase, as seen in the location rates (we use $\dot{\temperature} = 1$ for $\rates(\loc)(\temperature) = 1$); no rate depicted denotes a 0-rate.
	The drone can also cool down (location ``\droneLocCooling{}'').
	When it flies (location ``\droneLocFlying{}''), it consumes battery and the temperature rises continuously.
	Each time the drone emits a light flash, it consumes 5 battery units and gains 5 temperature units in 0-time, denoted by the transition from ``\droneLocFlying{}'' to ``\droneLocFlashed{}''.
	This latter urgent location (in which no time can elapse) serves to encode opacity: it is reachable iff the ``\droneActFlash{}'' action occurs.

	Energy guards are used to model that
	\begin{oneenumerate}%
		\item whenever the temperature goes higher than 100 degrees, the drone explodes (encoded by the transitions from most locations to ``\droneLocExploded{}''); and
		\item the battery level must remain between 0 and~100.
		If it reaches~0 during the drone's flight, the mission ends in location ``\droneLocOutOfBattery{}''.
	\end{oneenumerate}%
	In this model, we will be interested in deciding whether an attacker, by looking at time and energy (\eg{} at the end of the execution), can detect that the drone has emitted at least one light flash, \ie{} whether the ``\droneLocFlashed{}'' location was visited.
	\qed
	\end{example}

	Let us now define the concrete semantics of~guarded METAs; this is a straightforward extension of the semantics of timed automata~\cite{AD94} and multiple-resource energy timed automata (\eg{} \cite{CHL24}).

	\begin{definition}[Semantics of a guarded META]\label{definition:semanticsTA}
		Given a guarded META $\TA = (\Actions, \LocSet, \locinit, \LocsPriv, \LocsFinal, \LabelFunc, \Clock, \Energies, \invariant, \rates, \Edges)$,
		the \emph{concrete semantics} of $\TA$ is given by the timed transition system $(\Q, \qinit, \TransConcrete)$, with
		\begin{itemize}
			\item
			$\Q = \big\{ (\loc, \clockval, \enerval) \in \LocSet \times \setRgeqzero^{\cardinality{\Clock}} \times \setRgeqzero^{\cardinality{\Energies}} \mid \wv{\clockval}{\enerval} \models \invariant(\loc) \big\}$,
			\item $\qinit = (\locinit, \clockvalZero, \enervalZero) $,
			\item delay transitions:
			$\big((\loc, \clockval, \enerval), d, (\loc, \clockval+d, \timelapseFlow(\enerval, \rates(\loc), d)) \big) \in {\TransConcrete}$, with $d \in \setRgeqzero$, if $\forall d' \in [0, d], \big(\loc, \clockval+d', \timelapseFlow(\enerval, \rates(\loc), d') \big) \in \Q$;
			\item discrete transitions:
			$\big((\loc, \clockval, \enerval), \edge, (\loc',\clockval', \enerval') \big) \in {\TransConcrete}$ %
			if $(\loc, \clockval, \enerval) , (\loc',\clockval', \enerval') \in \Q$, there exists $\edge = (\loc, \guard, \action, \resets, \enerupdates, \loc') \in \Edges$, $\clockval'= \reset{\clockval}{\resets}$, $\enerval' = \reset{\enerval}{\enerupdates} $, and $\wv{\clockval}{\enerval} \models \guard$.
		\end{itemize}
	\end{definition}

	We assume that $\qinit \in \Q$, \ie{} $\wv{\clockvalZero}{\enervalZero} \models \invariant(\locinit)$, \ie{} the initial clock valuation satisfies the invariant of the initial location.
	
	We refer to the states of the concrete semantics of a guarded META as its \emph{concrete states}.
	As usual, given two concrete states $\concstate,\concstate'$, we write $\concstate\fleche{a}\concstate'$, for $a \in \setRgeqzero \cup \Edges$, instead of $(\concstate, a,\concstate') \in {\TransConcrete}$.
	We also further define relation $\TransConcreteEdge$ by $\big((\loc, \clockval, \enerval), (d, \edge), (\loc', \clockval', , \enerval')\big) \in \TransConcreteEdge$ if $\exists \clockval'' \in \setRgeqzero^{\cardinality{\Clock}},  \enerval'' \in \setRgeqzero^{\cardinality{\Energies}}, \edge \in \Edges, d \in \setRgeqzero:  (\loc, \clockval, \enerval) \fleche{d} (\loc,\clockval'', \enerval'') \fleche{\edge} (\loc', \clockval',  \enerval')$.
	A concrete run~$\varrun$ of a guarded META is an alternating sequence of concrete states of~$\Q$ and edges of the form
	$(\loc_0, \clockval_0, \enerval_0) \FlecheConcrete{(d_0, \edge_0)} (\loc_1, \clockval_1, \enerval_1) \FlecheConcrete {(d_1, \edge_1)} \cdots \FlecheConcrete{(d_{m-1}, \edge_{m-1})} (\loc_m, \clockval_m, \enerval_m)$,
	such that for all $i = 0, \dots, m-1$, $d_i \in \setRgeqzero$, $\edge_i \in \Edges$, and $\big((\loc_i, \clockval_i, \enerval_i) , (d_i, \edge_i) , (\loc_{i+1}, \clockval_{i+1}, \enerval_{i+1}) \big) \in \TransConcreteEdge$.
	We say that states $(\loc_i, \clockval_i, \enerval_i)$, for $i = 0, \dots, m$, \emph{belong} to~$\varrun$.
	The duration of such a run $\varrun$ is $\runduration{\varrun} = \sum_{0 \leq i \leq m-1} d_i$.

	The \emph{final energies} of~$\varrun$, denoted by $\finalEnergy(\varrun)$, is the energy valuation~$\enerval_m$.
	Given a concrete state~$\concstate = (\loc, \clockval, \enerval)$, we say that $\concstate$ is reachable (or that $\TA$ reaches~$\concstate$) if $\concstate$ belongs to a run of~$\TA$.
	By extension, a location~$\loc$ is reachable if there exists $\clockval$ and~$\enerval$ such that $(\loc, \clockval, \enerval)$ is reachable.
	We extend the notion of timed language of~TAs to guarded METAs in a straightforward manner, as the set of accepting timed word, \ie{} runs whose last state is a final location then projected onto actions and time durations, \ie{} keeping only the sequences $(\sum_{j=0}^{i}{d_j}, \action_i)$ for $0 \leq i \leq m-1$, where $\action_i$ is the action of edge~$\edge_i$, and removing $\silentaction$ transitions in an appropriate manner.
	Two guarded METAs are \emph{equivalent} if their timed language is the same.
	\begin{example}\label{example:run}
		\cref{figure:example-drone} accepts the following run, where $(\droneLocFlying{}, 15, [\vec{5,20}])$ is a shorthand for $(\droneLocStandby{}, \clockval, \enerval)$ such that $\clockval(\clock) = 15$, $\enerval(t) = 5$ and $\enerval(b) = 20$:
			$(\droneLocStandby{}, 0, [\vec{0,0}]) \FlecheConcrete{(0, \droneActCharge)} (\droneLocCharging, 0, [\vec{0,0}]) \FlecheConcrete{(10, \droneActCool)} (\droneLocCooling{}, 10, [\vec{10,20}]) \FlecheConcrete{(5, \droneActFly)} (\droneLocFlying{}, 15, [\vec{5,20}]) \FlecheConcrete{(2, \droneActFlash)} (\droneLocFlashed{} , 0, [\vec{14,11}]) \FlecheConcrete{(0, \silentaction)} (\droneLocFlying{}, 0, [\vec{14,11}]) \FlecheConcrete{(0, \droneActLand)} (\droneLocSuccess{}, 0, [\vec{14,11}])$.
		The duration of this run is $10 + 5 + 2 = 17$, and the final energy values are $14$ for the temperature and $11$ for the battery.
		The associated timed word is $(0, \droneActCharge) (10, \droneActCool) (15, \droneActFly) (17, \droneActFlash) (17, \droneActLand)$.
	\end{example}
	\subsubsection*{Subclasses}\label{sss:subclasses}
	A guarded META is \emph{positive} whenever $\forall \loc \in \LocSet, \forall \enervar \in \Energies : \rates(\loc)(\enervar) \in \setN$ and $\forall (\loc, \guard, \action, \resets, \enerupdates, \loc') \in \Edges,  \forall \enervar \in \Energies : \enerupdates(\enervar) \in \setN$.
	A guarded META is \emph{discrete} whenever $\forall (\loc, \guard, \action, \resets, \enerupdates, \loc') \in \Edges,  \forall \enervar \in \Energies : \enerupdates(\enervar) = 0$, \ie{} energy variables are only updated along discrete edges.
	A \emph{META}\footnote{%
			In~\cite{CHL24}, a formalism close to our METAs is called ``multi-variable energy~TA''.
		} is a guarded META without energy guards and invariants, \ie{} all guards and invariants are simple clock constraints.
	An \emph{energy timed automaton (ETA)}~\cite{CHL24} is a META with $|\Energies| = 1$.
	A \emph{timed automaton} (TA)~\cite{AD94} is a META such that $\Energies = \emptyset$.

	We assume familiarity with the region automaton of TAs~\cite{AD94} (see \cref{appendix:regions} for a formal definition).
	We denote by $\RegionAutomaton{\TA}$ the region automaton of a TA~$\TA$, which is a non-deterministic finite automaton~(NFA).
	Its (untimed) language is denoted by $\Language(\RegionAutomaton{\TA})$.

	\section{Defining opacity in guarded METAs}\label{section:problems}
	Given a guarded META~$\TA$ and a run~$\varrun$, we say that $\LocsPriv$ is \emph{visited on the way to a final location in~$\varrun$} when $\varrun$ is of the form~$(\loci{0}, \clockval_0, \enerval_0), (\paramd_0, \edgei{0}), (\loci{1}, \clockval_1, \enerval_1),  \ldots,  (\loci{m}, \clockval_m, \enerval_m), (\paramd_m, \edgei{m})$, $ \ldots,  (\loci{n}, \clockval_n, \enerval_n)$
	\noindent{}for some~$m,n \in \setN$ such that $\loci{m} \in \LocsPriv$ and $\loci{n} \in \LocsFinal$.
	We denote by $\PrivVisit{\TA}$ the set of those runs, and refer to them as \emph{private} runs.
	We denote by $\PrivDurVisit{\TA}$ the set of all the durations of these runs.
	We denote by $\PrivEnerVisit{\TA}$ the set of all the final energies of these runs.

	Conversely, we say that
	$\LocsPriv$ is \emph{avoided on the way to a final location in~$\varrun$}
	when $\varrun$ is of the form
	\((\loci{0}, \clockval_0, \enerval_0), (\paramd_0, \edgei{0}), (\loci{1}, \clockval_1, \enerval_1), \ldots, (\loci{n}, \clockval_n, \enerval_n )\)
	\noindent{}with $\loci{n} \in \LocsFinal$ and $\forall 0 \leq i \leq n, \loci{i} \notin \LocsPriv$.
	We denote the set of those runs by~$\PubVisit{\TA}$, referring to them as \emph{public} runs,
	by $\PubDurVisit{\TA}$ the set of all the durations of these public runs,
	and
	by $\PubEnerVisit{\TA}$ the set of all the final energies of these public runs.

	$\PrivDurVisit{\TA}$ and $\PubDurVisit{\TA}$ can be seen as the set of execution times from the initial location~$\locinit$ to a final location while
	visiting (resp.\ not visiting) the private locations~$\LocsPriv$.

	We extend to guarded METAs the notion of ET-opacity~\cite{ALLMS23,ADLL24}.

	\begin{definition}[ET-opacity]\label{def:full-weak-opacity}
		A guarded META~$\TA$ is \emph{fully ET-opaque} when $\PrivDurVisit{\TA} = \PubDurVisit{\TA}$.
		It is \emph{weakly ET-opaque} when $\PrivDurVisit{\TA} \subseteq \PubDurVisit{\TA}$.
		It is \emph{$\exists$-ET-opaque} when $\PrivDurVisit{\TA} \cap \PubDurVisit{\TA} \neq \emptyset$.
	\end{definition}

	That is, if for any run of duration~$\paramd$ reaching a final location after visiting~$\LocsPriv$, there exists another run of the same duration reaching a final location but not visiting~$\LocsPriv$, and vice versa, then the TA is fully ET-opaque.
	Weak ET-opacity requires the existence of a public run only if there exists a private run of same duration.
	Finally, when there exist at least one private and one public run of the same duration, the system is $\exists$-ET-opaque.

	We now define a new notion of opacity \wrt{} the final energy.

	\begin{definition}[EN-opacity]\label{def:full-energy-opacity}
		A guarded META~$\TA$ is fully opaque \wrt{} energy (\emph{fully EN-opaque}) when $\PrivEnerVisit{\TA} = \PubEnerVisit{\TA}$.
		It is weakly opaque \wrt{} energy (\emph{weakly EN-opaque}) when $\PrivEnerVisit{\TA} \subseteq \PubEnerVisit{\TA}$.
		It is $\exists$-opaque \wrt{} energy (\emph{$\exists$-EN-opaque}) when $\PrivEnerVisit{\TA} \cap \PubEnerVisit{\TA} \neq \emptyset$.
	\end{definition}

	We define similarly \emph{ET-EN-opacity} when the runs meet both execution time and energy conditions (see \cref{def:ET-EN-opacity} in \cref{appendix:definitions:ETEN} for a formal definition).

	\subsection{Problems}

	For each $\existsWeakFull \in \{ \exists, \text{weak}, \text{full} \}$ and each $\ENorETEN \in \{ \text{EN}, \text{ET-EN} \}$, we define:

	\defProblem{$\existsWeakFull$-$\ENorETEN$-opacity}
		{A guarded META~$\TA$}
		{Is $\TA$ $\existsWeakFull$-$\ENorETEN$-opaque?}
	\begin{example}\label{example:opacity}
		Let us check some opacity problems over the guarded META in \cref{figure:example-drone}.
		The private run in \cref{example:run} gives final energy values of $[\vec{14,11}]$.
		There exists a public run with these same final values:
		$(\droneLocStandby{}, 0, [\vec{0,0}]) \FlecheConcrete{(0, \droneActCharge)} (\droneLocCharging, 0, [\vec{0,0}]) \FlecheConcrete{(10, \droneActCool)} (\droneLocCooling{}, 10, [\vec{10,20}]) \FlecheConcrete{(5, \droneActFly{})} (\droneLocFlying{}, 15, [\vec{5,20}]) \FlecheConcrete{(4.5, \droneActLand{})} (\droneLocSuccess{}, 19.5, [\vec{14,11}])$.
		Therefore, the guarded META is $\exists$-EN-opaque.
		Actually, it is also weakly-EN-opaque: one can always wait 2.5 time units in the ``\droneLocFlying{}'' location instead of flashing, as both actions have the same energy consumption.
		However, we can easily check that it is not fully-ET-opaque, as the private runs require at least 2.5 time unit of charging.
	\end{example}
	\section{Generic ingredients and proofs}\label{section:generic}

	Given a guarded META~$\TA$, we define two further guarded METAs encoding all public runs (resp.\ all private runs) of~$\TA$.
	The \emph{public runs guarded META}~$\Apub{\TA}$ is obtained very easily from~$\TA$ by simply removing the private locations~$\LocsPriv$.
	Therefore, all accepting runs are public, as they do not visit~$\LocsPriv$.

	Inspired by techniques from~\cite{ADL24}, given a guarded META~$\TA$, the \emph{private runs guarded META}~$\Apriv{\TA}$ is obtained by duplicating all locations and transitions of~$\TA$: one copy $\TA^{V}$ corresponds to the paths that already visited a private location, while the other copy $\TA^{\bar{V}}$ corresponds to the paths that did not (this is a usual way to encode a Boolean, here ``$\LocsPriv$ was visited'', in the locations of a guarded~META).
	That is, for each location $\loc$ of~$\TA$, one location~$\loc^{V}$ is defined in~$\TA^{V}$,
	and one location~$\loc^{\bar{V}}$ is defined in~$\TA^{\bar{V}}$.
	Then, we redirect all transitions leading to a location~$\locpriv^{\bar{V}}$ to the copy $\locpriv^{V}$ from~$\TA^{V}$.
	The initial location is~$\locinit^{\bar{V}}$ and the final locations are $\locfinal^{V}$.
	Hence all runs need to go from $\TA^{\bar{V}}$ to $\TA^{V}$ before reaching a final location---which requires visiting a private location.
	See \cref{definition:Apub,definition:Apriv} in \cref{appendix:def:ApubApriv} for a formal definition.

	\begin{example}
		\cref{figure:example-METApubpriv} gives an example of a guarded META~$\TA$ with its public and private runs guarded METAs.
	\end{example}
		\begin{figure}[tb]
		\centering

			\begin{subfigure}[b]{\linewidth}
				\centering
				\scalebox{.9}{
				\begin{tikzpicture}[PTA, node distance=.75cm and 2cm]
					\node[location, initial] (l0) {$\locinit$};
					\node[location, right = of l0] (l1) {$\loc_1$};
					\node[location, private, below = of l1] (lp) {$\locpriv$};
					\node[location, final, right = of l1] (lF) {$\locfinal$};
					
					\node[invariantNorth] at (l0.north) {$\clock \leq 3$};
					\node[invariantNorth] at (lp.north) {$\enervar \leq 5$};
					\node[invariantNorth] at (l1.north) {$\clock \leq 2$};
					
					\path (l0) edge[bend right] node[ below left, align=center]{$\styleact{a}$ \\ $\styleenergy{\enervar:+1}$ \\ $\clock \assign 0$} (lp);
					
					\path (l0) edge node[above, align=center]{$\clock > 1$ \\ $\styleact{c}$} node[below, align=center]{$\clock \assign 0$} (l1);
					
					\path (l1) edge node[above, align=center]{$\clock > 1$ \\ $\styleact{d}$} node[below, align=center]{$\styleenergy{\enervar:+2}$} (lF);
					
					\path (lp) edge[loop below] node[right, align=center, xshift=.5em, yshift=-.5em]{$\clock \leq 1$ \\ $\styleact{b}$ \\ $\styleenergy{\enervar:+2}$ \\ $\clock \assign 0$} (lp);
					
					\path (lp) edge[bend right] node[below right, align=center]{$\clock > 1$ \\ $\styleact{b}$} (lF);
					
				\end{tikzpicture}
				}
				
				\caption{\TA}
				\label{figure:example-META-A}
			\end{subfigure}

			\begin{subfigure}[b]{\linewidth}
				\centering
				\scalebox{.9}{
				\begin{tikzpicture}[PTA, node distance=1cm and 1.5cm]
					\node[location, initial] (l0) {$\locinit$};
					\node[location, right = of l0] (l1) {$\loc_1$};
					\node[location, final, right = of l1] (lF) {$\locfinal$};
					
					\node[invariantNorth] at (l0.north) {$\clock \leq 3$};
					\node[invariantSouth] at (l1.south) {$\clock \leq 2$};

					\path (l0) edge node[above, align=center]{$\clock > 1$ \\ $\styleact{c}$} node[below, align=center]{$\clock \assign 0$} (l1);
					
					\path (l1) edge node[above, align=center]{$\clock > 1$ \\ $\styleact{d}$} node[below, align=center]{$\styleenergy{\enervar:+2}$} (lF);
					
				\end{tikzpicture}
				}
				
				\caption{$\Apub{\TA}$}
				\label{figure:example-META-Apub}
			\end{subfigure}
			\begin{subfigure}[b]{\linewidth}
				\centering
				\scalebox{.85}{
				\begin{tikzpicture}[PTA, node distance=.75cm and 2cm]
					\node[location, initial] (l0) {$\locinit^{\bar{V}}$};
					\node[location, right = of l0] (l1) {$\loc_1^{\bar{V}}$};
					\node[locunreachable, below = of l1] (lp) {$\locpriv^{\bar{V}}$};
					\node[location, right = of l1] (lF) {$\locfinal^{\bar{V}}$};

					\node[invariantNorth] at (l0.north) {$\clock \leq 3$};
					\node[invariantNorth] at (lp.north) {$\enervar \leq 5$};
					\node[invariantNorth] at (l1.north) {$\clock \leq 2$};

					\node[locunreachable, below of=l0, yshift=-8em] (l0') {$\locinit^{V}$};
					\node[locunreachable, right = of l0'] (l1') {$\loc_1^{V}$};
					\node[location, private, below = of l1'] (lp') {$\locpriv^{V}$};
					\node[location, final, right = of l1'] (lF') {$\locfinal^{V}$};
					\node[invariantNorth] at (l0'.north) {$\clock \leq 3$};
					\node[invariantNorth] at (lp'.north) {$\enervar \leq 5$};
					\node[invariantNorth] at (l1'.north) {$\clock \leq 2$};

					\path (l0) edge[bend right, thick, black!50!red] node[above left, xshift=-1em, yshift=2em, align=center]{$\styleact{a}$ \\ $\styleenergy{\enervar:+1}$ \\ $\clock \assign 0$} (lp');

					\path (l0) edge node[above, align=center]{$\clock > 1$ \\ $\styleact{c}$} node[below, align=center]{$\clock \assign 0$} (l1);

					\path (l1) edge node[above, align=center]{$\clock > 1$ \\ $\styleact{d}$} node[below, align=center]{$\styleenergy{\enervar:+2}$} (lF);

					\path (lp) edge[loop left] node[left, align=center]{$\clock \leq 1$ \\ $\styleact{b}$ \\ $\styleenergy{\enervar:+2}$ \\ $\clock \assign 0$} (lp);

					\path (lp) edge[bend right] node[below right, align=center]{$\clock > 1$ \\ $\styleact{b}$} (lF);

					\path (l0') edge[bend right] node[ below left, align=center]{$\styleact{a}$ \\ $\styleenergy{\enervar:+1}$ \\ $\clock \assign 0$} (lp');

					\path (l0') edge node[above, align=center]{$\clock > 1$ \\ $\styleact{c}$} node[below, align=center]{$\clock \assign 0$} (l1');

					\path (l1') edge node[above, align=center]{$\clock > 1$ \\ $\styleact{d}$} node[below, align=center]{$\styleenergy{\enervar:+2}$} (lF');

					\path (lp') edge[loop below] node[right, align=center, xshift=1em]{$\clock \leq 1$ \\ $\styleact{b}$ \\ $\styleenergy{\enervar:+2}$ \\ $\clock \assign 0$} (lp');

					\path (lp') edge[bend right] node[below right, align=center]{$\clock > 1$ \\ $\styleact{b}$} (lF');

					\end{tikzpicture}
					}

				\caption{$\Apriv{\TA}$}
				\label{figure:example-META-Apriv}
			\end{subfigure}

			\caption{Public and private runs guarded METAs}
			\label{figure:example-METApubpriv}
		
		\end{figure}
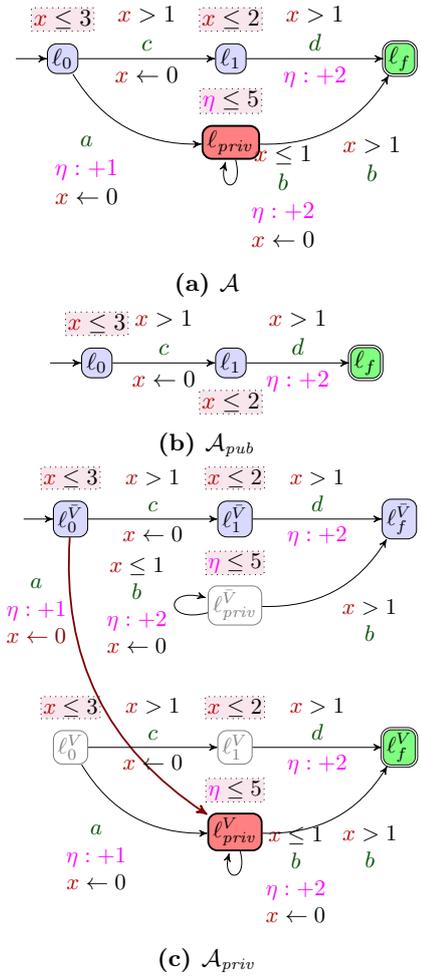

	In some cases, a positive guarded META can be transformed into an equivalent META (without energy guards).

	\begin{restatable}{proposition}{propRemovingGuardsDiscrete}\label{proposition:removing-guards-discrete}
		Given be a discrete positive guarded META~$\TA$,
		$\exists$ a discrete positive META exponentially larger than~$\TA$ equivalent to~$\TA$.
	\end{restatable}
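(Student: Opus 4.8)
The plan is to exploit the fact that, in a discrete positive guarded META, every energy variable can only move in one direction. Since the META is discrete all energy rates are $0$, so energy does not change during delays; since it is positive every discrete update lies in $\setN$; hence along any run each $\enervar \in \Energies$ is a non-decreasing, $\setN$-valued function of the position in the run, starting from $0$. For each $\enervar$, let $M_\enervar$ be the largest constant appearing in a simple energy atom $\enervar \bowtie d$ of $\TA$ (and $0$ if there is none; w.l.o.g.\ such constants are non-negative, since a negative threshold makes the atom trivially true or false as $\enervar \geq 0$). The crucial observation is that once $\enervar$ exceeds $M_\enervar$ along a run it stays above it forever, and no atom of $\TA$ can distinguish two values both $> M_\enervar$; so it suffices to remember each energy value \emph{capped} at $M_\enervar + 1$.

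Accordingly, I would build a META $\TA'$ whose locations are the pairs $(\loc, \vec c)$ with $\loc \in \LocSet$ and $\vec c$ assigning each $\enervar \in \Energies$ a value in $\{0, 1, \dots, M_\enervar + 1\}$ (the top value read as ``$> M_\enervar$'') such that $\vec c$ satisfies the energy part of $\invariant(\loc)$; such a location carries the clock part of $\invariant(\loc)$ as its invariant, keeps the label, privacy and finality of $\loc$, and the initial location is $(\locinit, \vec 0)$. For every edge $(\loc, \guard, \action, \resets, \enerupdates, \loc')$ of $\TA$ and every admissible $\vec c$ that satisfies the energy parts of both $\guard$ and $\invariant(\loc)$, I add an edge $\big((\loc, \vec c),\, \guard_{\mathrm{clk}},\, \action,\, \resets,\, \vec 0,\, (\loc', \vec c')\big)$ whenever $\vec c'$ satisfies the energy part of $\invariant(\loc')$, where $\guard_{\mathrm{clk}}$ is the clock part of $\guard$ and $\vec c'(\enervar) = \min\big(\vec c(\enervar) + \enerupdates(\enervar),\, M_\enervar + 1\big)$. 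All rates and updates of $\TA'$ are $0$, so $\TA'$ is discrete and positive; it has $\ComplexityO(|\LocSet| + |\Edges|) \cdot \prod_{\enervar \in \Energies}(M_\enervar + 2)$ locations and edges, which is exponential in the size of $\TA$ since the $M_\enervar$ are encoded in binary.

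Correctness rests on two elementary facts, for $u \geq 0$ and $d \leq M_\enervar$: (i) the capping identity $\min(t + u, M_\enervar+1) = \min(\min(t, M_\enervar+1) + u, M_\enervar+1)$; and (ii) a capped value $v \in \{0,\dots,M_\enervar+1\}$ satisfies the integer inequality $v \bowtie d$ iff every true value $t$ with $\min(t, M_\enervar+1) = v$ satisfies $t \bowtie d$ (checked case by case over $\bowtie \in \{<, \leq, \geq, >\}$, using $d \le M_\enervar < M_\enervar + 1$). Given a run of $\TA$, replacing each state $(\loc, \clockval, \enerval)$ by $\big((\loc, \vec c), \clockval\big)$ with $\vec c(\enervar) = \min(\enerval(\enervar), M_\enervar+1)$ yields a run of $\TA'$ with the very same actions and delays: (i) ensures the capped coordinate is updated correctly along edges, and (ii) ensures all energy guards and invariants remain satisfied (clock constraints are untouched). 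Conversely, given a run of $\TA'$, the vectors $\vec c$ seen along it are, by (i) and induction from $\vec c_0 = \vec 0$, exactly the cappings of the genuine accumulated energy values obtained by applying the original updates from $\enervalZero$; hence (ii) shows those genuine values satisfy all original energy guards and invariants, and we recover a bona fide run of $\TA$ with the same actions and delays. Since the correspondence preserves final locations, the sequence of actions, their cumulative timestamps and the $\silentaction$-transitions, $\TA$ and $\TA'$ have the same timed language.

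The step needing the most care is this last equivalence in the direction $\TA' \to \TA$: one must be sure that working only with capped energy does not let $\TA'$ accept timed words $\TA$ cannot realise. That is exactly what \emph{positivity} (updates never decrease a variable, so a value can only approach and cross its cap, never return below it) and \emph{discreteness} (no continuous drift, so values stay integral and only finitely many capped values exist, and moreover energy is constant during delays, so an energy invariant holding on entry to a location holds throughout every delay there) make true; dropping either hypothesis breaks the argument, in line with the undecidability of the unrestricted formalism.
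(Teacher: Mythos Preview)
Your proof is correct and follows essentially the same approach as the paper: encode each energy variable's value in the control state, capping at one above its maximal guard constant, and use monotonicity (positivity) plus discreteness to justify that the cap loses no information relevant to guard or invariant satisfaction. Your write-up is in fact more careful than the paper's sketch---you use per-variable bounds $M_\enervar$ rather than a single global $M$, and you spell out the capping identity and the guard-satisfaction equivalence that make both simulation directions go through.

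One difference worth flagging: you set all energy updates of $\TA'$ to $0$, whereas the paper's construction retains the original updates $\enerupdates$ on the copied edges. For the proposition as stated this is immaterial, since ``equivalent'' is defined via the timed language and energy values do not appear there. However, the paper later invokes this proposition to lift decidability of EN-opacity from METAs to \emph{guarded} METAs (\cref{theorem:fullweak-discrete-positive-guarded-ETAs}, \cref{theorem:ET-EN-fullweak-discrete-positive-guarded-METAs}), where final energies must be preserved, not just the timed language; your $\TA'$ always reaches $\locfinal$ with energy $\enervalZero$, so it would not serve that purpose as-is. The fix is trivial---keep the original $\enerupdates$ on each copied edge, exactly as in the paper's figure---and your correctness argument goes through unchanged, since the capped vector $\vec c$ already tracks enough information to evaluate guards and invariants regardless of what you write in the (now guard-free) update field.
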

	\begin{proof}[Proof sketch]
		From the fact that, in a discrete guarded META, the value of energy can only increase; so once it becomes $> \maxConstantEnergy$ (with $\maxConstantEnergy$ denoting the maximum constant in energy guards), its value is not relevant anymore.
		In an ETA, considering $\maxConstantEnergy+2$ copies of the automaton ($0$ to~$\maxConstantEnergy$ encoding the exact current energy valuation while the last one encodes $> \maxConstantEnergy$) correctly encodes the energy; in METAs, this construction leads to an exponential blowup.
		See \cref{appendix:proof:proposition:removing-guards-discrete}.
	\end{proof}
	\section{Opacity problems for discrete METAs}\label{section:discrete}

	In this section, we consider \emph{discrete} (guarded) METAs, hence in which the evolution of the energy is limited to discrete increments or decrements---while the model remains real-time.

	We first rule out guarded METAs in general, because a discrete guarded META with only 2~energies and 1~clock is sufficiently expressive to simulate a 2-counter machine.

	\begin{restatable}{proposition}{undecTwoEnergies}\label{proposition:undecidability-2-energies}
		For each $\existsWeakFull \in \{ \exists, \text{weak}, \text{full} \}$,
		and $\ENorETEN \in \{ \text{EN}, \text{ET-EN} \}$,
		$\existsWeakFull$-$\ENorETEN$-opacity for discrete guarded METAs with 2~energy variables and 1~clock is undecidable.
	\end{restatable}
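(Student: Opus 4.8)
The plan is to reduce from the halting problem of deterministic two-counter (Minsky) machines, which is undecidable, exploiting the fact---already pointed out informally before the statement---that in a \emph{discrete} guarded META the two energy variables behave exactly as two counters: they have zero rate everywhere and can change only through discrete updates. Given a machine~$M$ with counters $c_1, c_2$, encode $c_i$ by $\enervari{i}$: an increment is an edge with update $\enerupdates(\enervari{i}) = 1$; a decrement is an edge with guard $\enervari{i} \geq 1$ and update $\enerupdates(\enervari{i}) = -1$; and a zero-test jump is an edge with guard $\enervari{i} \leq 0$. Since a counter can never go below~$0$ (both the decrement guard and the semantics forbid this) and the zero-test guard forbids jumping while a counter is positive, the simulation is necessarily \emph{faithful}: the only run it admits follows $M$'s unique computation. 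The single clock~$\clockx$ serves solely to keep every run instantaneous: giving each location the invariant $\clockx \leq 0$ (so $\clockx$ stays at~$0$) and never resetting it prevents any delay. Consequently, on every instance built below durations carry no information, so ET-EN-opacity coincides with EN-opacity on these instances, and it suffices to treat the three EN variants.

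For $\exists$-EN-opacity and full-EN-opacity, I would build a guarded META~$\TA_1$ with a \emph{public part}---a single edge from~$\locinit$ straight to a final location, with no update, producing a single public run, whose final energies are~$\enervalZero$---and a \emph{private part}: an edge from~$\locinit$ to a private location~$\locpriv$, after which $\TA_1$ faithfully simulates~$M$ from the zero configuration; on reaching $M$'s halting instruction, two ``draining'' loops (guard $\enervari{i} \geq 1$, update $-1$) bring both energies down to~$0$---an exit guarded by $\enervari{i} \leq 0$ being enabled only once the corresponding counter has reached~$0$---and a final location is entered. Thus $\PrivEnerVisit{\TA_1} = \{\enervalZero\}$ if $M$ halts and $\PrivEnerVisit{\TA_1} = \emptyset$ otherwise, whereas the set of final energies of public runs is $\{\enervalZero\}$ in both cases; hence $\TA_1$ is $\exists$-EN-opaque iff it is fully EN-opaque iff $M$ halts.

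For weak-EN-opacity this construction is useless, because when $M$ does not halt $\PrivEnerVisit{\TA_1} = \emptyset$ is trivially contained in the public set, so $\TA_1$ is always weakly EN-opaque. I would therefore use a \emph{dual} reduction~$\TA_2$ in which \emph{non-halting} yields opacity: keep exactly the same private part (so $\PrivEnerVisit{\TA_2}$ equals $\{\enervalZero\}$ if $M$ halts and $\emptyset$ otherwise), but let the public part perform one increment of~$\enervari{1}$ before reaching a final location, so the public set of final energies is a singleton that does \emph{not} contain~$\enervalZero$. Then $\PrivEnerVisit{\TA_2}$ is contained in the public set iff $M$ does not halt, so $\TA_2$ is weakly EN-opaque---equivalently weakly ET-EN-opaque---iff $M$ does not halt; together with the previous paragraph, this makes all six problems undecidable.

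The main obstacle is precisely this mismatch between the three flavours: any template in which the non-halting case empties the private observation set satisfies weak opacity for free, whereas the $\exists$- and full-variants instead require the (unique) halting observation to \emph{belong to} the public set---the opposite of the separation ``private not contained in public'' that makes weak opacity informative---so two complementary reductions seem unavoidable. The remaining checks are routine: that the counter simulation cannot cheat (ensured by the $\geq 1$ and $\leq 0$ guards), that the draining phase ends in exactly~$\enervalZero$, and that the urgency invariant makes every accepting run have duration~$0$, so that each ET-EN problem inherits undecidability from its EN counterpart.
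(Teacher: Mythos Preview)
Your proposal is correct, but the route differs from the paper's in two structural ways.  First, the paper factors the argument: it establishes once that (constant-time) reachability is undecidable for discrete guarded METAs with two energies and one clock via exactly the 2CM encoding you describe, and then gives separate, \emph{generic} reductions from reachability to each opacity flavour.  Second---and this is where your meta-comment that ``two complementary reductions seem unavoidable'' is mistaken---the paper's EN reduction handles $\exists$, weak, and full simultaneously with a \emph{single} gadget: the private side can reach \emph{every} energy valuation in~$\setN^{\Energies}$ unconditionally (via increment loops on~$\locpriv$), while the public side reaches every valuation iff the target of the reachability instance is reachable (via increment \emph{and} decrement loops after that target); thus $\PrivEnerVisit{\TA'}=\setN^{\Energies}$ always, and $\PubEnerVisit{\TA'}$ is $\setN^{\Energies}$ or~$\emptyset$, so all three EN variants collapse to the same reachability question.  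Your approach, by contrast, pins both sides to singletons, which forces the case split between $\exists$/full and weak.  What your route buys is uniformity across ET-EN: because everything is instantaneous, ET adds no information, and you avoid the extra construction the paper uses for weak/full ET-EN (which goes through constant-time reachability with a time bookmark~$\maxConstantTReachability$).  Both arguments are sound; the paper's saturation trick is worth noting as the simpler way to cover all three EN flavours at once.
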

	\begin{proof}[Proof sketch]
		We first prove that (constant-time) reachability is undecidable in discrete guarded METAs with only 2~energies and a single clock (always~0) (\cref{lemma:undecidability-META-2-1} in \cref{appendix:lemma:undecidability-META-2-1}).
		We then prove that for each $\existsWeakFull \in \{ \exists, \text{weak}, \text{full} \}$, and $\ENorETEN \in \{ \text{EN}, \text{ET-EN} \}$, $\existsWeakFull$-$\ENorETEN$-opacity is harder than reachability
		(see \cref{appendix:hardness:opacity-reachability-META}).
	\end{proof}

	Thus, we restrict ourselves to a single energy or to only positive rate energies.

	\subsection{EN-opacity in discrete positive (guarded) METAs}

	Our first setting is whenever the attacker observes only the final energy, without hint on the actual execution time.

	\begin{figure*}[tb]
		\begin{subfigure}[b]{.45\columnwidth}
		\centering
			\scalebox{.8}{
			\begin{tikzpicture}[PTA, node distance=.8cm and 1.4cm]
				\node[location, initial] (l0) {$\locinit$};
				\node[location, private, below =of l0] (lpriv) {$\locpriv$};
				\node[location, final, right =of l0] (lF) {$\locfinal$};

				\node[invariantNorth] at (l0.north) {$\clock \leq 3$};
				\node[invariantSouth] at (lpriv.south) {$\clock \leq 3$};

				\path (l0) edge node[right, align=center]{$\styleact{a}$} (lpriv);
				\path (l0) edge node[above, align=center]{$\clock \geq 1$ \\ $\styleact{c}$} node[below, align=center]{$\styleenergy{\enervar_1 : +2}$} (lF);
				\path (lpriv) edge[loop left] node[above, align=center]{%
					$\styleact{b}$ \\ $\styleenergy{\enervar_1 : +1}$} (lpriv);
				\path (lpriv) edge[bend right] node[below right, align=center]{$\clock \geq 1$ \\ $\styleact{b}$} (lF);
			\end{tikzpicture}
			}

		\caption{A META~$\TA$}
		\label{figure:example-DiscPosGuarETA}
		\end{subfigure}
		\begin{subfigure}[b]{.45\columnwidth}
		\centering
			\scalebox{.8}{
			\begin{tikzpicture}[PTA, node distance=.8cm and 1.4cm]
				\node[location, initial] (l0) {$\locinit$};
				\node[location, private, below = of l0] (lpriv) {$\locpriv$};
				\node[location, urgent, right = of l0] (l0') {$\locinit'$};
				\node[location, final, right = of l0'] (lF) {$\locfinal$};

				\node[invariantNorth] at (l0.north) {$\clock \leq 3$};
				\node[invariantSouth] at (lpriv.south) {$\clock \leq 3$};

				\path (l0) edge node[right, align=center]{$\styleact{a}$} (lpriv);
				\path (l0) edge node[above, align=center]{$\clock \geq 1$ \\ $\styleact{c}$} node[below, align=center]{$\styleenergy{\enervar_1 : +1}$ \\ $\clockZeroTime \assign 0$} (l0');
				\path (l0') edge node[above, align=center]{$\clockZeroTime = 0$ \\ $\styleact{c}$} node[below, align=center]{$\styleenergy{\enervar_1 : +1}$} (lF);
				\path (lpriv) edge[loop left] node[above, align=center]{%
					$\styleact{b}$ \\ $\styleenergy{\enervar_1 : +1}$} (lpriv);
				\path (lpriv) edge[bend right] node[below right, align=center]{$\clock \geq 1$ \\ $\styleact{b}$} (lF);
			\end{tikzpicture}
			}

		\caption{Removing $+\updateConstant$ updates}
		\label{figure:example-DiscPosGuarETA-N}
		\end{subfigure}
		\begin{subfigure}[b]{.35\textwidth}
		\centering
			\scalebox{.8}{
			\begin{tikzpicture}[PTA, node distance=.8cm and 1.4cm]
				\node[location, initial] (l0) {$\locinit$};
				\node[location, private, below = of l0] (lpriv) {$\locpriv$};
				\node[location, urgent, right = of l0] (l0') {$\locinit'$};
				\node[location, final, right = of l0'] (lF) {$\locfinal$};

				\node[invariantNorth] at (l0.north) {$\clock \leq 3$};
				\node[invariantSouth] at (lpriv.south) {$\clock \leq 3$};

				\path (l0) edge node[right, align=center]{$\silentaction$} (lpriv);
				\path (l0) edge node[above, align=center]{$\clock \geq 1$ \\ $\styleact{\tickIncr_1}$} node[below, align=center]{%
					$\clockZeroTime \assign 0$} (l0');
				\path (l0') edge node[above, align=center]{$\clockZeroTime = 0$ \\ $\styleact{\tickIncr_1}$} %
				(lF);
				\path (lpriv) edge[loop left] node[above, align=center]{%
					$\styleact{\tickIncr_1}$ %
					} (lpriv);
				\path (lpriv) edge[bend right] node[below right, align=center]{$\clock \geq 1$ \\ $\silentaction$} (lF);
			\end{tikzpicture}
			}

		\caption{Adding energy ticks}
		\label{figure:example-DiscPosGuarETA-N-tic}
		\end{subfigure}
		\caption{Example and transformations in the proof of \cref{theorem:fullweak-discrete-positive-guarded-ETAs}}
	\end{figure*}
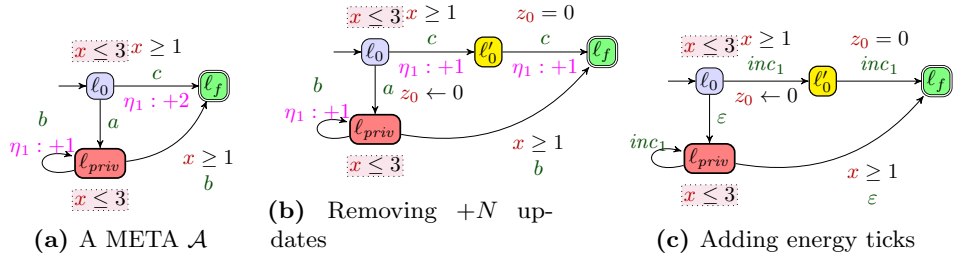
	\begin{theorem}\label{theorem:fullweak-discrete-positive-guarded-ETAs}
		For each $\existsWeakFull \in \{ \exists, \text{weak}, \text{full} \}$, $\existsWeakFull$-EN-opacity can be decided in \twoEXPSPACE{} for discrete positive METAs, and \threeEXPSPACE{} for discrete positive guarded METAs.
	\end{theorem}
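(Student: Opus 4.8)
The plan is to reduce the two sets $\PrivEnerVisit{\TA}$ and $\PubEnerVisit{\TA}$ that \cref{def:full-energy-opacity} asks us to compare to Parikh images of regular languages, which are effectively semilinear, and then to settle the required (non\nobreakdash-)disjointness, inclusion or equality between these semilinear sets using Presburger arithmetic. As a first move I would eliminate the guards: by \cref{proposition:removing-guards-discrete} a discrete positive guarded META is equivalent to a discrete positive (non-guarded) META that is exponentially larger, and that construction merely duplicates locations---which inherit their private and final status---while leaving the energy variables and their discrete updates untouched, so it preserves the final energy of every accepting run and hence both $\PrivEnerVisit{\TA}$ and $\PubEnerVisit{\TA}$. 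It is therefore enough to decide $\existsWeakFull$-EN-opacity for discrete positive METAs in \twoEXPSPACE{}; running that procedure after one exponential-size preprocessing then yields the \threeEXPSPACE{} bound for the guarded case.

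So fix a discrete positive META~$\TA$. Using the constructions of \cref{section:generic}, the accepting runs of $\Apub{\TA}$ (resp.\ $\Apriv{\TA}$) are exactly the public (resp.\ private) runs of~$\TA$, and since these constructions only delete or duplicate locations and keep every energy update, $\PubEnerVisit{\TA}$ (resp.\ $\PrivEnerVisit{\TA}$) is exactly the set of final energies of the accepting runs of $\Apub{\TA}$ (resp.\ $\Apriv{\TA}$). It thus remains to compute, for an arbitrary discrete positive META~$\TN$, the set of final-energy vectors of its accepting runs. Here discreteness and positivity are decisive: energy changes only along edges, and because all offsets are non-negative the implicit constraint ``energy $\geq 0$'' is never active, so the final value of each variable~$\enervari{j}$ along a run is simply the sum of the offsets $\enerupdates(\enervari{j})$ over the traversed edges, independently of time and of the clocks.

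To turn this into a clean statement I would apply the two transformations sketched in \cref{figure:example-DiscPosGuarETA-N,figure:example-DiscPosGuarETA-N-tic}: first replace every edge carrying a net update of total magnitude~$\updateConstant$ by a chain of $\updateConstant$ fresh \emph{urgent} intermediate locations (urgency enforced, as in the figure, by a fresh clock reset on entry and required to be zero on exit), each step performing a single $+1$ update to one energy variable in zero time; then relabel so that the only visible actions are per-variable ``tick'' actions ${\tickIncr}_j$, each marking one $+1$ update on~$\enervari{j}$, every other action becoming silent. Neither step alters the set of final-energy vectors of accepting runs, and afterwards, for every accepting run~$\varrun$, $\finalEnergy(\varrun)$ is precisely the vector counting the ${\tickIncr}_j$-labelled edges along~$\varrun$; that is, $\finalEnergy(\varrun)$ is the Parikh image, over the alphabet $\{{\tickIncr}_1,\dots,{\tickIncr}_{\EnergiesCard}\}$, of the untimed action word of~$\varrun$. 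Consequently the set of final energies we need equals the tick-restricted Parikh image of the regular language $\Language(\RegionAutomaton{\TN'})$ of the region automaton of the transformed META~$\TN'$, with accepting states those whose location is final. (One could alternatively skip the splitting and work with the Parikh image over the edge alphabet composed with the linear map sending each edge to its update vector, since a linear image of a semilinear set is semilinear.)

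By Parikh's theorem this Parikh image is an effectively computable semilinear set, so I obtain existential Presburger formulas $\phi_{\mathrm{priv}}$ and $\phi_{\mathrm{pub}}$ defining $\PrivEnerVisit{\TA}$ and $\PubEnerVisit{\TA}$, and $\existsWeakFull$-EN-opacity reduces to deciding, respectively, satisfiability of $\phi_{\mathrm{priv}}\wedge\phi_{\mathrm{pub}}$ (for $\existsWeakFull=\exists$), validity of $\forall\vec v\,(\phi_{\mathrm{priv}}\Rightarrow\phi_{\mathrm{pub}})$ (for $\existsWeakFull=\text{weak}$), or the conjunction of both inclusions (for $\existsWeakFull=\text{full}$)---each a Presburger sentence of bounded quantifier alternation, hence decidable. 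The step I expect to be the main obstacle is the complexity accounting: one must bound the size of the semilinear (Presburger) representation extracted from a region automaton that is already singly exponential---and whose location count gains a further exponential from the $+\updateConstant$-splitting---and then bound the cost of the bounded-alternation Presburger inclusion/equality test on a formula of that size, so as to land exactly on \twoEXPSPACE{} (and, after the guard-removal preprocessing, on \threeEXPSPACE{}); a secondary point requiring care is to check that each of the syntactic transformations above genuinely preserves both final-energy sets. Full details are deferred to the appendix.
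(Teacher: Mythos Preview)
Your proposal is correct and follows essentially the same approach as the paper: split $+\updateConstant$ updates into unit increments in zero time, relabel these as per-variable tick actions, pass to the region automata of $\Apub{}$ and $\Apriv{}$, and compare the Parikh images of the resulting regular languages. The only cosmetic differences are the order in which you apply the $\Apub{}/\Apriv{}$ construction versus the tick-splitting (immaterial), and that the paper invokes the specific complexity bounds of~\cite{KT10} for intersection and inclusion of Parikh images of NFAs rather than going through general Presburger arithmetic, which makes the \twoEXPSPACE{} accounting slightly more direct.
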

	\begin{proof}
		Let us first consider discrete positive METAs without energy guards.
		We first transform the META into an equivalent model with only energy updates such that $\enerupdates(\loc)(\enervari{i}) = 1$ for each energy $\enervari{i} \in \Energies$:
		this can be achieved by duplicating any edge such that $\enerupdates(\loc)(\enervari{i}) = \updateConstant$ (with $\updateConstant > 1$) into $\updateConstant$ consecutive edges in 0-time, thanks to a new clock~$\clockZeroTime$.
		Then, we relabel actions as follows: any edge such that $\enerupdates(\loc)(\enervari{i}) = 1$ is labelled by action~$\tickIncr_i$, while any other edge is labelled with~$\silentaction$.
		We thus obtain a $|\Energies|$-action non-deterministic TA~$\TA'$.
		See \cref{figure:example-DiscPosGuarETA,figure:example-DiscPosGuarETA-N,figure:example-DiscPosGuarETA-N-tic} for an example.

		We then build $\Apub{\TA'}$ and $\Apriv{\TA'}$.
		Note that, up to action relabelling and edge duplication, all runs of $\Apub{\TA'}$ (resp.\ $\Apriv{\TA'}$) correspond to the public (resp.\ private) runs of~$\TA$, and the number of $\tickIncr$ actions along a run directly encodes the amount of energy spent.

		Our third step is to build the region automaton of these~TAs, \ie{} $\RegionAutomaton{\Apub{\TA'}}$ and $\RegionAutomaton{\Apriv{\TA'}}$.
		These two region automata are non-deterministic finite automata over $\{ \tickIncr_i, 1 \leq i \leq |\Energies| \}$.

		For example, given the private run~$\varrun_1 = (\locinit, 0, 0) \FlecheConcrete{(0.8, \styleact{a})} (\locpriv, 0.8, 0) \FlecheConcrete{(0.3, \styleact{b})} (\locpriv, 1.1, 1) \FlecheConcrete{(0.5, \styleact{b})} (\locpriv, 1.6, 2) \FlecheConcrete{(0.7, \styleact{b})} (\locfinal, 2.3, 2) $
		of~$\TA$ in \cref{figure:example-DiscPosGuarETA},
		its counterpart in the TA in \cref{figure:example-DiscPosGuarETA-N-tic} is the run
		$(\locinit, 0) \FlecheConcrete{(0.8, \silentaction)} (\locpriv, 0.8) \FlecheConcrete{(0.3, \styleact{\tickIncr_1})} (\locpriv, 1.1) \FlecheConcrete{(0.5, \styleact{\tickIncr_1})} (\locpriv, 1.6) \FlecheConcrete{(0.7, \silentaction)} (\locfinal, 2.3) $.
		This run gives in $\RegionAutomaton{\Apriv{\TA'}}$ the untimed word $\styleact{\tickIncr_1} \styleact{\tickIncr_1}$, correctly encoding the final value of~$\enervari{1}$.
		The public run $\varrun_2 = (\locinit, 0, 0) \FlecheConcrete{(2.3, \styleact{c})} (\locfinal, 2.3, 2) $ also gives $\styleact{\tickIncr_1} \styleact{\tickIncr_1}$.

		Verifying EN-opacity amounts to comparing $\Language(\RegionAutomaton{\Apriv{\TA'}})$ and $\Language(\RegionAutomaton{\Apub{\TA'}})$, by only focusing on the \emph{number} of symbols $\tickIncr_i$ in each word, disregarding their \emph{order}.
		To do so, we use the Parikh image~\cite{Parikh66} of these languages (which can be represented using a semilinear set).

		Verifying $\exists$-EN-opacity amounts to checking the emptiness of the intersection of these Parikh images.
		Checking intersection of the Parikh image of two regular languages is polynomial~\cite{KT10}.
		As the region automaton is exponential in the size of~$\Apub{\TA'}$ and~$\Apriv{\TA'}$, themselves linear in~$\TA$, this gives a \twoEXPSPACE{} procedure, since the subset construction of the region automata (required by the determinisation) adds another exponential.

		Verifying weak-EN-opacity amounts to checking the inclusion of the Parikh images of $\Language(\RegionAutomaton{\Apriv{\TA'}})$ and $\Language(\RegionAutomaton{\Apub{\TA'}})$.
		This is \coNP{}-complete~\cite{KT10}, again yielding a \twoEXPSPACE{} procedure.
		A similar reasoning holds for full-EN-opacity.

		Finally, the case of discrete positive \emph{guarded} METAs follows from \cref{proposition:removing-guards-discrete}, adding another exponential.
	\end{proof}
	\begin{example}
		Let $\TA$ be the discrete META given in \cref{figure:example-DiscPosGuarETA-N-tic}.
		After transforming the updates into $\tickIncr_i$ actions and constructing $\Apriv{\TA'}$ and~$\Apub{\TA'}$, we construct their region automata, see \cref{figure:example-ngMETA-Apub,figure:example-ngMETA-Apriv} (we omit some states and transitions due to space constraints).
		As the maximum constant for $\clock$ is~$3$, we have 8 clock regions:
		$\regioni{\{0\}}$, $\regioni{(0,1)}$, \dots, $\regioni{\{3\}}$, $\regioni{(3,\infty)}$, where $\region_i$ denotes the clock region such that $\clock \in i$.
		We can then test the emptiness of the intersection of the Parikh images of these two NFAs.
		In our example, we can directly see that the Parikh image of $\Language(\RegionAutomaton{\Apub{\TA'}})$ is $\{ 2 \}$ while that of $\Language(\RegionAutomaton{\Apriv{\TA'}})$ is~$\setN$; this corresponds indeed to $\PubEnerVisit{\TA}$ and~$\PrivEnerVisit{\TA}$ respectively.
		Therefore, $\TA$ is $\exists$-EN-opaque.
		However, it is neither weakly nor fully-EN-opaque, as we do not have inclusion.
	\end{example}

	\begin{figure}[tb]

			\begin{subfigure}[b]{\linewidth}
			\centering
			\scalebox{.8}{
			\begin{tikzpicture}[NFA, node distance=.3cm and .9cm]
				\node[state, initial] (l0s0') {$\locinit,\regioni{\{0\}}$};

				\node[state, right = of l0s0', xshift=-2em] (l0s1') {$\locinit,\regioni{(0,1)}$};

				\node[state, right = of l0s1', xshift=-2em] (l0s2') {$\locinit,\regioni{\{1\}}$};
				\node[state, right = of l0s2'] (l1s2') {$\loc_1,\regioni{\{1\}}$};
				\node[state, final, right = of l1s2'] (lFs2') {$\locfinal,\regioni{\{1\}}$};
				
				\node[state, below = of l0s2', yshift=-1.5em] (l0s6') {$\locinit,\regioni{\{3\}}$};
				\node[state, right = of l0s6'] (l1s6') {$\loc_1,\regioni{\{3\}}$};
				\node[state, final, right = of l1s6'] (lFs6') {$\locfinal,\regioni{\{3\}}$};

				\node[state, final, right = of lFs6', xshift=-2em] (lFs3') {$\locfinal,\regioni{(3,\infty)}$};

				\path (l0s2') edge node[above, align=center]{$\styleact{\tickIncr_1}$} (l1s2');
				\path (l1s2') edge node[above, align=center]{$\styleact{\tickIncr_1}$} (lFs2');
				
				\path (l0s6') edge node[above, align=center]{$\styleact{\tickIncr_1}$} (l1s6');
				\path (l1s6') edge node[above, align=center]{$\styleact{\tickIncr_1}$} (lFs6');

				\draw[-,line width=1pt, dotted, dash pattern=on 0pt off 4pt, line cap=round] ([yshift=-5pt]l1s2'.south) -- ++(0,-0.5);
				
				\path (l0s0') edge node[above, align=center]{$\silentaction$} (l0s1');
				\path (l0s1') edge node[above, align=center]{$\silentaction$} (l0s2');
				\path ([yshift=-1.75em]l0s2'.south) edge node[left, align=center]{$\silentaction$} (l0s6');
				\path (l0s2') edge node[left, align=center]{$\silentaction$} ([yshift=1.75em]l0s6'.north);

				\path ([yshift=-1.75em]lFs2'.south) edge node[right, align=center]{$\silentaction$} (lFs6');
				\path (lFs2') edge node[right, align=center]{$\silentaction$} ([yshift=1.75em]lFs6'.north);
				\path (lFs6') edge node[above, align=center]{$\silentaction$} (lFs3');

			\end{tikzpicture}
			}
			
			\caption{$\RegionAutomaton{\Apub{\TA'}}$}
			\label{figure:example-ngMETA-Apub}
		\end{subfigure}
		\begin{subfigure}[b]{\linewidth}
			\centering
			\scalebox{.8}{
			\begin{tikzpicture}[NFA, node distance=.4cm and .9cm]
				\node[state, initial] (l0s0) {$\locinit^{\bar{V}},\regioni{\{0\}}$};
				\node[state, private, below = of l0s0] (lps0) {$\locpriv^V,\regioni{\{0\}}$};

				\node[state,  right = of l0s0, xshift=-1em] (l0s1) {$\locinit^{\bar{V}},\regioni{(0,1)}$};
				\node[state, private, below = of l0s1] (lps1) {$\locpriv^V,\regioni{(0,1)}$};

				\node[state, right = of l0s1, xshift=0em] (l0s2) {$\locinit^{\bar{V}},\regioni{\{1\}}$};
				\node[state, right = of l0s2] (l1s2) {$\loc_1^{\bar{V}},\regioni{\{1\}}$};
				\node[state, private, below = of l0s2] (lps2) {$\locpriv^V,\regioni{\{1\}}$};
				\node[state, right = of l1s2] (lFs2) {$\locfinal^{\bar{V}},\regioni{\{1\}}$};
				\node[state, final, below = of lFs2] (lFps2) {$\locfinal^V,\regioni{\{1\}}$};
				
				\node[state, below = of l0s2, yshift=-3em] (l0s3) {$\locinit^{\bar{V}},\regioni{(1,2)}$};
				\node[state, below = of l1s2, yshift=-3em] (l1s3) {$\loc_1^{\bar{V}},\regioni{(1,2)}$};
				\node[state, private, below = of l0s3] (lps3) {$\locpriv^V,\regioni{(1,2)}$};
				\node[state, below = of lFps2, yshift=0.4em] (lFs3) {$\locfinal^{\bar{V}},\regioni{(1,2)}$};
				\node[state, final, below = of lFs3] (lFps3) {$\locfinal^V,\regioni{(1,2)}$};
				
				\node[state, below = of lps3, yshift=-1.5em] (l0s6) {$\locinit^{\bar{V}},\regioni{\{3\}}$};
				\node[state, right = of l0s6] (l1s6) {$\loc_1^{\bar{V}},\regioni{\{3\}}$};
				\node[state, private, below = of l0s6, align=center] (lps6) {$\locpriv^V,\regioni{\{3\}}$};
				\node[state, right = of l1s6] (lFs6) {$\locfinal^{\bar{V}},\regioni{\{3\}}$};
				\node[state, final, below = of lFs6] (lFps6) {$\locfinal^V,\regioni{\{3\}}$};

				\path (l0s0) edge node[right, align=center]{$\silentaction$} (lps0);
				\path (lps0) edge[loop below] node[left, align=center]{ $\styleact{\tickIncr_1}$ } (lps0);

				\path (l0s1) edge node[right, align=center]{$\silentaction$} (lps1);
				\path (lps1) edge[loop below] node[left, align=center]{ $\styleact{\tickIncr_1}$ } (lps1);

				\path (l0s2) edge node[right, align=center]{$\silentaction$} (lps2);
				\path (l0s2) edge node[above, align=center]{$\styleact{\tickIncr_1}$} (l1s2);
				\path (l1s2) edge node[above, align=center]{$\styleact{\tickIncr_1}$} (lFs2);
				\path (lps2) edge[loop left, looseness=3] node[below, yshift = -0.5em, align=center]{ $\styleact{\tickIncr_1}$ } (lps2);
				\path (lps2) edge node[align=center]{ $\silentaction$} (lFps2);

				\path (l0s3) edge node[right, align=center]{$\silentaction$} (lps3);
				\path (l0s3) edge node[above, align=center]{$\styleact{\tickIncr_1}$} (l1s3);
				\path (lps3) edge[loop left, looseness=3] node[left, align=center]{ $\styleact{\tickIncr_1}$ } (lps3);
				\path (lps3) edge node[align=center]{ $\silentaction$} (lFps3);
				\path (l1s3) edge node[above, align=center]{$\styleact{\tickIncr_1}$} (lFs3);
				
				\path (l0s6) edge node[right, align=center]{$\silentaction$} (lps6);
				\path (l0s6) edge node[above, align=center]{$\styleact{\tickIncr_1}$} (l1s6);
				\path (lps6) edge[loop left, looseness=3] node[left, align=center]{ $\styleact{\tickIncr_1}$ } (lps6);
				\path (lps6) edge node[align=center]{ $\silentaction$} (lFps6);
				\path (l1s6) edge node[above, align=center]{$\styleact{\tickIncr_1}$} (lFs6);

				\path (l0s0) edge node[above, align=center]{$\silentaction$} (l0s1);
				\path (l0s1) edge node[above, align=center]{$\silentaction$} (l0s2);
				\path (l0s2) edge[bend left=55] node[below right, align=center]{$\silentaction$} (l0s3);

				\path (lFs2) edge[bend left=55] node[right, align=center]{$\silentaction$} (lFs3);
				\path (lFps2) edge[bend left=55] node[right, align=center]{$\silentaction$} (lFps3);

				\path (lps0) edge node[above, align=center]{$\silentaction$} (lps1);
				\path (lps1) edge[bend left=45] node[below, align=center]{$\silentaction$} (lps2);
				\path (lps2) edge[bend right=55]  node[left, align=center]{$\silentaction$}(lps3);
				
				\draw[-,line width=1pt, dotted, dash pattern=on 0pt off 4pt, line cap=round] ([yshift=2.5em]l1s6.north) -- ++(0,-0.5);
			\end{tikzpicture}
			}
			
			\caption{$\RegionAutomaton{\Apriv{\TA'}}$}
			\label{figure:example-ngMETA-Apriv}
		\end{subfigure}

		\caption{Simplified region graphs of $\Apub{\TA}$ and $\Apriv{\TA}$ of \cref{figure:example-DiscPosGuarETA-N-tic}}
	\end{figure}
	\subsection{ET-EN-opacity in discrete positive (guarded) METAs}
	\begin{theorem}\label{theorem:ET-EN-fullweak-discrete-positive-guarded-METAs}
		For each $\existsWeakFull \in \{ \exists, \text{weak}, \text{full} \}$, $\existsWeakFull$-ET-EN-opacity can be decided in \twoEXPSPACE{} for discrete positive METAs, and \threeEXPSPACE{} for discrete positive guarded METAs.
	\end{theorem}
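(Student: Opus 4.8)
The plan is to combine the energy-to-actions reduction used in the proof of \cref{theorem:fullweak-discrete-positive-guarded-ETAs} with the standard timing construction underlying ET-opacity for timed automata~\cite{ALLMS23}. I start, exactly as there, from a discrete positive META \emph{without} energy guards. First I normalise every energy update to~$1$ by splitting any edge with update $\updateConstant>1$ into $\updateConstant$ consecutive $0$-time edges via a fresh clock, and relabel each resulting unit-update edge on~$\enervari{i}$ by a fresh action~$\tickIncr_i$, all remaining edges becoming~$\silentaction$; along any run the number of $\tickIncr_i$ occurrences is then exactly the final value of~$\enervari{i}$. Second I add a fresh clock~$\clockTickOne$ together with fresh actions $\tickTime$ and $\tickLastTimeRational$, with the usual self-loop gadgets, so that along any run the number of $\tickTime$ occurrences equals $\intpart{\runduration{\varrun}}$ and $\tickLastTimeRational$ occurs (at most once, at the very end) iff $\runduration{\varrun}\notin\setN$. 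This yields a non-deterministic TA~$\TA''$, linear in~$\TA$, over the alphabet $\{\tickIncr_i \mid 1\le i\le|\Energies|\}\cup\{\tickTime,\tickLastTimeRational\}$.

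Next I build $\Apub{\TA''}$ and $\Apriv{\TA''}$ and their region automata $\RegionAutomaton{\Apub{\TA''}}$ and $\RegionAutomaton{\Apriv{\TA''}}$, which are NFAs over that alphabet, and I consider their Parikh images, represented by semilinear sets. An accepted word~$w$ of $\RegionAutomaton{\Apriv{\TA''}}$ contributes the tuple $\bigl(\#_{\tickTime}(w),\#_{\tickLastTimeRational}(w),\#_{\tickIncr_1}(w),\dots\bigr)\in\setN\times\{0,1\}\times\setN^{|\Energies|}$, and the crucial observation is that the whole Parikh image determines the set $\{(\runduration{\varrun},\finalEnergy(\varrun))\mid \varrun\text{ private}\}$: along a fixed discrete path of a TA the set of achievable durations is a union of intervals with integer or infinite endpoints while the final energy is constant, so a private run of non-integer duration $d\in(n,n+1)$ and final energy~$\vec v$ witnesses that \emph{all} of $(n,n+1)\times\{\vec v\}$ is reached by private runs. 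Hence $(n,1,\vec v)$ lies in the Parikh image iff $(n,n+1)\times\{\vec v\}$ is reached by private runs, and $(n,0,\vec v)$ lies in it iff $(n,\vec v)$ is, and symmetrically for public runs. Consequently $\TA$ is fully ET-EN-opaque iff the two Parikh images coincide, weakly ET-EN-opaque iff $\mathrm{Parikh}(\Language(\RegionAutomaton{\Apriv{\TA''}}))\subseteq\mathrm{Parikh}(\Language(\RegionAutomaton{\Apub{\TA''}}))$, and $\exists$-ET-EN-opaque iff these two Parikh images intersect.

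The complexity analysis is then identical to that of \cref{theorem:fullweak-discrete-positive-guarded-ETAs}: the region automata are exponential in~$\TA''$, itself linear in~$\TA$; intersection-emptiness of the Parikh images of two regular languages is polynomial and inclusion is \coNP{}-complete~\cite{KT10}; and the subset construction required for determinisation adds another exponential, yielding \twoEXPSPACE{}. The \emph{guarded} case follows by first applying \cref{proposition:removing-guards-discrete}, which costs one further exponential and gives \threeEXPSPACE{}. I expect the main obstacle to be the last step of the second paragraph, namely checking that comparing the \emph{integer} Parikh counts is sound and complete for the genuinely continuous sets $\{(\runduration{\varrun},\finalEnergy(\varrun))\}$; this hinges on the integer-endpoint interval structure of duration sets along a fixed discrete path, together with the fact that along such a path the final energy is fixed, so the timing and energy components never interfere.
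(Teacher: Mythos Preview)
Your proposal is correct and follows essentially the same approach as the paper: normalise updates to unit increments, add a ticking clock emitting~$\tickTime$ every time unit plus a single trailing~$\tickLastTimeRational$ to flag non-integer durations, build $\Apub{\TA''}$ and~$\Apriv{\TA''}$, and compare the Parikh images of their region automata, invoking \cref{proposition:removing-guards-discrete} for the guarded case. If anything, you are more explicit than the paper about the soundness of the Parikh comparison (the integer-endpoint interval argument you spell out is exactly what the paper only gestures at via the reference to~\cite{AAL24}).
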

	\begin{proof}
		Given~$\TA$, we split increments of the form ``$+ \updateConstant$'' with $\updateConstant > 1$ into $\updateConstant$ consecutive $+1$ increments in 0-time following the construction of~$\TA'$ in the proof of \cref{theorem:fullweak-discrete-positive-guarded-ETAs}.
		Then, we modify~$\TA'$:
		\begin{enumerate}
			\item we add a global ``ticking clock'' $\clockTickOne$ reset every time unit via action~$\tickTime$ on any location, and a global invariant $\clockTickOne \leq 1$ (not depicted in our figures);
			\item we add to any guard the additional clock constraint $0 < \clockTickOne \leq 1$, enforcing that action $\tickTime$ occurs after the updates performed in 0-time on the same time unit;\footnote{%
				For sake of readability, we do not consider in our transformation the case of updates at time~0, made impossible by the guard $0 < \clockTickOne$; a simple fix is to start in a copy of the automaton with a global invariant $\clockTickOne = 0$ to allow for updates at time~0 and, as soon as time elapses, moving to the ``normal'' automaton with the guards $0 < \clockTickOne \leq 1$.
			}
			\item for each final location $\locfinal$ (made urgent), we add a new location~$\locfinal'$ (the only ones to be final) reachable from $\locfinal$ via action~$\tickTime$ if $\locfinal$ is reachable on an integer time (which can be tested thanks to~$\clockTickOne$), and via the fresh action~$\tickLastTimeRational$ otherwise.
		\end{enumerate}
		The last step of the construction is used to differentiate when the execution time is an integer, or when it is a non-integer (recall from~\cite{AAL24} that, if a location is reachable for a non-integer time, then it is for any other time in the open interval of length~1 with integer bounds to which this time belongs).
		We exemplify our construction on the discrete positive META of \cref{figure:example-DiscPosGuarETA} in \cref{figure:example-DiscPosGuarETA:ETEN}.

		This transformation, say~$\TA''$, is linear in the size of~$\TA'$.
		We then build $\Apub{\TA''}$ and $\Apriv{\TA''}$.
		The region automata $\RegionAutomaton{\Apub{\TA''}}$ and $\RegionAutomaton{\Apriv{\TA''}}$ over $\Actions = \{ \tickIncr_i \mid 1 \leq i \leq |\Energies| \} \cup \{ \tickTime , \tickLastTimeRational\}$ %
			now contain not only the increments of energy thanks to the $\tickIncr_i$ actions (as in the proof of \cref{theorem:fullweak-discrete-positive-guarded-ETAs}), but also the discrete time elapsing thanks to the $\tickTime$ actions; finally, the occurrence of one (resp.\ the absence of) $\tickLastTimeRational$ denotes a non-integer (resp.\ integer) execution time.
		Verifying $\exists$, weak and full opacity using Parikh image intersection and inclusion concludes the proof.
	\end{proof}
	\begin{figure}[tb]
		{\centering
			\scalebox{.77}{
			\begin{tikzpicture}[PTA, node distance=1cm and 2.5cm]
				\node[location, initial] (l0) {$\locinit$};
				\node[location, private, below = of l0] (lpriv) {$\locpriv$};
				\node[location, urgent, right = of l0] (l0') {$\locinit'$};
				\node[location, urgent, right = of l0'] (lF) {$\locfinal$};
				\node[location, final, right = of lF] (lFprime) {$\locfinal'$};

				\node[invariantNorth] at (l0.north) {$\clock \leq 3$};
				\node[invariantNorth] at (l0'.north) {$\clockZeroTime = 0$};
				\node[invariantSouth] at (lpriv.south) {$\clock \leq 3$};
				\node[invariantNorth] at (lF.north) {$\clockZeroTime = 0$};

				\path (l0) edge node[right, align=center]{$\silentaction$} (lpriv);
				\path (l0) edge node[above, align=center]{$\clock \geq 1$ \\ $\land 0 < \clockTickOne$ \\ $\styleact{\tickIncr_1}$} node[below, align=center]{$\clockZeroTime \assign 0$} (l0');
				\path (l0) edge[out=240,in=210,looseness=15] node[left, align=center]{$\clockTickOne = 1$ \\ $\tickTime$ \\ $\clockTickOne \assign 0$} (l0);

				\path (l0') edge node[above, align=center]{$\clockZeroTime = 0$ \\ $\styleact{\tickIncr_1}$} node[below]{$\clockZeroTime \assign 0$} (lF);

				\path (lpriv) edge[loop right] node[above right, align=center]{$0 < \clockTickOne$ \\ $\styleact{\tickIncr_1}$} (lpriv);
				\path (lpriv) edge[out=200,in=180,looseness=15] node[left, align=center]{$\clockTickOne = 1$ \\ $\tickTime$ \\ $\clockTickOne \assign 0$} (lpriv);

				\path (lpriv) edge[bend right] node[right, xshift=3.5em, align=center]{$\clock \geq 1 \land 0 < \clockTickOne$ \\ $\silentaction$ \\ $\clockZeroTime \assign 0$} (lF);

				\path (lF) edge[bend right] node[below, align=center]{$\clockTickOne = 1$ \\ $\tickTime$} (lFprime);
				\path (lF) edge[bend left] node[align=center]{$0 < \clockTickOne < 1$ \\ $\styleact{\tickLastTimeRational}$} (lFprime);
			\end{tikzpicture}
			}

			}

		\caption{Exemplifying the construction in the proof of \cref{theorem:ET-EN-fullweak-discrete-positive-guarded-METAs}}
		\label{figure:example-DiscPosGuarETA:ETEN}
	\end{figure}
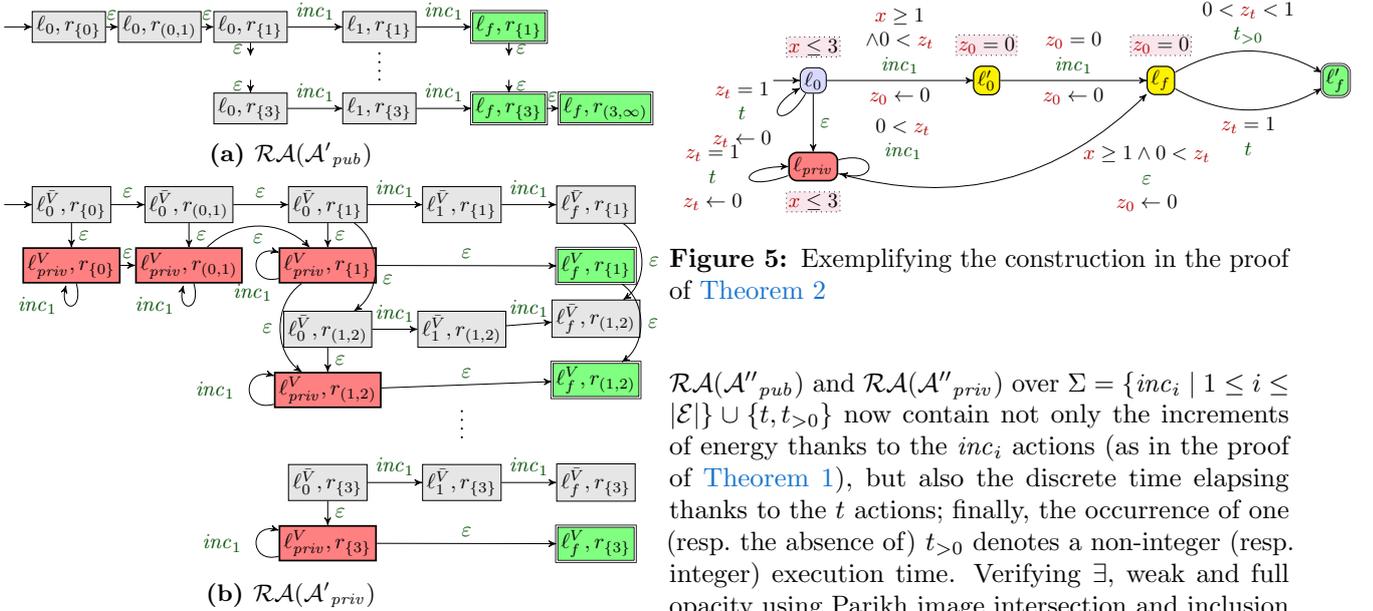
	\begin{example}
		The private run $\varrun_1$ in the proof of \cref{theorem:fullweak-discrete-positive-guarded-ETAs} gives $\tickTime \tickIncr_1 \tickIncr_1 \tickTime \tickLastTimeRational$, while the public run $\varrun_2$ gives $\tickTime \tickTime \tickIncr_1 \tickIncr_1 \tickLastTimeRational$.
		Even though these words differ, their Parikh image is $\{(2,2,1)\}$ (denoting 2, 2, 1 occurrences of $\tickIncr_1$, $\tickTime$, $\tickLastTimeRational$ respectively), and therefore the META in \cref{figure:example-DiscPosGuarETA} is $\exists$-ET-EN-opaque.
	\end{example}
	\subsection{(ET-)EN-opacity in discrete guarded ETAs}

	We now consider discrete guarded ETAs (with a single energy variable) with both increment and decrement.
	Recall that two energy variables make our problems undecidable (\cref{proposition:undecidability-2-energies}).

	\begin{theorem}\label{theorem:EN:discrete-ETAs}
		For each $\existsWeakFull \in \{ \exists, \text{weak}, \text{full} \}$
		and $\ENorETEN \in \{ \text{EN}, \text{ET-EN} \}$,
		$\existsWeakFull$-$\ENorETEN$-opacity is decidable %
			for discrete ETAs.
	\end{theorem}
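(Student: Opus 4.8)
The plan is to mimic the proof of \cref{theorem:fullweak-discrete-positive-guarded-ETAs}, the key new difficulty being that with a single energy variable subject to both increments and decrements, the final energy value is no longer monotone in the number of update edges, so a plain Parikh-image argument over a one-letter alphabet does not suffice. However, since we now have only \emph{one} energy variable, the trick of \cref{proposition:removing-guards-discrete} still applies: the energy value, although it may go up and down, is always a non-negative integer, and once it exceeds the maximal constant $\maxConstantEnergy$ appearing in the energy guards and invariants, only the fact ``$> \maxConstantEnergy$'' matters for \emph{guard satisfaction}. So the first step is to take the product of the control structure with a component tracking $\min(\text{energy}, \maxConstantEnergy+1)$; this removes all energy guards/invariants at the cost of an extra factor $\maxConstantEnergy + 2$ and yields a discrete META (with a single energy variable, still with $\pm$ updates) whose timed language, and whose sets $\PrivEnerVisit{\cdot}$ and $\PubEnerVisit{\cdot}$, are unchanged.

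**Key steps.** First I would apply the above construction to replace the discrete guarded ETA by an equivalent guard-free discrete ETA with a single energy variable and updates in $\setZ$. Second, as in the proof of \cref{theorem:fullweak-discrete-positive-guarded-ETAs}, I split every update $+N$ (resp.\ $-N$) with $|N|>1$ into $|N|$ consecutive $\pm 1$ updates performed in $0$-time using a fresh clock $\clockZeroTime$, and relabel: edges with update $+1$ get action $\tickIncr$, edges with update $-1$ get action $\tickDecr$, all others get $\silentaction$. For the ET-EN variant I additionally add the global ticking clock $\clockTickOne$, the invariant $\clockTickOne\leq 1$, the $\tickTime$ self-loops, and the final-location gadget with action $\tickLastTimeRational$, exactly as in the proof of \cref{theorem:ET-EN-fullweak-discrete-positive-guarded-METAs}. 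Third, I build $\Apub{\cdot}$ and $\Apriv{\cdot}$ and then their region automata, obtaining NFAs over $\{\tickIncr,\tickDecr\}$ (resp.\ $\{\tickIncr,\tickDecr,\tickTime,\tickLastTimeRational\}$). In a run of this NFA, the final energy is exactly $(\#\tickIncr) - (\#\tickDecr)$, and for ET-EN the execution time is recovered from $\#\tickTime$ together with the presence/absence of $\tickLastTimeRational$. The crucial fourth step: the set of achievable final energies (resp.\ final-energy/time pairs) is the image of the \emph{Parikh image} of the regular language under the linear map $(n_{\mathrm{inc}}, n_{\mathrm{dec}}, \ldots) \mapsto (n_{\mathrm{inc}} - n_{\mathrm{dec}}, \ldots)$. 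Since Parikh images of regular languages are effectively semilinear sets, their linear images are again effectively semilinear. Then $\exists$-opacity reduces to non-emptiness of the intersection of two semilinear sets, and weak/full opacity to inclusion/equality of two semilinear sets — all decidable (e.g.\ via a Presburger formulation), yielding decidability. (No sharp complexity is claimed in the statement, so it suffices to exhibit an effective decision procedure.)

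**Main obstacle.** The subtle point, and the step I expect to be the crux, is that unlike the positive case the fibre of the map ``number of $\tickIncr$/$\tickDecr$ edges $\mapsto$ net energy'' is unbounded, so one cannot bound the energy in the control state \emph{without} using the single-variable boundedness-at-the-guard argument; a naive ``track the exact energy'' would fail because energy is unbounded above. One must be careful that after the $\min(\cdot,\maxConstantEnergy+1)$ abstraction the \emph{true} net energy is still what is being observed by the attacker (the abstraction only affects guard satisfaction, not the reported final value), and hence that recovering $(\#\tickIncr)-(\#\tickDecr)$ from the region-automaton language still gives exactly $\PrivEnerVisit{\cdot}$ and $\PubEnerVisit{\cdot}$; this is why we keep the $\tickIncr/\tickDecr$ labels even on edges whose abstracted effect on the tracked value is null (when the tracked value is already $\maxConstantEnergy+1$). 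Once this bookkeeping is done correctly, the reduction to semilinear-set comparison is routine. Details are deferred to the appendix.
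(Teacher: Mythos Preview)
Your approach has a genuine gap: you never enforce the implicit lower bound $\enerval \geq 0$ that is part of the semantics of METAs (recall from \cref{definition:semanticsTA} that concrete states live in $\LocSet \times \setRgeqzero^{|\Clock|} \times \setRgeqzero^{|\Energies|}$, so a decrement that would make the energy negative is simply disabled). After you relabel increments and decrements by $\tickIncr,\tickDecr$ and discard the energy variable, the resulting TA---and hence its region automaton---happily accepts words in which some prefix has more $\tickDecr$'s than $\tickIncr$'s; these do \emph{not} correspond to runs of the original ETA, yet your linear projection $(\#\tickIncr)-(\#\tickDecr)$ counts them. Consequently the semilinear sets you compute on the public and private sides are in general strict supersets of $\PubEnerVisit{\TA}$ and $\PrivEnerVisit{\TA}$, and comparing them says nothing about the original opacity question.

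A second, smaller issue: the statement concerns discrete ETAs \emph{without} energy guards, so your first step (the $\min(\cdot,\maxConstantEnergy+1)$ abstraction) is unnecessary here. More importantly, your claimed extension of \cref{proposition:removing-guards-discrete} to the non-positive case is unsound: once the energy exceeds~$\maxConstantEnergy$ it may later drop back below~$\maxConstantEnergy$ via decrements, so the single ``$>\maxConstantEnergy$'' bucket cannot tell whether a subsequent decrement lands in ``$>\maxConstantEnergy$'' or in ``$=\maxConstantEnergy$''. This is exactly why the paper treats the guarded case separately (\cref{theorem:EN:discrete-guarded-ETAs}), combining the copy mechanism with a stack.

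The paper's remedy for the non-negativity issue is to turn each region automaton into a \emph{pushdown automaton}: a unary stack encodes the current energy, $\tickIncr$ pushes and $\tickDecr$ pops (so a pop on the empty stack is blocked, enforcing $\enerval\geq 0$), and at the accepting state a fresh action~$\action$ empties the stack so that $\#\action$ records the final energy. One then compares Parikh images of the two PDA languages---still semilinear by Parikh's theorem, with decidable intersection and inclusion~\cite{KT10}. Your route can be salvaged along the same lines (intersect your regular language with the one-counter language $\{\omega \mid \text{every prefix of }\omega\text{ has } \#\tickDecr \leq \#\tickIncr\}$ before projecting), but as written the argument is incomplete.
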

	\begin{proof}
		We start from the construction~$\TA''$ in the proof of \cref{theorem:ET-EN-fullweak-discrete-positive-guarded-METAs}, and extend it to decrements, \ie{} using $\tickDecr$ instead of~$\tickIncr$ when appropriate.
		As before, we then construct the NFAs $\Language(\RegionAutomaton{\Apriv{\TA''}}) $ and $ \Language(\RegionAutomaton{\Apub{\TA''}})$.
		However, adding $\tickDecr$ symbols to the construction in the proof of \cref{theorem:ET-EN-fullweak-discrete-positive-guarded-METAs} may allow some impossible runs, typically causing the energy to drop below~0.
		Therefore, we turn each of these two NFAs into a pushdown automaton~(PDA) (see \cref{definition:pushdown} recalled in \cref{appendix:definition:pushdown}), as follows:
		\begin{enumerate}
			\item
			We define a stack~$\PDAstackAlphabet$ encoding the value of the energy, over a single symbol~$\PDAstackEnergy$ (in addition to the empty stack symbol~$\initialStackSymbol$);
			\item We turn each transition of these NFAs with action~$\tickIncr$ (resp.~$\tickDecr$) into a silent transition pushing a symbol~$\PDAstackEnergy$ to (resp.\ popping a symbol from) the stack;
			\item For each accepting region~$\qfinal$,
				we add a self-loop over~$\qfinal$ via a fresh action~$\action$ popping an element~$\PDAstackEnergy$ of the stack;
				we add a new state~$\qfinal'$ connected from~$\qfinal$ via a silent transition testing for emptiness (\ie{} the popped symbol is~$\initialStackSymbol$);
			\item we make all states~$\qfinal$ non-accepting, and all~$\qfinal'$ accepting.
		\end{enumerate}
		An example of this transformation applied to a fictitious NFA is shown in \cref{figure:example-EN-NFA-tf}.
		The languages of these PDAs are made of words of the form $\{ \tickTime , \action, \tickLastTimeRational \}^*$, where the number of ``$\action$'' (resp.\ of~``$\tickTime$'') encodes the final energy (resp.\ the duration) of the run, and $\tickLastTimeRational$ encodes the non-integerness of the duration.
		Now, we can compute the intersection (resp.\ inclusion) of the Parikh images of the languages of these NPDAs, which is \coNP{}-complete (resp.\ $\Pi^P_2$-complete)~\cite{KT10}, thus solving $\exists$-ET-EN opacity (resp.\ weak- and full ET-EN-opacity).
		Solving EN-opacity follows exactly the same reasoning, by making all $\tickTime, \tickLastTimeRational$ actions silent.
	\end{proof}
	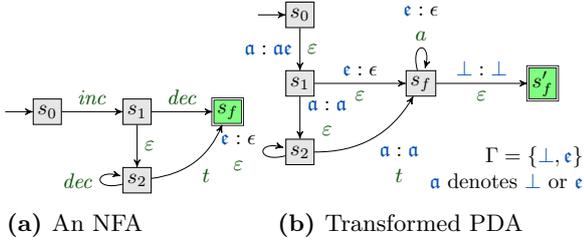
\begin{figure}[tb]
				\begin{subfigure}[b]{.3\columnwidth}
				\centering
				\scalebox{.8}{
					\begin{tikzpicture}[NFA, node distance=.7cm and 1cm]
						\node[state, initial] (l0) {$\qinit$};
						\node[state, right = of l0] (l1) {$\qi{1}$};
						\node[state, below = of l1] (l2) {$\qi{2}$};
						\node[state, final, right = of l1] (lF) {$\qfinal$};
						
						\path (l0) edge node[above, align=center]{$\styleenergy{\tickIncr}$} (l1);
						\path (l1) edge node[right, align=center]{$\silentaction$} (l2);
						\path (l1) edge node[above, align=center]{$\styleenergy{\tickDecr}$} (lF);
						\path (l2) edge[loop left] node[left, align=center]{ $\styleenergy{\tickDecr}$} (l2);
						\path (l2) edge[bend right] node[below right, align=center]{$\tickTime$} (lF);
					\end{tikzpicture}
				}
				
				\caption{An NFA}
				\label{figure:example-EN-NFA}
			\end{subfigure}
			\hfill{}
			\begin{subfigure}[b]{.65\columnwidth}
				\centering
				\scalebox{.8}{
					\begin{tikzpicture}[NFA, node distance=.7cm and 1.5cm]
						\node[state, initial] (l0) {$\qinit$};
						\node[state, below = of l0] (l1) {$\qi{1}$};
						\node[state, below = of l1] (l2) {$\qi{2}$};
						\node[state, right = of l1] (lF) {$\qfinal$};
						\node[state, final,  right = of lF] (F) {$\qfinal'$};
						
						\path (l0) edge node[left, align=center]{$\stackSymbol : \stackSymbol \PDAstackEnergy$}node[right, align=center] {$\silentaction$} (l1);
						\path (l1) edge node[right, align=center]{$\stackSymbol : \stackSymbol$ \\ $\silentaction$} (l2);
						\path (l1) edge node[above, align=center]{$\PDAstackEnergy : \emptyword$} node[below, align=center] {$\silentaction$} (lF);
						\path (l2) edge[loop left] node[left, align=center]{ $\PDAstackEnergy : \emptyword$ \\ $\silentaction$} (l2);
						\path (l2) edge[bend right] node[below right, align=center] {$\stackSymbol : \stackSymbol$ \\ $\tickTime$} (lF);
						\path (lF) edge node[above, align=center]{$\initialStackSymbol : \initialStackSymbol$} node[below, align=center]{$\silentaction$} (F);
						\path (lF) edge[loop above] node[above, align=center]{$\PDAstackEnergy : \emptyword$ \\ $\action$} (lF);

						\node[below = of lF, align=right, xshift=4em]{$\PDAstackAlphabet = \{ \initialStackSymbol , \PDAstackEnergy \}$
							\\
		$\stackSymbol$ denotes $\initialStackSymbol$ or $\PDAstackEnergy$};
					\end{tikzpicture}
				}
				\caption{Transformed PDA}
				\label{figure:example-EN-PDA-tf}
			\end{subfigure}
		\caption{Construction in the proof of \cref{theorem:EN:discrete-ETAs}}
		\label{figure:example-EN-NFA-tf}
	\end{figure}

	For \emph{guarded} ETAs, we cannot reuse the reasoning removing energy guards from \cref{proposition:removing-guards-discrete}, as the energy can also decrease.
	However, we can combine this mechanism with the stack, leading to the following result (proof is in \cref{appendix:theorem:EN:discrete-guarded-ETAs}).

	\begin{restatable}{theorem}{theoremENDiscreteGuardedETAs}\label{theorem:EN:discrete-guarded-ETAs}
		For each $\existsWeakFull \in \{ \exists, \text{weak}, \text{full} \}$
		and $\ENorETEN \in \{ \text{EN}, \text{ET-EN} \}$,
		$\existsWeakFull$-$\ENorETEN$-opacity is decidable %
			for discrete guarded ETAs.
	\end{restatable}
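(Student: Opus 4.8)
The plan is to combine the stack-based pushdown construction of \cref{theorem:EN:discrete-ETAs} with the energy-capping idea underlying \cref{proposition:removing-guards-discrete}. Write~$\maxConstantEnergy$ for the largest constant appearing in the energy guards and invariants of the input discrete guarded ETA~$\TA$. First I would reuse the preprocessing of \cref{theorem:fullweak-discrete-positive-guarded-ETAs,theorem:ET-EN-fullweak-discrete-positive-guarded-METAs}: split every update $\enervar:\pm\updateConstant$ into $\updateConstant$ consecutive unit updates performed in $0$-time through a fresh clock~$\clockZeroTime$, relabel each unit increment (resp.\ decrement) by the fresh action~$\tickIncr$ (resp.~$\tickDecr$), add the global ticking clock~$\clockTickOne$ with the action~$\tickTime$ fired every time unit, and attach to each final location the $\locfinal'$-gadget distinguishing integer from non-integer execution times via the fresh action~$\tickLastTimeRational$. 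Call this intermediate automaton~$\TA''$.

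Next, rather than discarding the energy value when building the region automaton, I would carry a \emph{capped energy value} $c \in \{0,1,\dots,\maxConstantEnergy,\top\}$ in the control state, $\top$ meaning ``$>\maxConstantEnergy$''. A $\tickIncr$ step increases~$c$ by~$1$ if $c<\maxConstantEnergy$, moves~$\maxConstantEnergy$ to~$\top$, and keeps~$\top$; a $\tickDecr$ step decreases~$c$ by~$1$ if $0<c\leq\maxConstantEnergy$, is forbidden from~$c=0$, and from~$\top$ \emph{nondeterministically} either stays at~$\top$ or jumps to~$\maxConstantEnergy$. Since every constant occurring in an energy constraint is $\leq\maxConstantEnergy$, the truth value of each energy guard and invariant is fully determined by~$c$ (when $c=\top$, the value exceeds all such constants), so energy constraints can be evaluated syntactically in the product of the region automaton with this capped value. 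Doing so for $\Apub{\TA''}$ and $\Apriv{\TA''}$ yields two NFAs over $\{\tickIncr,\tickDecr,\tickTime,\tickLastTimeRational\}$, but they are still too liberal: the nondeterministic decrement from~$\top$ may guess wrongly, and the number of $\tickDecr$'s while above the $\maxConstantEnergy$ level is not yet constrained, so a run could reach a final state carrying a bogus energy value.

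To repair this, following \cref{theorem:EN:discrete-ETAs} I would turn each NFA into a pushdown automaton whose stack (over a single symbol~$\PDAstackEnergy$ together with the bottom marker~$\initialStackSymbol$) records the \emph{excess} of the true energy over~$\maxConstantEnergy$: a $\tickIncr$ that enters or remains in~$\top$ pushes a~$\PDAstackEnergy$, a $\tickDecr$ from~$\top$ pops one, and the guess is validated by a silent peek gadget (pop, branch on the revealed stack symbol, push it back) — revealing~$\PDAstackEnergy$ confirms staying in~$\top$, revealing~$\initialStackSymbol$ confirms the jump to~$\maxConstantEnergy$. When $c\leq\maxConstantEnergy$ the stack is empty and the true energy equals~$c$, so no decrement below~$0$ is possible. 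For the accepting states I would append, exactly as in \cref{theorem:EN:discrete-ETAs}, a gadget emitting one fresh action~$\action$ per unit of final energy — a finite chain of length~$c$, plus, when $c=\top$, $\maxConstantEnergy$ further emissions followed by a self-loop emptying the stack — so that in an accepting word the number of~$\action$'s is the final energy, the number of~$\tickTime$'s is the execution time, and the presence of~$\tickLastTimeRational$ marks a non-integer execution time; all peek and book-keeping transitions are labelled~$\silentaction$ and hence invisible to the Parikh image.

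Finally, the languages of these two PDAs are context-free and therefore have effectively computable semilinear Parikh images~\cite{Parikh66}; projecting onto the $\action$-coordinate (for EN-opacity) or onto the $(\action,\tickTime,\tickLastTimeRational)$-coordinates (for ET-EN-opacity) yields semilinear descriptions of $\PrivEnerVisit{\TA}$ and $\PubEnerVisit{\TA}$ (resp.\ of the matching sets of (time, energy) pairs). Then $\exists$-opacity reduces to non-emptiness of their intersection (\coNP-complete~\cite{KT10}), weak-opacity to inclusion ($\Pi^P_2$-complete~\cite{KT10}), and full-opacity to equality; all three are decidable. I expect the main obstacle to be the correctness of the synchronisation between the guessed capped value in the control and the excess tracked on the stack: one has to show that the peek gadget makes exactly the runs carrying an inconsistent guess get stuck, so that the surviving accepting runs of each PDA are in a Parikh-preserving bijection with the public (resp.\ private) runs of~$\TA$. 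The detailed construction and this correctness argument are deferred to \cref{appendix:theorem:EN:discrete-guarded-ETAs}.
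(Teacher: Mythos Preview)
Your proposal is correct and follows essentially the same approach as the paper: both combine the $\maxConstantEnergy{+}2$-copy energy-capping idea of \cref{proposition:removing-guards-discrete} (to evaluate energy guards against a bounded control component) with the single-stack construction of \cref{theorem:EN:discrete-ETAs} (to track the excess above~$\maxConstantEnergy$), and finish via Parikh-image comparison of the two resulting PDAs. The only inessential difference is that the paper resolves the $\top\!\to\!\maxConstantEnergy$ transition by a direct bottom-of-stack test rather than your nondeterministic-guess-plus-peek gadget, and offsets the stack count by one; otherwise the constructions coincide.
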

	\section{Opacity with energy observation every time unit}\label{section:DE}

	We now consider the case when the attacker can observe the current energy level every time unit.
	The model remains over dense time, but the attacker has only a discrete observation power.

	\subsection{Observing the energy every time unit in discrete positive guarded ETAs}

	We first consider the setting when the attacker can observe the exact energy valuation every time unit.

	Given a run \(\varrun = (\loci{0}, \clockval_0, \enerval_0), (\paramd_0, \edgei{0}), (\loci{1}, \clockval_1, \enerval_1), \ldots, (\loci{n}, \clockval_n, \enerval_n)\) of a discrete positive guarded META,
	we define the \emph{energy level at time~$t$} as follows:

	\[
	\EnerLevelT(\varrun, t) = \left\{
		\begin{array}{ll}
			\enervalZero & \mbox{if } t < d_0 \\
			\enerval_k \text{ where } k = \max_i\big(( \sum_{j=0}^i d_j) \leq t \big) & \mbox{otherwise.}
		\end{array}
	\right.
	\]

	Then, the discrete energy observation of~$\varrun$ is:

		\[ \DiscreteEnergyObs(\varrun) = \EnerLevelT(\varrun, 1) , \dots, \EnerLevelT(\varrun, k) \text { with } k = \ceiling{\sum_{j=0}^n d_i} \text{.} \]

	A guarded META~$\TA$ is \emph{weakly discrete-energy opaque (weakly DE-opaque)} whenever, for each run $\varrun \in \PrivVisit{\TA}$, there exists a run $\varrun' \in \PubVisit{\TA}$ such that $\DiscreteEnergyObs(\varrun) = \DiscreteEnergyObs(\varrun')$.
	Full and existential counterparts are defined similarly.
	We define the discrete-energy opacity problem as follows (for each $\existsWeakFull \in \{ \exists, \text{weak}, \text{full} \}$):

	\defProblem{$\existsWeakFull$-DE-opacity}
		{A guarded META~$\TA$}
		{Is $\TA$ $\existsWeakFull$-DE-opaque?}
	\begin{example}\label{example:DEO}
		Consider again the discrete positive META in \cref{figure:example-META-A}.
		Let $\varrun$ be the private run $(\locinit, 0, 0) \FlecheConcrete{(1.8, \styleact{a})} (\locpriv, 0, 1) \FlecheConcrete{(0.4, \styleact{b})} (\locpriv, 0, 3)\FlecheConcrete{(0.8, \styleact{b})} (\locpriv, 0, 5) \FlecheConcrete{(0.7, \styleact{b})} (\locfinal, 0.7, 5)$.
		We have $\DiscreteEnergyObs(\varrun) = 0 , 1, 5 , 5$.

	\end{example}

	We first rule out non-necessarily positive guarded METAs; the proof is a simple adaptation from \cref{proposition:undecidability-2-energies}.

	\begin{restatable}{proposition}{undecTwoEnergiesDE}\label{proposition:undecidability-2-energies-DE}
		For each $\existsWeakFull \in \{ \exists, \text{weak}, \text{full} \}$,
		$\existsWeakFull$-DE-opacity for discrete guarded METAs with 2~energies and 1~clock is undecidable.
	\end{restatable}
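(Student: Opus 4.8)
The plan is to recycle the reduction that underlies \cref{proposition:undecidability-2-energies}, tuned so that the per-time-unit observations of the attacker carry exactly the right information. Recall that \cref{proposition:undecidability-2-energies} is obtained by first showing (\cref{lemma:undecidability-META-2-1}) that constant-time reachability of a designated location is undecidable for discrete guarded METAs with $2$ energy variables and a single clock kept at~$0$, and then reducing that reachability problem to the opacity problems through the gadget of \cref{appendix:hardness:opacity-reachability-META}, all of whose accepting runs take $0$~time. So my first step would be to take that gadget, instantiated for $\existsWeakFull$-EN-opacity, giving a META~$\TA$ with $2$ energies, $1$ clock, only $0$-duration accepting runs, and such that $\TA$ is $\existsWeakFull$-EN-opaque iff the target location is reachable in the simulated machine.

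As it stands this is useless for DE-opacity: a run of duration~$0$ has $\ceiling{\runduration{\varrun}}=0$, hence an empty discrete energy observation, so $\TA$ is trivially $\existsWeakFull$-DE-opaque. The key idea is therefore to \emph{pad every accepting run to duration exactly~$1$} without changing which locations are reachable and without changing any final energy. Concretely, using the standing assumption that final locations have no outgoing edge, I would redirect every edge of~$\TA$ entering a final location~$\locfinal$ to a fresh non-private non-final location~$\widetilde{\loc_f}$, additionally resetting the clock~$\clock$ on that redirected edge; I then give~$\widetilde{\loc_f}$ the invariant $\clock \leq 1$ and a single outgoing edge to~$\locfinal$ guarded by $\clock = 1$, with empty reset and the identically-$0$ energy update. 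Call the result~$\TA^\star$. Since the simulation already resets~$\clock$ on every edge, one clock still suffices; the two energy variables are untouched; so $\TA^\star$ is still a discrete guarded META with $2$ energies and $1$ clock, and its private/public/accepting runs are in bijection with those of~$\TA$ (delete the final $1$-time-unit step), with identical final energies.

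Then I would observe that every accepting run~$\varrun$ of~$\TA^\star$ satisfies $\runduration{\varrun}=1$ and, since the padding edge performs no energy update, $\DiscreteEnergyObs(\varrun)$ is the length-one sequence $\EnerLevelT(\varrun,1) = \finalEnergy(\varrun)$. Hence, on this family of automata, the discrete energy observation is in bijection with the final energy, so $\TA^\star$ is $\existsWeakFull$-DE-opaque iff it is $\existsWeakFull$-EN-opaque iff (by the bijection above) $\TA$ is $\existsWeakFull$-EN-opaque iff the target location of the original machine is reachable. As the latter is undecidable for each $\existsWeakFull \in \{\exists, \text{weak}, \text{full}\}$, so is $\existsWeakFull$-DE-opacity for discrete guarded METAs with $2$ energies and $1$ clock. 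The ET-EN variant comes for free, since all runs of~$\TA^\star$ share the execution time~$1$.

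The only real subtlety I anticipate is the padding step: one must check that forcing a $1$-time-unit delay does not create \emph{extra} integer-time crossings (which would replace the single observed valuation by a longer, harder-to-control sequence) and does not otherwise affect opacity. This is precisely why it matters that the reduction behind \cref{proposition:undecidability-2-energies} produces only $0$-time runs — all the ``computation'' happens instantaneously, so the sole integer time point ever crossed in~$\TA^\star$ is the very end, at $t=1$. A secondary point to verify is that a single clock indeed remains sufficient after the modification, which holds because the simulation clock is reset on every edge and is thus freely reusable in the padding gadget.
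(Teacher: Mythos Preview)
Your approach is the same as the paper's in spirit: arrange that the attacker's per-time-unit observation coincides with the final energy, so that DE-opacity collapses to EN-opacity, and then invoke the undecidability of the latter. The paper does this tersely, noting that for a system running in time $\leq 1$ the two notions coincide and pointing to the choice $\maxConstantTReachability = 1$ in \cref{appendix:hardness:opacity-reachability-META}; your explicit padding to duration exactly~$1$ is arguably a cleaner way to make the same reduction rigorous.

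There is one concrete inaccuracy, though. The gadget of \cref{proposition:EN-opacity-harder-reachability-discrete-METAs} does \emph{not} have only $0$-time accepting runs: the fresh locations $\locpriv$, $\loci{1}$, $\locfinal'$ carry no invariant, so time may elapse there. If you pad that gadget as written, $\TA^\star$ admits accepting runs of duration strictly greater than~$1$, whose $\DiscreteEnergyObs$ then has several entries, and the intended equivalence with EN-opacity breaks --- for instance, $\locpriv$ has only increment self-loops while $\loci{1}$ also has decrements, so public runs can realise non-monotone observation sequences that no private run can match, falsifying full DE-opacity even when $\locfinal$ is reachable. The fix is the one you already half-anticipate in your last paragraph: impose the invariant $\clock = 0$ on all fresh locations of the gadget before padding (the self-loops there need no time, so nothing is lost for EN-opacity). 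This costs nothing and makes your argument go through verbatim.
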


	The following positive result only holds for positive ETAs, \ie{} with a single energy.

	\begin{restatable}{theorem}{theoremDEopacity}\label{theorem:DE-Opacity}
		For each $\existsWeakFull \in \{ \exists, \text{weak}, \text{full} \}$, $\existsWeakFull$-DE-opacity can be decided in \EXPSPACE{} for discrete positive (guarded) ETAs.
	\end{restatable}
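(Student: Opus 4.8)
The plan is to reduce $\existsWeakFull$-DE-opacity to language comparison between two NFAs of exponential size: $\exists$ to non-emptiness of their intersection, weak to language inclusion, and full to language equality. All three of these problems are decidable in \PSPACE{} in the size of the automata, hence in \EXPSPACE{} overall. The key observation is that, unlike for EN-opacity (where only the unordered total number of increments is relevant, which forces the detour through Parikh images and semilinear sets in the proof of \cref{theorem:fullweak-discrete-positive-guarded-ETAs}), a DE-observation is an \emph{ordered} sequence of energy levels, but one that admits a faithful, injective encoding as a word over a two-letter alphabet. We can therefore work directly with NFAs, which is precisely what keeps the bound at \EXPSPACE{}.

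First I would reuse and slightly adapt the ticking-clock construction from the proof of \cref{theorem:ET-EN-fullweak-discrete-positive-guarded-METAs}: split every energy update $+\updateConstant$ with $\updateConstant > 1$ into $\updateConstant$ consecutive $+1$ updates in $0$-time, relabel each $+1$-edge with a fresh action $\tickIncr$ and every remaining edge with $\silentaction$, add a global ticking clock $\clockTickOne$ with invariant $\clockTickOne \leq 1$ reset each time unit via a fresh action $\tickTime$, and use the guard $0 < \clockTickOne$ so that the $\tickIncr$-updates of a given time unit are ordered before its $\tickTime$, consistently with the semantics of $\EnerLevelT$. Final locations are made urgent and routed to a fresh accepting copy, emitting exactly one last $\tickTime$ corresponding to the observation at time $\ceiling{\runduration{\varrun}}$. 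Contrary to the ET-EN case, integer-duration and non-integer-duration runs must here be made \emph{indistinguishable}, since $\DiscreteEnergyObs$ does not reveal the exact duration; so I would use the same marker $\tickTime$ in both cases and no $\tickLastTimeRational$. Because the energy of a discrete positive ETA only grows, the untimed action sequence over $\{\tickIncr, \tickTime\}$ of any run of the resulting automaton $\TA''$ has the shape $\tickIncr^{b_1}\tickTime\tickIncr^{b_2}\tickTime\cdots\tickIncr^{b_k}\tickTime$ with all $b_i \in \setN$, and the partial sums $b_1 + \cdots + b_i$ equal $\EnerLevelT(\varrun, i)$ for $1 \leq i \leq k = \ceiling{\runduration{\varrun}}$; this word is therefore a faithful, injective encoding of $\DiscreteEnergyObs(\varrun)$, and conversely $\Language(\RegionAutomaton{\TA''})$ consists exactly of such well-formed words.

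Next, as in \cref{section:generic}, I would build $\Apub{\TA''}$ and $\Apriv{\TA''}$ and their region automata $\RegionAutomaton{\Apub{\TA''}}$ and $\RegionAutomaton{\Apriv{\TA''}}$, regarded as NFAs over $\{\tickIncr, \tickTime\}$ after eliminating the $\silentaction$-transitions; by the encoding above, a DE-observation is realised by a private (resp.\ public) run of $\TA$ iff its encoding belongs to $\Language(\RegionAutomaton{\Apriv{\TA''}})$ (resp.\ $\Language(\RegionAutomaton{\Apub{\TA''}})$). Hence $\TA$ is $\exists$-DE-opaque iff these two languages intersect, weakly DE-opaque iff the first is included in the second, and fully DE-opaque iff they coincide, each decidable in \PSPACE{} in the NFA size; since region automata are exponential in $|\TA|$, this yields an \EXPSPACE{} procedure. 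For \emph{guarded} discrete positive ETAs, rather than first stripping the guards via \cref{proposition:removing-guards-discrete} (which, for a binary-encoded maximal energy constant $\maxConstantEnergy$, composes two exponentials and would only give \twoEXPSPACE{}), I would fold the current energy value—tracked in $\{0, 1, \dots, \maxConstantEnergy, {>}\maxConstantEnergy\}$, which suffices to evaluate every energy guard since the energy is monotone—directly into the region state, while still emitting one $\tickIncr$ per unit increment (including beyond $\maxConstantEnergy$). This keeps $\RegionAutomaton{\Apriv{\TA''}}$ and $\RegionAutomaton{\Apub{\TA''}}$ single-exponential and preserves the \EXPSPACE{} bound in the guarded case as well.

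The main obstacle I anticipate is the correctness of the first step: proving that the untimed word of \emph{every} run of $\TA''$ records $\DiscreteEnergyObs$ exactly—in particular pinning down the position of $\tickTime$ relative to the $0$-time $\tickIncr$-updates at integer instants, and ensuring that integer- and non-integer-duration runs genuinely produce the same words (whereas the ET-EN construction deliberately separates them)—together with the converse well-formedness of the produced language; and, for the guarded subclass, checking that folding the bounded energy into the region construction is sound and does not break the single-exponential size estimate.
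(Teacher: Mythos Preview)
Your proposal is correct and essentially identical to the paper's proof: build $\TA''$ with the ticking clock, replace the final $\tickLastTimeRational$ by $\tickTime$ so that integer and non-integer durations are merged, and reduce $\existsWeakFull$-DE-opacity to non-emptiness of intersection / inclusion / equality of $\Language(\RegionAutomaton{\Apriv{\TA''}})$ and $\Language(\RegionAutomaton{\Apub{\TA''}})$ as NFAs over $\{\tickIncr,\tickTime\}$. Your one deviation---folding the bounded energy directly into the region state in the guarded case---is harmless but unnecessary: the paper just applies \cref{proposition:removing-guards-discrete} and observes that for a single energy the $\maxConstantEnergy{+}2$ copies only multiply the number of \emph{locations}, and since the region automaton is exponential in clocks and clock constants but linear in locations, the product remains single-exponential; your fear of a composed double exponential (hence \twoEXPSPACE{}) is unfounded.
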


	The following result extends DE-opacity to METAs, but only for the $\exists$ variant.
	We use a novel construction of ``Parikh by block'' automaton.
	\begin{restatable}{theorem}{theoremDEopacityMETA}\label{theorem:DE-Opacity-META}
		 $\exists$-DE-opacity
			is decidable
		for discrete positive (guarded) METAs.
	\end{restatable}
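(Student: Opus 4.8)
The plan is to reuse the pipeline behind \cref{theorem:DE-Opacity}, but to replace the single-stack trick (which only handles one energy variable) by a new ``Parikh by block'' construction~$\ParikbB$. First I would apply the transformations already used in the proofs of \cref{theorem:fullweak-discrete-positive-guarded-ETAs,theorem:ET-EN-fullweak-discrete-positive-guarded-METAs}: split every ``$+\updateConstant$'' update into $\updateConstant$ consecutive $+1$ updates in $0$-time, relabel each $+1$ update of energy~$\enervari{i}$ by a fresh action~$\tickIncr_i$, add a global ticking clock~$\clockTickOne$ emitting action~$\tickTime$ at each integer time (with the guard discipline $0 < \clockTickOne \leq 1$ of that proof), and remove energy guards via \cref{proposition:removing-guards-discrete}. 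This yields a TA~$\TA''$ over $\{\tickIncr_i \mid 1 \leq i \leq |\Energies|\} \cup \{\tickTime\}$; I then build $\Apriv{\TA''}$, $\Apub{\TA''}$, and their region automata. Every accepting word of $\RegionAutomaton{\Apriv{\TA''}}$ (resp.\ $\RegionAutomaton{\Apub{\TA''}}$) then has the shape $B_1 \tickTime B_2 \tickTime \cdots B_k \tickTime$ with $B_j \in \{\tickIncr_1,\dots,\tickIncr_m\}^*$, $m = |\Energies|$, the number of $\tickIncr_i$ in~$B_j$ being the increase of energy~$i$ during the $j$-th time unit. A telescoping argument over partial sums shows that a private run~$\varrun$ and a public run~$\varrun'$ satisfy $\DiscreteEnergyObs(\varrun) = \DiscreteEnergyObs(\varrun')$ iff the two words have the same number of blocks and, block by block and coordinate by coordinate, the same number of $\tickIncr_i$'s --- i.e.\ the same Parikh image on every block.

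Consequently, $\exists$-DE-opacity holds iff there are words $B_1 \tickTime \cdots B_k \tickTime \in \Language(\RegionAutomaton{\Apriv{\TA''}})$ and $B'_1 \tickTime \cdots B'_k \tickTime \in \Language(\RegionAutomaton{\Apub{\TA''}})$ with $\mathrm{Parikh}(B_j) = \mathrm{Parikh}(B'_j)$ for every~$j$. The obstacle is that the usual Parikh image only records totals, and for $m \geq 2$ the single-stack matching of \cref{theorem:EN:discrete-ETAs} is not enough; indeed for $m \geq 3$ the block-Parikh language of a regular language can be non-context-free (think of blocks $(\tickIncr_1\tickIncr_2\tickIncr_3)^n$, which normalise to $\tickIncr_1^n\tickIncr_2^n\tickIncr_3^n$). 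I would therefore introduce $\ParikbB(\mathcal{A})$: from an NFA~$\mathcal{A}$ over $\Sigma \cup \{\tickTime\}$ whose words decompose into $\tickTime$-separated blocks, build a counter automaton (a labelled vector addition system) that simulates~$\mathcal{A}$ while, inside each block, accumulating one counter per letter of~$\Sigma$, and then ``replays'' that block in canonical sorted form $\tickIncr_1^{v_1}\cdots\tickIncr_m^{v_m}$ before emitting~$\tickTime$; thus the sequence of emitted canonical blocks is exactly the sequence of block Parikh images of a word accepted by~$\mathcal{A}$. With this, $\exists$-DE-opacity becomes non-emptiness of $\Language(\ParikbB(\RegionAutomaton{\Apriv{\TA''}})) \cap \Language(\ParikbB(\RegionAutomaton{\Apub{\TA''}}))$.

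It then remains to take the synchronous product of the two $\ParikbB$ automata --- they synchronise on the canonical block contents and on the $\tickTime$ separators --- and to test its emptiness. Since this product is again a labelled vector addition system (Petri net), its emptiness reduces to a reachability/coverability question for Petri nets, which is decidable; as the bound obtained is non-elementary, this matches the bare ``decidable'' stated in the theorem. The main technical difficulty, and the reason $\ParikbB$ has to be designed with care, is the interplay between block boundaries and the counters: a naive product would need to zero-test the difference counters at every~$\tickTime$, which a Petri net cannot do. The fix I would implement is to match block~$j$ \emph{on the fly} --- at each step the side that is currently ahead pushes a pending token on the shared counter for the relevant~$\tickIncr_i$ and the side that is behind consumes one --- while routing every $\tickTime$ transition through the canonical-replay gadget, so that a block which has not been fully matched simply cannot be completed, removing any need for an explicit zero-test. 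The routine verifications that remain --- effective constructibility of $\ParikbB(\mathcal{A})$ and of the product, the bijection between product runs and pairs of private/public runs with equal discrete energy observations, and applicability of Petri-net reachability --- then conclude the proof.
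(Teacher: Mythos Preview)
Your reduction to block-wise Parikh equivalence is correct and matches the paper's starting point. Where you diverge is in realising $\ParikbB$ as a labelled VASS that silently reads a block, accumulates one counter per letter, and then replays it in sorted form $\tickIncr_1^{v_1}\cdots\tickIncr_m^{v_m}$ before emitting~$\tickTime$. The flaw is precisely the one you flag but do not actually resolve: without a zero-test at each block boundary, tokens leak across blocks. Concretely, take an NFA accepting only $\tickIncr_1\tickTime\tickTime$ (block vectors $(1),(0)$). Your VASS can emit $\tickTime\tickIncr_1\tickTime$ (block vectors $(0),(1)$) by reading the first block, declining to replay its $\tickIncr_1$ token, emitting $\tickTime$ with $c_1=1$ still pending, reading the empty second block, and only then replaying the leftover token---finishing with empty counters and hence accepting. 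Your ``on-the-fly matching'' does not prevent this either: whichever side pushes an unmatched token in block~$j$ will see it consumed in block~$j{+}1$, and a VASS has no mechanism to forbid the synchronised $\tickTime$ from firing while pending tokens remain. The sentence ``a block which has not been fully matched simply cannot be completed'' is exactly the per-block zero-test you said you wanted to avoid, and there are unboundedly many blocks.

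The paper avoids counters altogether by exploiting the finiteness of the region automata. For each ordered pair of states $(q,q')$, the set of Parikh vectors of $\{\tickIncr_i\}$-words leading from~$q$ to some $\tickTime$- (or $\tickFin$-)predecessor of~$q'$ is semilinear by Parikh's theorem. The paper's $\ParikbB$ is therefore a \emph{finite} automaton over $\{\tickTime,\tickFin\}$ whose transitions carry these semilinear sets as extra labels. One takes the synchronised product of the two $\ParikbB$ automata (each product edge now carries a pair of semilinear sets), discards every edge whose two sets have empty intersection---a single Presburger query per edge---and checks ordinary reachability in the remaining finite graph. Because the blocks are independent, this per-edge test is both sound and complete, and no Petri-net machinery is required.
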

	\begin{proof}[Proof (sketch)]
		We first build $\TA'$ similarly to the proof of \cref{theorem:ET-EN-fullweak-discrete-positive-guarded-METAs}, and then construct the region automata of $\Apriv{\TA'}$ and~$\Apub{\TA'}$, where $\TA'$ embeds the $\tickIncr_i$ actions (encoding discrete increments for the various energy variables), and time ticks (actions~$\tickTime$, and at most one final action~$\tickFin$ (which was denoted by~$\tickLastTimeRational$ in \cref{theorem:ET-EN-fullweak-discrete-positive-guarded-METAs})).
		Then, verifying $\exists$-DE-opacity amounts to checking whether there exist a run in $\Apriv{\TA'}$ and a run of $\Apub{\TA'}$ with the same execution time (same number of~$\tickTime$), and the same energy valuations at each tick, \ie{} the same number of increments~$\tickIncr_i$ in between two consecutive $\tickTime$~actions, regardless of their order---which is indeed not visible to the attacker.

		To this end, we define and use a novel construction, which we call ``Parikh by block'' ($\ParikbB$) automaton: the idea is to count the occurrences of different actions (here the~$\tickIncr_i$), between two consecutive ticks (actions~$\tickTime$ or~$\tickFin$)---hence the notion of ``block''.
		Counting such numbers of actions between two ticks will be represented using a Parikh image, \ie{} a semilinear set~\cite{Parikh66}.
		The resulting structure is a finite automaton over actions~$\tickTime$ and~$\tickFin$, where each transition is additionally labelled by a Parikh image over the~$\tickIncr_i$, encoding the number of $\tickIncr_i$ since the last tick.
		We then perform the synchronised product of $\ParikbB(\Apriv{\TA'})$ and $\ParikbB(\Apub{\TA'})$ over $\{ \tickTime , \tickFin \}$; each transition in the resulting product (which is a finite automaton) is labelled by \emph{two} Parikh images, one coming from $\ParikbB(\Apriv{\TA'})$ and one from $\ParikbB(\Apub{\TA'})$.
		Finally, we check the existence of an accepting path in this automaton such that, for each transition, the intersection of the two Parikh images is non-empty.
		All these operations on Parikh images can be computed using the (decidable) Presburger arithmetics~\cite{BHK17}.
		This indeed denotes the existence of two paths, one private and one public, with the same execution time, and with the same energy level at each time unit (correctly encoded by the number of $\tickIncr_i$ between consecutive $\tickTime$ actions)---which decides $\exists$-DE-opacity.
		See additional details in \cref{appendix:proof:theorem:DE-Opacity-META}.
	\end{proof}

		\begin{figure}[tb]
				\begin{subfigure}[b]{.45\columnwidth}
				\centering
				\scalebox{.8}{
					\begin{tikzpicture}[NFA, node distance=.5cm and 1cm]
						\node[state, initial] (l0) {$\qinit$};
						\node[state, right = of l0] (l1) {$\qi{1}$};
						\node[state, right = of l1] (l2) {$\qi{2}$};
						\node[state, private, below = of l1] (l3) {$\qi{3}$};
						\node[state, final, right = of l3] (lf) {$\qfinal$};

						\path (l0) edge[bend left] node[above, align=center]{$\styleact{a}$} (l1);
						\path (l1) edge[bend left] node[above, align=center]{$\styleact{b}$} (l0);
						\path (l0) edge[bend right] node[above, align=center]{$\styleact{b}$} (l3);
						\path (l1) edge node[above, align=center]{$\styleact{t}$} (l2);
						\path (l2) edge[loop right] node[right, align=center]{	$\styleact{t}$} (l2);
						\path (l2) edge node[above, align=center]{$\styleact{b}$} (l3);
						\path (l3) edge node[right, align=center]{$\styleact{t}$} (l1);
						\path (l3) edge node[above, align=center]{$\styleact{a}$} (lf);

					\end{tikzpicture}
				}
				
				\caption{An NFA~$\TN$}
				\label{figure:example-DE-NFA-N}
			\end{subfigure}
			\hfill{}
			\begin{subfigure}[b]{.45\columnwidth}
				\centering
				\scalebox{.8}{
					\begin{tikzpicture}[NFA, node distance=.5cm and 1cm]
						\node[state, initial] (l0) {$\qinit'$};
						\node[state, right = of l0] (l1) {$\qi{1}'$};
						\node[state, final, right = of l1] (lf) {$\qfinal'$};
						
						\path (l0) edge[loop above] node[above, align=center]{$\styleact{a}$} (l0);
						\path (l1) edge[loop above] node[above, align=center]{$\styleact{b}$} (l1);
						\path (l1) edge[bend left] node[above, align=center]{$\styleact{t}$} (l0);
						\path (l0) edge[bend left] node[above, align=center]{$\styleact{t}$} (l1);
						\path (l1) edge node[above, align=center]{$\styleact{a}$} (lf);
					\end{tikzpicture}
				}
				
				\caption{An NFA $\TM$}
				\label{figure:example-DE-NFA-M}
			\end{subfigure}

			\begin{subfigure}[b]{.7\linewidth}
				\centering
				\scalebox{.8}{
					\begin{tikzpicture}[NFA, node distance=1cm and 2.5cm]
						\node[state, initial] (l0) {$\qinit$};
						\node[state, right = of l0] (l1) {$\qi{1}$};
						\node[state, right = of l1] (l2) {$\qi{2}$};
						\node[state, final, below = of l1] (lf) {$\qfinal$};
						
						\path (l0) edge[bend left=65] node[above, align=center]{$\styleact{t}$\\$\styleact{(1,0) + \alpha(1,1)}$} (l2);
						\path (l0) edge node[above, align=center]{$\styleact{t}$\\$\styleact{(0,1) + \alpha(1,1)}$} (l1);
						\path (l1) edge[bend left] node[above, align=center]{$\styleact{t}$ \\ $\styleact{\alpha(1,1)}$} (l2);
						\path (l2) edge[loop right] node[below, align=center]{$\styleact{t}$\\$\styleact{(0,0)}$} (l2);
						\path (l2) edge[bend left] node[above, align=center]{$\styleact{t}$\\$\styleact{(0,1)}$} (l1);
						\path (l1) edge[loop above] node[ align=center]{$\styleact{t}$\\$\styleact{(0,2) + \alpha(1,1)}$} (l2);
						
						\path (l1) edge node[left, align=center]{$\tickFin$\\$\styleact{(1,2) + \alpha(1,1)}$} (lf);
						\path (l0) edge[bend right=45] node[below, align=center]{$\tickFin$\\$\styleact{(1,1) + \alpha(1,1)}$} (lf);
						\path (l2) edge[bend left=45] node[below, align=center]{$\tickFin$\\$\styleact{(1,1)}$} (lf);

					\end{tikzpicture}
				}
				
				\caption{$\ParikbB(\TN)$}
				\label{figure:example-DE-by-block-N}
			\end{subfigure}
			\hfill{}
			\begin{subfigure}[b]{.28\columnwidth}
				\centering
				\scalebox{.8}{
					\begin{tikzpicture}[NFA, node distance=1cm and 1cm]
						\node[state, initial] (l0) {$\qinit'$};
						\node[state, below = of l0] (l1) {$\qi{1}'$};
						\node[state, final, below = of l1] (lf) {$\qfinal'$};

						\path (l0) edge[bend left] node[right, align=center]{$\styleact{t}$\\$\styleact{\alpha(1,0)}$} (l1);
						\path (l1) edge[bend left] node[left, align=center]{$\styleact{t}$\\$\styleact{\alpha(0,1)}$} (l0);
						\path (l1) edge node[right, align=center]{$\tickFin$\\$\styleact{(1,0) + \alpha(0,1)}$} (lf);
					\end{tikzpicture}
				}
				
				\caption{$\ParikbB(\TM)$}
				\label{figure:example-DE-by-block-M}
			\end{subfigure}
			
			\begin{subfigure}[b]{\linewidth}
				\centering
				\scalebox{.8}{
					\begin{tikzpicture}[NFA, node distance=1cm and 1.5cm]
						\node[state, initial] (l0s0) {$\qinit,\qinit'$};
						\node[state, right = of l0s0] (l1s1) {$\qi{1},\qi{1}'$};
						\node[state, right = of l1s1] (l2s0) {$\qi{2},\qi{0}'$};
						\node[state, below = of l1s1] (l1s0) {$\qi{1},\qi{0}'$};
						\node[state, below = of l2s0] (l2s1) {$\qi{2},\qi{1}'$};
						\node[state, final, right = of l2s0] (lfsf) {$\qfinal,\qfinal'$};
						
						\path (l0s0) edge[dotted] node[above, align=center]{$\styleact{t}$} (l1s1);
						\path (l0s0) edge[bend right=65] node[left, align=center]{$\styleact{t}$} (l2s1);
						
						\path (l1s0) edge[bend left, dotted] node[below, align=center]{$\styleact{t}$} (l2s1);
						\path (l1s1) edge[bend left, dotted] node[below, align=center]{$\styleact{t}$} (l2s0);
						\path (l2s0) edge[bend left, dotted] node[below, align=center]{$\styleact{t}$} (l1s1);
						\path (l2s1) edge[bend left] node[above, align=center]{$\styleact{t}$} (l1s0);
						
						\path (l1s0) edge[bend left, dotted] node[left, align=center]{$\styleact{t}$} (l1s1);
						\path (l1s1) edge[bend left] node[right, align=center]{$\styleact{t}$} (l1s0);
						\path (l2s0) edge[bend left] node[right, align=center]{$\styleact{t}$} (l2s1);
						\path (l2s1) edge[bend left] node[left, align=center]{$\styleact{t}$} (l2s0);
						
						\path (l1s1) edge[bend left] node[below right, align=center]{$\tickFin$} (lfsf);
						\path (l2s1) edge[bend right] node[below, align=center]{$\tickFin$} (lfsf);
					\end{tikzpicture}
				}
				
				\caption{Synchronised product of $\ParikbB(\TN)$ and $\ParikbB(\TM)$ }
				\label{figure:example-DE-sync}
			\end{subfigure}
			
			\caption{Illustrating the Parikh by block construction}
			\label{figure:example-NFA-par_bloc}
			\end{figure}
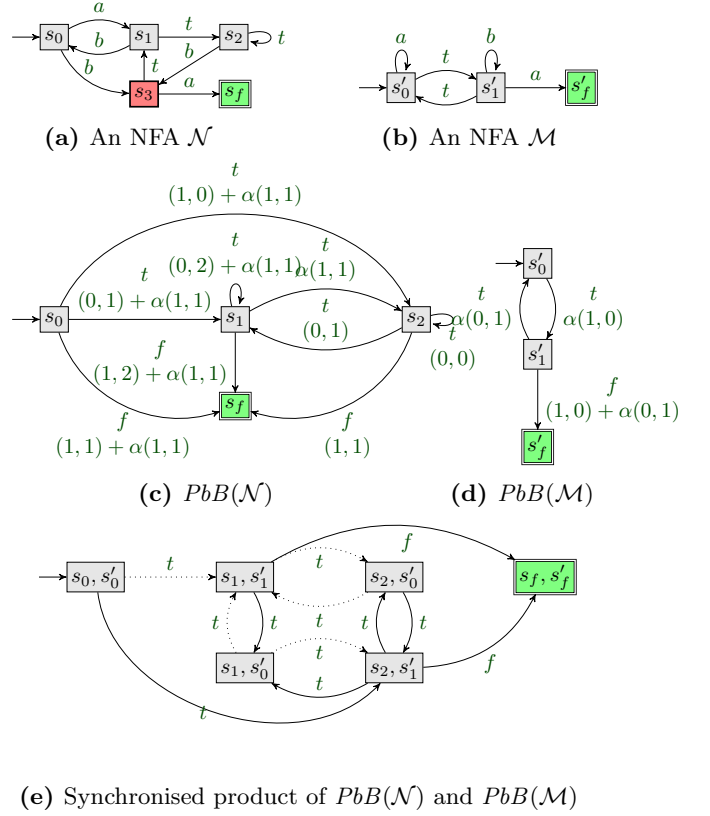
	\begin{example}\label{example-Parikh-by-blocks}
		Let us illustrate our Parikh by block construction.
		Consider the NFAs $\TN,\TM$ in \cref{figure:example-DE-NFA-N,figure:example-DE-NFA-M} (which could be the region automata of $\Apriv{\TA'}$ and~$\Apub{\TA'}$ for some discrete positive META~$\TA$, with $a$ and $b$ denoting for example~$\tickIncr_1$ and~$\tickIncr_2$).
		Considering a Parikh by block construction over $\{ \tickTime, \tickFin \}$, we give $\ParikbB(\TN)$ and $\ParikbB(\TM)$ in \cref{figure:example-DE-by-block-N,figure:example-DE-by-block-M} respectively, where the semilinear set $(0,2) + \alpha (1,1)$ denotes ``exactly 2~$b$s plus an arbitrarily (yet identical) number of $a$s and~$b$s''.
		Their synchronised product is given in \cref{figure:example-DE-sync}, where labels (Parikh images) are omitted;
		a dashed line denotes an empty intersection of the two associated Parikh images, while a plain line denotes a non-empty intersection.
		For example, the edge from $(\qinit, \qinit')$ to $(\qi{1}, \qi{1}')$ is labelled with ``$(0,1) + \alpha(1,1)$'' and ``$\alpha(1, 0)$''; the intersection of these two semilinear sets is empty.
		In contrast, the edge from $(\qinit, \qinit')$ to $(\qi{2}, \qi{1}')$ is labelled with ``$(1,0) + \alpha(1,1)$'' and ``$\alpha(1, 0)$''; the intersection is clearly non-empty, for example $(1,0)$ is a solution (denoting one~$a$, and no~$b$).
		Finally, we can exhibit a path over edges with a non-empty intersection---denoting that the (fictitious) original META~$\TA$ is $\exists$-DE-opaque.
	\end{example}
	\subsection{Observing the buffered energy changes every time unit in discrete METAs}

	We now consider an attacker able to see all the evolution of energy and their order, but with no knowledge of the exact timestamps between two consecutive time units.
	That is, the attacker can make a difference between an increase of~2 energy units followed by a decrease of~3 (within the same time unit) compared to a decrease of~3 followed by an increase of~2---yet, they will not be able to know the exact timestamp of these events.
	This can be seen as the natural setting in which a buffer storing all energy changes is read every time unit by the attacker; this is also reminiscent of the recent setting of the weakened attacker able to observe absolute time only at integer times (``intruder with discrete-time precision'') in~\cite{DQY25}.

	Let $\emptyseq$ denote the empty sequence.
	Given $\abstime \in \setN \setminus \{ 0 \}$, let $\EnerLevelBufferT(\varrun, \abstime)$ be the sequence of energy valuations defined as follows:
	when $\abstime > 1$, this is the sequence of energy valuations obtained from the subrun of~$\varrun$ by keeping only timestamps in~$(\abstime-1 , \abstime]$ which modify at least one energy variable;
	the case $\abstime = 1$ is obtained by keeping the timestamps in~$[0,1]$.
	(See \cref{definition:bEL} in \cref{appendix:definition:bEL} for a formal definition.)
	Then, the buffered discrete energy observation of~$\varrun$ is:

	\[ \buffDiscreteEnergyObs(\varrun) = \EnerLevelBufferT(\varrun, 1) , \dots, \EnerLevelBufferT(\varrun, k) \text { with } k = \ceiling{\sum_{j=0}^{n-1} d_i} \text{.} \]

	A discrete guarded META~$\TA$ is \emph{weakly buffered-discrete-energy opaque (weakly bDE-opaque)} when, for each run $\varrun \in \PrivVisit{\TA}$, there exists a run $\varrun' \in \PubVisit{\TA}$ such that $\buffDiscreteEnergyObs(\varrun) = \buffDiscreteEnergyObs(\varrun')$.
	Full and existential counterparts are defined similarly.
	We define the buffered-discrete-energy opacity problem as expected.

	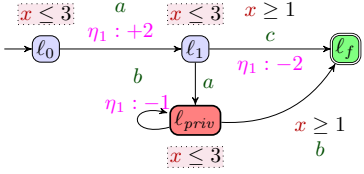
\begin{figure}[tb]
		\centering
			\scalebox{.8}{
			\begin{tikzpicture}[PTA, node distance=.7cm and 2cm]
				\node[location, initial] (l0) {$\locinit$};
				\node[location, right = of l0] (l1) {$\loci{1}$};
				\node[location, private, below = of l1] (lpriv) {$\locpriv$};
				\node[location, final, right = of l1] (lF) {$\locfinal$};

				\node[invariantNorth] at (l0.north) {$\clock \leq 3$};
				\node[invariantNorth] at (l1.north) {$\clock \leq 3$};
				\node[invariantSouth] at (lpriv.south) {$\clock \leq 3$};

				\path (l0) edge node[above, align=center]{$\styleact{a}$\\$\styleenergy{\enervar_1 : +2}$} (l1);
				\path (l1) edge node[right, align=center]{$\styleact{a}$} (lpriv);
				\path (l1) edge node[above, align=center]{$\clock \geq 1$ \\ $\styleact{c}$} node[below, align=center]{$\styleenergy{\enervar_1 : -2}$} (lF);
				\path (lpriv) edge[loop left] node[above, align=center]{%
					$\styleact{b}$ \\ $\styleenergy{\enervar_1 : -1}$} (lpriv);
				\path (lpriv) edge[bend right] node[below right, align=center]{$\clock \geq 1$ \\ $\styleact{b}$} (lF);
			\end{tikzpicture}
			}

		\caption{A discrete ETA}
		\label{figure:example-bDEO-A}
	\end{figure}
	\begin{example}\label{example-bDEO}
		Consider the following runs of the discrete ETA~$\TA$ in \cref{figure:example-bDEO-A}.
		Let $\varrun_1$ be the private run $(\locinit, 0, 0) \FlecheConcrete{(1.5, \styleact{a})} (\loci{1}, 1.5, 2) \FlecheConcrete{(1, \styleact{a})} (\locpriv, 2.5, 2)\FlecheConcrete{(0.1, \styleact{b})} (\locpriv, 2.6, 1) \FlecheConcrete{(0.1, \styleact{b})} (\locpriv, 2.7, 0)  \FlecheConcrete{(0.1, \styleact{b})} (\locfinal, 2.8, 0)$.
		Let $\varrun_2$ be the public run $(\locinit, 0, 0) \FlecheConcrete{(1.5, \styleact{a})} (\loci{1}, 1.5, 2) \FlecheConcrete{(1.3, \styleact{c})} (\locfinal, 2.8, 0)$.
		We have $\DiscreteEnergyObs(\varrun_1) = \DiscreteEnergyObs(\varrun_2) = 0 , 2, 0$, and therefore $\TA$ is $\exists$-DE-opaque.
		Now, let us address bDE-opacity: we have $\buffDiscreteEnergyObs(\varrun_1) = \emptyseq , (2) , (1,0) $
		while
		$\buffDiscreteEnergyObs(\varrun_2) = \emptyseq , (2) , (0) $;
		it can actually be shown that $\TA$ is \emph{not} $\exists$-bDE-opaque.
	\end{example}
	\begin{figure}[tb]
		{\centering
			\scalebox{.8}{
			\begin{tikzpicture}[PTA, node distance=1cm and 1.7cm]
				\node[location, initial] (l1) {$\locinit$};
				\node[location, urgent, right = of l1] (l1') {$\locinit'$};
				\node[location, urgent, right = of l1'] (l1'') {$\locinit''$};

				\node[location, private, below = of l1'] (lpriv) {$\locpriv$};
				\node[location, urgent, below = of lpriv] (lpriv') {$\locpriv'$};

				\node[location, right = of l1''] (lF) {$\locfinal$};
				\node[location, final, right = of lF] (lF') {$\locfinal'$};

				\node[invariantNorth] at (l1.north) {$\clock \leq 3$};
				\node[invariantNorth] at (l1'.north) {$\clockZeroTime = 0$};
				\node[invariantNorth] at (l1''.north) {$\clockZeroTime = 0$};
				\node[invariantNorth] at (lpriv.north) {$\clock \leq 3$};
				\node[invariantEast] at (lpriv'.east) {$\clockZeroTime = 0$};
				\node[invariantNorth] at (lF.north) {$\clockZeroTime = 0$};

				\path (l1) edge[loop below] node[below left, align=center]{$\clockTickOne = 1$ \\ $\tickTime$ \\ $\clockTickOne \assign 0$} (l1);

				\path (l1) edge[bend right] node[right, align=center]{$0 < \clockTickOne $ \\ $\silentaction$} (lpriv);
				\path (l1) edge node[above, align=center]{$\clock \geq 1$ \\ $\land 0 < \clockTickOne $ \\ $\styleact{\tickIncr_1}$} node[below, align=center]{$\clockZeroTime \assign 0$} (l1');

				\path (l1') edge node[above, align=center]{$\styleact{\tickIncr_1}$} node[below]{$\clockZeroTime \assign 0$} (l1'');
				\path (l1'') edge node[above, align=center]{$\styleact{\tickFin}$} node[below]{$\clockZeroTime \assign 0$} (lF);

				\path (lpriv) edge[bend left] node[align=center]{$0 < \clockTickOne $ \\ $\styleact{\tickIncr_1}$ \\ $\clockZeroTime \assign 0$} (lpriv');
				\path (lpriv') edge[bend left] node[align=center]{$\styleact{\tickFin}$} (lpriv);
				\path (lpriv) edge[out=220,in=200,looseness=15] node[below left, align=center]{$\clockTickOne = 1$ \\ $\tickTime$ \\ $\clockTickOne \assign 0$} (lpriv);

				\path (lpriv) edge[bend right] node[below right, align=center]{$\clock \geq 1$ \\ $\land 0 < \clockTickOne $  \\ $\silentaction$ \\ $\clockZeroTime \assign 0$} (lF);

				\path (lF) edge[bend right] node[below, align=center]{$\clockTickOne = 1$ \\ $\tickTime$} (lF');
				\path (lF) edge[bend left] node[align=center]{$0 < \clockTickOne < 1$ \\ $\silentaction$} (lF');
			\end{tikzpicture}
			}

		}

		\caption{Exemplifying the construction in the proof of \cref{theorem:bDE-Opacity}}
		\label{figure:example-bDEO}
	\end{figure}
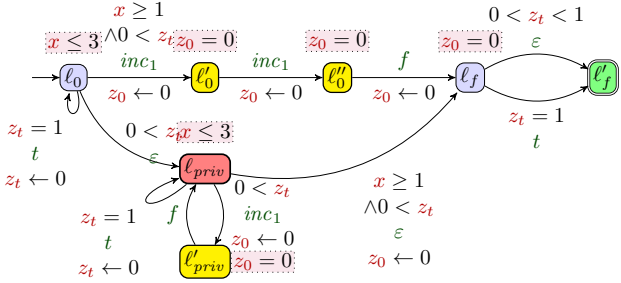
	\begin{restatable}{theorem}{theorembDEopacityPositive}\label{theorem:bDE-Opacity}
		For each $\existsWeakFull \in \{ \exists, \text{weak}, \text{full} \}$, $\existsWeakFull$-bDE-opacity can be decided in \EXPSPACE{} for discrete positive METAs,
		and in \twoEXPSPACE{} for discrete positive guarded METAs.
	\end{restatable}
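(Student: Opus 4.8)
The plan is to reduce $\existsWeakFull$-bDE-opacity, for each of the three variants, to a language comparison between two finite automata whose transitions carry semilinear labels, built — as for \cref{theorem:ET-EN-fullweak-discrete-positive-guarded-METAs,theorem:DE-Opacity-META} — from the region automata of the public- and private-runs METAs of a suitably transformed input. The guarded case will be reduced to the guard-free one at the end via \cref{proposition:removing-guards-discrete} (at the cost of one exponential), so I first treat discrete positive METAs without energy guards.

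The first step is to build $\TA'$ from $\TA$ along the lines of the proof of \cref{theorem:ET-EN-fullweak-discrete-positive-guarded-METAs}: a global ticking clock $\clockTickOne$ with invariant $\clockTickOne \leq 1$ and an action $\tickTime$ resetting it every time unit, together with the usual bookkeeping forcing a run to end either on a unit boundary (integer total duration) or with a silent step (otherwise), recorded in the accepting states, so that the number $\ceiling{\runduration{\varrun}}$ of observed blocks is recoverable. The new ingredient, compared with \cref{theorem:ET-EN-fullweak-discrete-positive-guarded-METAs,theorem:DE-Opacity-META}, concerns energy-modifying edges: such an edge produces exactly one observed valuation, so it is split in $0$-time (via a fresh clock, as in the proof of \cref{theorem:fullweak-discrete-positive-guarded-ETAs}) into a sequence of unit-increment edges labelled $\tickIncr_i$ (one per unit added to $\enervari{i}$) followed by a single edge carrying a fresh commit action $\tickFin$; edges not modifying energy are relabelled $\silentaction$. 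This $\TA'$ is polynomial in $\TA$. Since, in a discrete META, energies start at $\enervalZero$ and only change through these updates, the valuation after the $j$-th energy-modifying edge is the cumulative sum of the first $j$ committed increment vectors; hence $\buffDiscreteEnergyObs(\varrun)$ corresponds (up to the routine bookkeeping of empty blocks and of whether the run ends on a boundary) exactly to the word over $\{\tickFin,\tickTime\}$ read by the counterpart of $\varrun$ in $\TA'$ together with, for each $\tickFin$, the \emph{Parikh image} over $\{\tickIncr_1,\dots,\tickIncr_{|\Energies|}\}$ of the increments it commits — the internal order of those unit increments being, crucially, invisible to the attacker (a single update being atomic) and therefore Parikh-ified.

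Next, I build $\Apub{\TA'}$, $\Apriv{\TA'}$ and their region automata (NFAs over $\{\tickIncr_1,\dots,\tickIncr_{|\Energies|},\tickFin,\tickTime\}$, exponential in $\TA$), and extract from each a ``$\ParikbB$''-style automaton over $\{\tickFin,\tickTime\}$ as in \cref{theorem:DE-Opacity-META}, but now cutting a block at \emph{every} $\tickFin$ and $\tickTime$ rather than only at time ticks: each transition is additionally labelled by a semilinear set over $\setN^{|\Energies|}$ describing the possible numbers of $\tickIncr_i$ accumulated since the previous marker ($\{\enervalZero\}$ on the $\tickTime$-transitions). Reading this structure as an acceptor over the infinite alphabet $\setN^{|\Energies|}\cup\{\tickTime\}$ — ``$\tickFin$ committing a vector $\vec n$'' being the data letter $\vec n$ — its language is precisely the set of $\buffDiscreteEnergyObs$-observations of the public (resp.\ private) runs of $\TA$. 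Deciding bDE-opacity then amounts to non-emptiness of the intersection of the two languages ($\exists$), inclusion of the private in the public one (weak), or both inclusions (full). The essential observation is that, unlike in DE-opacity — where a block's observation is the \emph{cumulative} energy, so the semilinear constraints on successive blocks telescope through a running sum, which is why \cref{theorem:DE-Opacity-META} only handles the $\exists$ variant for METAs — here every data letter is constrained only by the semilinear label of the single transition reading it. These automata thus behave as symbolic automata over the effective Boolean algebra of semilinear subsets of $\setN^{|\Energies|}$: they are closed under a subset-construction determinisation whose transitions get relabelled by Boolean combinations (still semilinear) of the original labels, hence — being then deterministic and complete — under complementation; intersection is a product with labels intersected; and non-emptiness is decidable (prune the transitions whose label is empty, which is existential-Presburger satisfiability~\cite{BHK17}, then test reachability of an accepting state). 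All three variants are therefore decidable. For the complexity: the region automata are exponential in $\TA$, their semilinear labels polynomial in them, and the inclusion / non-emptiness tests can be run with on-the-fly determinisation in polynomial space in these objects, yielding \EXPSPACE{}; composing with \cref{proposition:removing-guards-discrete} in the guarded case adds one exponential, hence \twoEXPSPACE{}.

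The step I expect to be the main obstacle is the design of $\TA'$: it must encode over a finite alphabet, simultaneously, (i) the \emph{order} of the energy-modifying edges within a time unit, which the attacker observes, and (ii) the \emph{irrelevance} of the order among the unit increments that merely simulate a single large update, which the attacker does not observe; this is exactly what the $\tickFin$ commit marker plus Parikh-ification within each $\tickFin$-delimited segment reconcile, but it also forces careful handling of empty final blocks, of integer vs.\ non-integer total durations, and of updates performed at time $0$ (as in the footnote to \cref{theorem:ET-EN-fullweak-discrete-positive-guarded-METAs}). The conceptual point to keep in mind is that bDE-opacity — where the attacker sees \emph{more} than under DE-opacity — is nonetheless \emph{more} tractable for METAs, precisely because the finer observation turns each observed value into a locally-constrained letter, placing the observation languages in a determinisable class that the cumulative DE-observation escapes.
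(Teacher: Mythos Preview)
Your proposal is correct in spirit and would yield decidability, but it takes an unnecessarily elaborate detour compared with the paper's argument.

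The paper's key observation, which you miss, is that the Parikh-by-block machinery is \emph{not needed} for bDE-opacity. When you split a single atomic update $\enerupdates$ into unit increments, you are free to fix a \emph{canonical} ordering of the $\tickIncr_i$'s (say, all $\tickIncr_1$'s first, then all $\tickIncr_2$'s, etc.). Since the attacker observes energy \emph{valuations} (vectors), the change induced by one edge is simply its update vector, and the canonical word over $\{\tickIncr_1,\dots,\tickIncr_{|\Energies|}\}\cdot\tickFin$ is a bijection with that vector. Hence the untimed language of $\TA'$ is already in one-to-one correspondence with $\buffDiscreteEnergyObs$: there is nothing to Parikh-ify. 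The paper then reduces $\existsWeakFull$-bDE-opacity directly to intersection/inclusion of the \emph{regular} languages $\Language(\RegionAutomaton{\Apub{\TA'}})$ and $\Language(\RegionAutomaton{\Apriv{\TA'}})$, which is textbook PSPACE in the (exponential-size) region automata, giving \EXPSPACE{} cleanly.

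Your route via symbolic automata over semilinear predicates does work for decidability, and your closing remark about why bDE is \emph{more} tractable than DE for METAs is exactly right. But the complexity accounting is shakier than you suggest: the semilinear labels you attach are Parikh images of sub-NFAs of the region automaton, and while these can be represented by existential Presburger formulas of polynomial size, the on-the-fly determinisation of your symbolic automaton produces transitions labelled by Boolean combinations of such formulas, and checking their satisfiability is not obviously polynomial space in the region automaton. So your ``yielding \EXPSPACE{}'' needs more care, whereas the paper's direct regular-language reduction makes the bound immediate.
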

	\begin{proof}
		See \cref{appendix:proof:theorem:bDE-Opacity}.
	\end{proof}

	Our last result is concerned with non-necessarily positive discrete ETAs, and makes use of operations over both regular and context-free languages.

	\begin{restatable}{theorem}{theorembDEopacity}\label{theorem:bDE-Opacity-ETAs}
		For each $\existsWeakFull \in \{ \exists, \text{weak}, \text{full} \}$, $\existsWeakFull$-bDE-opacity can be decided in \twoEXPSPACE{} for discrete ETAs.
	\end{restatable}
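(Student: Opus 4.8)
The plan is to reuse the timed-to-untimed reduction behind \cref{theorem:bDE-Opacity} together with the stack-based handling of decrements behind \cref{theorem:EN:discrete-ETAs}, and then to exploit the fact that, with a \emph{single} energy variable, the energy non-negativity requirement is a \emph{fixed} context-free language that only ever needs to be \emph{intersected} into otherwise regular constructions --- so that, in contrast with \cref{theorem:fullweak-discrete-positive-guarded-ETAs} where Parikh images sufficed, we can afford exact word comparisons without colliding with the undecidability of context-free language inclusion.

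First I would build the automaton~$\TA''$ exactly as in the proof of \cref{theorem:bDE-Opacity}, but for a single energy variable: split every update $\pm \updateConstant$ into $\updateConstant$ unit updates in $0$-time, add the ticking clock~$\clockTickOne$ with the $\tickTime$ actions that separate time units, mark each original update batch with a $\tickFin$ action (so that exactly the energy valuations the attacker records are witnessed), and --- since decrements are now allowed --- label unit decrements with~$\tickDecr$ and unit increments with~$\tickIncr$ as in \cref{theorem:EN:discrete-ETAs}. Building $\Apriv{\TA''}$, $\Apub{\TA''}$ and their region automata yields two NFAs $N_\mathit{priv} = \RegionAutomaton{\Apriv{\TA''}}$ and $N_\mathit{pub} = \RegionAutomaton{\Apub{\TA''}}$ over the fixed alphabet $\{\tickIncr, \tickDecr, \tickTime, \tickFin\}$ such that, writing $V$ for the one-counter language ``every prefix contains at least as many $\tickIncr$ as $\tickDecr$'' (i.e.\ the energy never drops below~$0$), the set of buffered-discrete-energy observations of the private (resp.\ public) runs of~$\TA$ is exactly $\Language(N_\mathit{priv}) \cap V$ (resp.\ $\Language(N_\mathit{pub}) \cap V$): a word records, block by block (blocks delimited by~$\tickTime$) and batch by batch (batches delimited by~$\tickFin$), precisely the sequence of energy valuations visible to the attacker, and the only ingredient the region automaton cannot itself enforce is the unbounded non-negativity of the energy, which~$V$ contributes.

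Then bDE-opacity reduces to comparing these two languages. For $\exists$-bDE-opacity, $\big(\Language(N_\mathit{priv}) \cap \Language(N_\mathit{pub})\big) \cap V$ is a regular language intersected with the fixed one-counter language~$V$, hence a (one-counter) context-free language, and context-free emptiness is decidable. For weak-bDE-opacity, I use that $\Language(N_\mathit{priv}) \cap V \subseteq \Language(N_\mathit{pub}) \cap V$ iff $\Language(N_\mathit{priv}) \cap \overline{\Language(N_\mathit{pub})} \cap V = \emptyset$ --- the extra ``$\cap V$'' on the complemented side vanishes because the left-hand side is already contained in~$V$ --- so the complement is taken only of the \emph{regular} automaton $N_\mathit{pub}$, and one is again left with a context-free emptiness test; full-bDE-opacity is the conjunction of two such inclusions. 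For the complexity, the region automata are exponential in~$\TA$ while the one-counter automaton for~$V$ is of constant size; the $\exists$ case needs only a product of NFAs followed by a context-free emptiness test (polynomial in the automaton), hence exponential overall, whereas the weak and full cases additionally determinise and complement the exponential-size $N_\mathit{pub}$, costing one more exponential --- so all three variants fall in \twoEXPSPACE{}.

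The main obstacle is precisely that buffered-discrete-energy observation is sensitive to the \emph{order} of energy changes within a time unit, so the observation languages must be compared as languages rather than via Parikh images as in \cref{theorem:fullweak-discrete-positive-guarded-ETAs,theorem:ET-EN-fullweak-discrete-positive-guarded-METAs}; since these languages are context-free --- energy non-negativity being a Dyck-style constraint --- the whole argument hinges on arranging that inclusion and complementation are only ever performed on the \emph{regular} side, which is possible here exactly because there is a single energy variable and hence a single fixed one-counter constraint that need only be intersected in.
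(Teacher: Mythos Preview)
Your proposal is correct and follows essentially the same approach as the paper: build the region automata of $\Apriv{\TA''}$ and $\Apub{\TA''}$ over $\{\tickIncr,\tickDecr,\tickTime,\tickFin\}$, intersect with the fixed context-free non-negativity language (your~$V$, the paper's~$\LanguageIncrDecr$), and reduce weak/full opacity to context-free emptiness by complementing only the \emph{regular} region automaton. Your explicit justification of the equivalence $\Language(N_\mathit{priv}) \cap V \subseteq \Language(N_\mathit{pub}) \cap V$ iff $\Language(N_\mathit{priv}) \cap \overline{\Language(N_\mathit{pub})} \cap V = \emptyset$ and your slightly sharper complexity for the $\exists$ case are welcome clarifications, but the underlying argument is the paper's.
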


	Note that the case of discrete (guarded) METAs is not covered by \cref{theorem:bDE-Opacity,theorem:bDE-Opacity-ETAs} and remains open.
	\section{Opacity for integer-switching METAs}\label{section:ISMETA}

	After considering discrete METAs in the former sections, we finally consider here a (first) case in which energy variables can evolve continuously.
	A guarded META is said \emph{integer-switching (IS)} whenever all changes of energy rates are made on an (absolute) integer time.
	Formally, there exists no run $(\loc_0, \clockval_0, \enerval_0) \FlecheConcrete{(d_0, \edge_0)} (\loc_1, \clockval_1, \enerval_1) \FlecheConcrete {(d_1, \edge_1)} \cdots \FlecheConcrete{(d_{m-1}, \edge_{m-1})} (\loc_m, \clockval_m, \enerval_m)$ such that $\exists 0 \leq i \leq m-1 , \exists \enervar \in \Energies$ such that $\rates(\loc_i)(\enervar) \neq \rates(\loc_{i+1})(\enervar)$ and $\sum_{j = 0}^{i} d_j \notin \setN$.
	Note that both discrete energy updates and location changes (not modifying the rates) are allowed to occur at non-integer times, as well as any clock reset, keeping this model entirely dense-time.
	While this condition is essentially semantic, it can easily be checked on unguarded METAs by checking the region graph.
		For example, the META in \cref{figure:IS-META} is~IS.
	In addition, a guarded META is \emph{integer execution-time (iET)} whenever any final location is reached only on integer absolute times.
		For example, the META in \cref{figure:IS-META} is \emph{not} iET, due to the guard $\clock \geq 1$ to~$\locfinal$.

	We now introduce a transformation from IS-METAs to METAs that will allow us to transfer some decidability results.
	\begin{enumerate}
		\item we add a global ``ticking clock'' $\clockTickOne$ reset every time unit via action~$\tickTime$ on any location, and a global invariant $\clockTickOne \leq 1$ (not depicted in our figures), as in the proof of \cref{theorem:ET-EN-fullweak-discrete-positive-guarded-METAs};
		\item we add to any guard the additional clock constraint $0 \leq \clockTickOne < 1$, enforcing that outgoing guards are evaluated \emph{after} the discrete energy updates added by our construction;
		\item every time unit (when $\clockTickOne = 1$), we turn the location rate into a discrete update of all energy variables.
	\end{enumerate}
	This transformation is linear in~$\TA$.
		See \cref{figure:ISMETA} for an example of our construction.
	Our transformation clearly preserves the final energies for iET-IS-METAs, as well as the energy every time unit.
	This gives the following results.

	\begin{proposition}\label{proposition:IS-METAs}
		\begin{enumerate}
			\item For each $\existsWeakFull \in \{ \exists, \text{weak}, \text{full} \}$ and $\ENorETEN \in \{ \text{EN}, \text{ET-EN} \}$,
				$\existsWeakFull$-$\ENorETEN$-opacity can be decided in \twoEXPSPACE{} for positive iET-IS-METAs.
			\item For each $\existsWeakFull \in \{ \exists, \text{weak}, \text{full} \}$
			and $\ENorETEN \in \{ \text{EN}, \text{ET-EN} \}$,
			$\existsWeakFull$-$\ENorETEN$-opacity is decidable for iET-IS-ETAs.
			\item For each $\existsWeakFull \in \{ \exists, \text{weak}, \text{full} \}$, $\existsWeakFull$-DE-opacity can be decided in \EXPSPACE{} for positive iET-IS-ETAs\ifdefined\withParikhbyBlocks{}; and
			$\exists$-DE-opacity can be decided in \EXPSPACE{} for positive iET-IS-METAs\fi.
		\end{enumerate}
	\end{proposition}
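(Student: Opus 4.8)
The plan is to reduce each case to its discrete counterpart through the transformation described just above the statement, which turns an IS-META~$\TA$ into a discrete META~$\TA'$ of size linear in~$\TA$: all rates are set to~$0$, a fresh ticking clock~$\clockTickOne$ with global invariant $\clockTickOne \leq 1$ is reset every time unit via a new action~$\tickTime$, every original guard is conjoined with $0 \leq \clockTickOne < 1$, and at each tick (when $\clockTickOne = 1$) the former location rate is applied as a discrete update. First I would observe that $\TA'$ is again discrete, that it is positive whenever $\TA$ is (integer rates become integer updates, original integer updates are untouched), that it is an ETA whenever $\TA$ is (the number of energy variables is unchanged), and that membership in the IS (resp.\ iET) subclass can be pre-checked on the region graph (resp.\ by a reachability test distinguishing integer and non-integer execution times), so restricting the input to iET-IS instances is effective.

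The core step is a run correspondence showing that $\TA$ and $\TA'$ have the same set of final energies, the same set of pairs (execution time, final energy), and the same set of discrete energy observations, when restricted to private (resp.\ public) runs. The key use of the IS hypothesis is that, on any time unit $(k,k+1)$, the rate of every energy variable is constant along the run (rates may change only at integer times), so in~$\TA$ the energy gained over that unit is exactly that common rate plus the discrete updates fired inside the unit. In~$\TA'$, no continuous evolution takes place, the original (now rate-free) edges fire with $\clockTickOne \in [0,1)$ and hence contribute the same discrete updates, and the $\tickTime$ transition at time~$k+1$---forced to precede any original edge at that integer instant by the guard $0 \leq \clockTickOne < 1$---adds the rate of the location reached at the end of the unit, which by the IS condition equals the common rate of $(k,k+1)$. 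An induction on time units then shows that at every integer time the energy valuation of the $\TA'$-run equals that of the corresponding $\TA$-run; conversely, every $\TA'$-run erases (forgetting the $\tickTime$ moves and reinterpreting the tick updates as continuous slope) to a $\TA$-run with the same behaviour. Since the runs are iET, the final location is reached on an integer time, so final energies and execution times are preserved; and the discrete energy observation is by definition exactly the sequence of energy valuations at integer times, hence also preserved. Consequently $\TA$ is $\existsWeakFull$-$\ENorETEN$-opaque (resp.\ $\existsWeakFull$-DE-opaque) iff $\TA'$ is.

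It then remains to invoke the discrete results already established: \cref{theorem:fullweak-discrete-positive-guarded-ETAs,theorem:ET-EN-fullweak-discrete-positive-guarded-METAs} give \twoEXPSPACE{} for $\existsWeakFull$-$\ENorETEN$-opacity over discrete positive METAs, yielding the first item; \cref{theorem:EN:discrete-ETAs} gives decidability over discrete ETAs (with both increments and decrements), yielding the second item; and \cref{theorem:DE-Opacity} (resp.\ \cref{theorem:DE-Opacity-META}) gives \EXPSPACE{} for $\existsWeakFull$-DE-opacity over discrete positive ETAs (resp.\ the \EXPSPACE{} bound for $\exists$-DE-opacity over discrete positive METAs obtained by inspecting its proof), yielding the third item. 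As the transformation is linear, all the stated complexity bounds carry over.

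I expect the main obstacle to be the bookkeeping in the run correspondence: getting the scheduling of the $\tickTime$ action right with respect to the discrete updates and location changes occurring inside a time unit, and keeping the ``which rate is applied at which tick'' offset consistent so that the invariant ``the energy of $\TA'$ at integer time~$k$ equals the energy of $\TA$ at time~$k$'' is genuinely maintained. The special handling of time~$0$ (updates at time~$0$, treated by a preliminary copy with invariant $\clockTickOne = 0$ as in the proof of \cref{theorem:ET-EN-fullweak-discrete-positive-guarded-METAs}) and of the last transition into the final location will also need a line of care.
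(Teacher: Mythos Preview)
Your proposal is correct and follows exactly the paper's approach: the paper's proof is a one-liner citing \cref{theorem:fullweak-discrete-positive-guarded-ETAs,theorem:ET-EN-fullweak-discrete-positive-guarded-METAs,theorem:EN:discrete-ETAs,theorem:DE-Opacity,theorem:DE-Opacity-META}, relying on the sentence just before the proposition stating that the transformation ``clearly preserves the final energies for iET-IS-METAs, as well as the energy every time unit''. You have simply unpacked this ``clearly'' into the run-correspondence argument (which the paper leaves implicit) and then invoked the same five theorems in the same r\^oles.
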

	\begin{proof}
		\ifdefined\withParikhbyBlocks
			From \cref{theorem:fullweak-discrete-positive-guarded-ETAs,theorem:ET-EN-fullweak-discrete-positive-guarded-METAs,theorem:EN:discrete-ETAs,theorem:DE-Opacity,theorem:DE-Opacity-META}.
		\else
			From \cref{theorem:fullweak-discrete-positive-guarded-ETAs,theorem:ET-EN-fullweak-discrete-positive-guarded-METAs,theorem:EN:discrete-ETAs,theorem:DE-Opacity}.
		\fi
	\end{proof}
	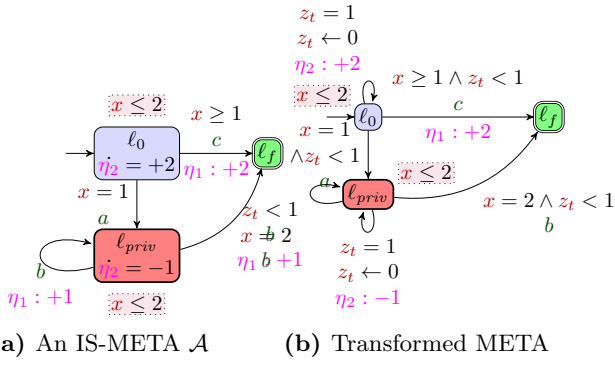
\begin{figure}[tb]
		\begin{subfigure}[b]{.35\columnwidth}
		\centering
			\scalebox{.8}{
			\begin{tikzpicture}[PTA, node distance=.8cm and 1.2cm]
				\node[location, initial] (l0) {$\locinit$ \\ $\dot{\enervari{2}} = +2$};
				\node[location, private, below =of l0, align=center] (lpriv) {$\locpriv$ \\ $\dot{\enervari{2}} = -1$ };
				\node[location, final, right =of l0] (lF) {$\locfinal$};

				\node[invariantNorth] at (l0.north) {$\clock \leq 2$};
				\node[invariantSouth] at (lpriv.south) {$\clock \leq 2$};

				\path (l0) edge node[left, align=center]{$\clock = 1$ \\ $\styleact{a}$} (lpriv);
				\path (l0) edge node[above, align=center]{$\clock \geq 1$ \\ $\styleact{c}$} node[below, align=center]{$\styleenergy{\enervar_1 : +2}$} (lF);
				\path (lpriv) edge[loop left] node[below, align=center]{$\styleact{b}$ \\ $\styleenergy{\enervar_1 : +1}$} (lpriv);
				\path (lpriv) edge[bend right] node[below right, align=center]{$\clock = 2$ \\ $\styleact{b}$} (lF);
			\end{tikzpicture}
			}

		\caption{An IS-META~$\TA$}
		\label{figure:IS-META}
		\end{subfigure}
		\hfill{}
		\begin{subfigure}[b]{.63\columnwidth}
		{\centering
			\scalebox{.8}{
			\begin{tikzpicture}[PTA, node distance=.8cm and 2.5cm]
				\node[location, initial] (l0) {$\locinit$};
				\node[location, private, below =of l0, align=center] (lpriv) {$\locpriv$};
				\node[location, final, right =of l0] (lF) {$\locfinal$};

				\node[invariantWest, yshift=1em] at (l0.west) {$\clock \leq 2$};
				\node[invariantEast, yshift=1em] at (lpriv.east) {$\clock \leq 2$};

				\path (l0) edge node[left, align=center]{$\clock = 1 $\\ $\land \clockTickOne < 1$ \\ $\styleact{a}$} (lpriv);
				\path (l0) edge[loop above] node[above left, align=center]{$\clockTickOne = 1$ \\ $\clockTickOne \assign 0$ \\ $\styleenergy{\enervar_2 : +2}$} (l0);
				\path (l0) edge node[above, align=center]{$\clock \geq 1 \land \clockTickOne < 1$ \\ $\styleact{c}$} node[below, align=center]{$\styleenergy{\enervar_1 : +2}$} (lF);

				\path (lpriv) edge[loop left] node[below left, align=center]{$\clockTickOne < 1$ \\ $\styleact{b}$ \\ $\styleenergy{\enervar_1 : +1}$} (lpriv);
				\path (lpriv) edge[loop below] node[below, align=center]{$\clockTickOne = 1$ \\ $\clockTickOne \assign 0$ \\ $\styleenergy{\enervar_2 : -1}$} (lpriv);
				\path (lpriv) edge[bend right] node[below right, align=center]{$\clock = 2 \land \clockTickOne < 1$ \\ $\styleact{b}$} (lF);
			\end{tikzpicture}
			}

		}
		\caption{Transformed META}
		\label{figure:ISMETA:transformed}
		\end{subfigure}
		\caption{Turning an IS-META into a META}
		\label{figure:ISMETA}
	\end{figure}

	Note that our transformation does not correctly encode the energy guards, hence guarded IS-METAs are left out.
	In addition, our transformation does not properly encode the energy elapse beyond the last integer time (which can also make the energy become negative in between two time units), hence  the iET assumption.
	Lifting these assumptions is the subject of ongoing works.

	\section{Conclusion}\label{section:conclusion}

	In this paper, we introduced the formalism of guarded METAs to express opacity properties over cyber-physical systems.
	While the general class is unsurprisingly undecidable, we exhibited a number of decidability results for discrete subclasses (\ie{} continuous models, where the energy can be incremented or decremented at integer or rational times); in addition, we exhibited first decidability results for models including continuous energy elapsing over multiple energy variables.
	We summarize results from \cref{section:discrete,section:DE} in \cref{table:summary-discrete,table:summary-DE}.
\begin{table*}[tb]

		{\centering\footnotesize
		\begin{tabular}{|c|c|c|}
			\hline
			\rowHeader{}
			\bfseries{}For each $\existsWeakFull \in \{ \exists, \text{weak}, \text{full} \}$ & \bfseries{}$\existsWeakFull$-EN-opacity & \bfseries{}$\existsWeakFull$-ET-EN-opacity \\
			\hline
			Discrete positive METAs & \cellDec{}\twoEXPSPACE{} (\cref{theorem:fullweak-discrete-positive-guarded-ETAs}) & \cellDec{}\twoEXPSPACE{} (\cref{theorem:ET-EN-fullweak-discrete-positive-guarded-METAs}) \\
			\hline
			Discrete positive guarded METAs & \cellDec{}\threeEXPSPACE{} (\cref{theorem:fullweak-discrete-positive-guarded-ETAs}) & \cellDec{}\threeEXPSPACE{} (\cref{theorem:ET-EN-fullweak-discrete-positive-guarded-METAs}) \\
			\hline
			Discrete (guarded) ETAs & \multicolumn{2}{c|}{\cellDec{}decidable (\cref{theorem:EN:discrete-ETAs,theorem:EN:discrete-guarded-ETAs})} \\
			\hline
			Discrete guarded METAs & \multicolumn{2}{c|}{\cellUndec{}undecidable (\cref{proposition:undecidability-2-energies})} \\
			\hline
		\end{tabular}

		}
		\caption{Decidability of (ET-)EN-opacity in discrete METAs}
		\label{table:summary-discrete}
	\end{table*}
	\begin{table*}[tb]

		{\centering\footnotesize
		\begin{tabular}{|c|c|c|c|c|}
			\hline
			\rowHeader{}
			& \multicolumn{2}{c|}{\bfseries{}DE-opacity} & \multicolumn{2}{c|}{\bfseries{}bDE-opacity} \\
			\hline
			& $\exists$ & Weak / full & $\exists$ & Weak / full \\
			\hline
			Discrete positive (guarded) ETAs & \multicolumn{2}{c|}{\cellDec{}\EXPSPACE{} (\cref{theorem:DE-Opacity})} & \multicolumn{2}{c|}{\cellDec{}\twoEXPSPACE{}* (\cref{theorem:bDE-Opacity})} \\
			\hline
			Discrete positive (guarded) METAs & \ifdefined\withParikhbyBlocks{}\cellDec{}decidable (\cref{theorem:DE-Opacity-META})\else\cellOpen{}\fi & \cellOpen{} & \multicolumn{2}{c|}{\cellDec{}\twoEXPSPACE{}* (\cref{theorem:bDE-Opacity})} \\
			\hline
			Discrete ETAs & \multicolumn{2}{c|}{\cellOpen{}} & \multicolumn{2}{c|}{\cellDec{}\twoEXPSPACE{} (\cref{theorem:bDE-Opacity-ETAs})} \\
			\hline
			Discrete METAs & \multicolumn{2}{c|}{\cellOpen{}} & \multicolumn{2}{c|}{\cellOpen{}} \\
			\hline
			Discrete guarded METAs & \multicolumn{2}{c|}{\cellUndec{}undecidable (\cref{proposition:undecidability-2-energies-DE})} & \multicolumn{2}{c|}{\cellOpen{}} \\
			\hline
		\end{tabular}

		}
		\caption{Decidability of (b)DE-opacity in discrete METAs {\scriptsize(*\EXPSPACE{} without energy guards)}}
		\label{table:summary-DE}
	\end{table*}

	Beyond the open cases in \cref{table:summary-discrete,table:summary-DE} and the lower bounds of the complexities, future works include observation of the sole execution time over guarded METAs, \ie{} where the energy, while unobservable, constrains the system.
	In addition, we will be interested in verification under energy uncertainty, demanding parametric verification techniques~\cite{BBFLMR21}.

	\begin{acks}
		We thank Engel Lefaucheux and Sarah Dépernet for preliminary discussions regarding timed opacity with energy.
		This work is partially supported by ANR BisoUS (ANR-22-CE48-0012)
		and by
		Agence de l'Innovation de Défense (AID) via Centre Interdisciplinaire Mers et Océan (CIMO) project 2025 CHRoMM.
	\end{acks}

	\newcommand{\CCIS}{Communications in Computer and Information Science}
	\newcommand{\ENTCS}{Electronic Notes in Theoretical Computer Science}
	\newcommand{\FAC}{Formal Aspects of Computing}
	\newcommand{\FundInf}{Fundamenta Informaticae}
	\newcommand{\FMSD}{Formal Methods in System Design}
	\newcommand{\IJFCS}{International Journal of Foundations of Computer Science}
	\newcommand{\IJSSE}{International Journal of Secure Software Engineering}
	\newcommand{\IPL}{Information Processing Letters}
	\newcommand{\JAIR}{Journal of Artificial Intelligence Research}
	\newcommand{\JLAP}{Journal of Logic and Algebraic Programming}
	\newcommand{\JLAMP}{Journal of Logical and Algebraic Methods in Programming} %
	\newcommand{\JLC}{Journal of Logic and Computation}
	\newcommand{\LMCS}{Logical Methods in Computer Science}
	\newcommand{\LNCS}{Lecture Notes in Computer Science}
	\newcommand{\RESS}{Reliability Engineering \& System Safety}
	\newcommand{\RTS}{Real-Time Systems}
	\newcommand{\SCP}{Science of Computer Programming}
	\newcommand{\SOSYM}{Software and Systems Modeling ({SoSyM})}
	\newcommand{\STTT}{International Journal on Software Tools for Technology Transfer}
	\newcommand{\TCS}{Theoretical Computer Science}
	\newcommand{\TOPLAS}{{ACM} Transactions on Programming Languages and Systems ({ToPLAS})}
	\newcommand{\ToPNoC}{Transactions on {P}etri Nets and Other Models of Concurrency}
	\newcommand{\TOSEM}{{ACM} Transactions on Software Engineering and Methodology ({ToSEM})}
	\newcommand{\TSE}{{IEEE} Transactions on Software Engineering}

		\bibliographystyle{alpha} %
	\bibliography{energy}

\newcommand{\etalchar}[1]{$^{#1}$}
\begin{thebibliography}{AETYM21}

\bibitem[AA23]{AA23survey}
Johan Arcile and {\'E}tienne Andr{\'e}.
\newblock Timed automata as a formalism for expressing security: A survey on
  theory and practice.
\newblock {\em {ACM} Computing Surveys}, 55(6), 2023.

\bibitem[AAL24]{AAL24}
{\'E}tienne Andr{\'e}, Johan Arcile, and Engel Lefaucheux.
\newblock Execution-time opacity problems in one-clock parametric timed
  automata.
\newblock In {\em {FSTTCS}}, volume 323 of {\em LIPIcs}. Schloss Dagstuhl --
  Leibniz-Zentrum für Informatik, 2024.

\bibitem[ABLM22]{ABLM22}
{\'E}tienne Andr{\'e}, Shapagat Bolat, Engel Lefaucheux, and Dylan Marinho.
\newblock {strategFTO}: Untimed control for timed opacity.
\newblock In {\em {FTSCS}}. {ACM}, 2022.

\bibitem[ACHH93]{ACHH92}
Rajeev Alur, Costas Courcoubetis, Thomas~A. Henzinger, and Pei-Hsin Ho.
\newblock Hybrid automata: An algorithmic approach to the specification and
  verification of hybrid systems.
\newblock In {\em Hybrid Systems 1992}, volume 736 of {\em \LNCS{}}. Springer,
  1993.

\bibitem[AD94]{AD94}
Rajeev Alur and David~L. Dill.
\newblock A theory of timed automata.
\newblock {\em \TCS{}}, 126(2), 1994.

\bibitem[ADKLL24]{ADLL24}
{\'E}tienne Andr{\'e}, Marie Duflot-Krémer, Laetitia Laversa, and Engel
  Lefaucheux.
\newblock Execution-time opacity control for timed automata.
\newblock In {\em {SEFM}}, volume 15280 of {\em \LNCS{}}. Springer, 2024.

\bibitem[ADL24]{ADL24}
{\'E}tienne Andr{\'e}, Sarah Dépernet, and Engel Lefaucheux.
\newblock The bright side of timed opacity.
\newblock In {\em {ICFEM}}, volume 15394 of {\em \LNCS{}}. Springer, 2024.

\bibitem[AETYM21]{AEYM21}
Ikhlass Ammar, Yamen El~Touati, Moez Yeddes, and John Mullins.
\newblock Bounded opacity for timed systems.
\newblock {\em Journal of Information Security and Applications}, 61, 2021.

\bibitem[AGW{\etalchar{+}}24]{AGWZH24}
Jie An, Qiang Gao, Lingtai Wang, Naijun Zhan, and Ichiro Hasuo.
\newblock The opacity of timed automata.
\newblock In {\em {FM}}, volume 14933 of {\em \LNCS{}}. Springer, 2024.

\bibitem[AK20]{AK20}
{\'E}tienne Andr{\'e} and Aleksander Kryukov.
\newblock Parametric non-interference in timed automata.
\newblock In {\em {ICECCS}}, 2020.

\bibitem[ALL{\etalchar{+}}23]{ALLMS23}
{\'E}tienne Andr{\'e}, Engel Lefaucheux, Didier Lime, Dylan Marinho, and Jun
  Sun.
\newblock Configuring timing parameters to ensure execution-time opacity in
  timed automata.
\newblock In {\em {TiCSA}}, volume 392 of {\em Electronic Proceedings in
  Theoretical Computer Science}, 2023.
\newblock Invited paper.

\bibitem[ALM23]{ALM23}
{\'E}tienne Andr{\'e}, Engel Lefaucheux, and Dylan Marinho.
\newblock Expiring opacity problems in parametric timed automata.
\newblock In {\em {ICECCS}}, 2023.

\bibitem[ALR22]{ALR22}
{\'E}tienne Andr{\'e}, Didier Lime, and Olivier~H. Roux.
\newblock Reachability and liveness in parametric timed automata.
\newblock {\em \LMCS{}}, 18(1), 2022.

\bibitem[BBF{\etalchar{+}}21]{BBFLMR21}
Giovanni Bacci, Patricia Bouyer, Uli Fahrenberg, Kim~Guldstrand Larsen, Nicolas
  Markey, and Pierre-Alain Reynier.
\newblock Optimal and robust controller synthesis using energy timed automata
  with uncertainty.
\newblock {\em \FAC{}}, 33(1), 2021.

\bibitem[BCLR15]{BCLR15}
Gilles Benattar, Franck Cassez, Didier Lime, and Olivier~H. Roux.
\newblock Control and synthesis of non-interferent timed systems.
\newblock {\em International Journal of Control}, 88(2), 2015.

\bibitem[BDFP04]{BDFP04}
Patricia Bouyer, Catherine Dufourd, Emmanuel Fleury, and Antoine Petit.
\newblock Updatable timed automata.
\newblock {\em \TCS{}}, 321(2-3), 2004.

\bibitem[BFH{\etalchar{+}}01]{BFHLPRV01}
Gerd Behrmann, Ansgar Fehnker, Thomas Hune, Kim~Guldstrand Larsen, Paul
  Pettersson, Judi Romijn, and Frits~W. Vaandrager.
\newblock Minimum-cost reachability for priced timed automata.
\newblock In {\em {HSCC}}, volume 2034 of {\em \LNCS{}}. Springer, 2001.

\bibitem[BFL{\etalchar{+}}08]{BFLMS08}
Patricia Bouyer, Ulrich Fahrenberg, Kim~Guldstrand Larsen, Nicolas Markey, and
  Ji\v{r}\'i Srba.
\newblock Infinite runs in weighted timed automata with energy constraints.
\newblock In {\em {FORMATS}}, volume 5215 of {\em \LNCS{}}. Springer, 2008.

\bibitem[BFST02]{BDST02}
Roberto Barbuti, Nicoletta~De Francesco, Antonella Santone, and Luca Tesei.
\newblock A notion of non-interference for timed automata.
\newblock {\em \FundInf{}}, 51(1-2), 2002.

\bibitem[BHK17]{BHK17}
Simon Beier, Markus Holzer, and Martin Kutrib.
\newblock On the descriptional complexity of operations on semilinear sets.
\newblock In {\em {AFL}}, volume 252 of {\em {EPTCS}}, 2017.

\bibitem[BKMR08]{BKMR08}
Jeremy~W. Bryans, Maciej Koutny, Laurent Mazaré, and Peter Y.~A. Ryan.
\newblock Opacity generalised to transition systems.
\newblock {\em International Journal of Information Security}, 7(6), 2008.

\bibitem[BLR05]{BLR05}
Gerd Behrmann, Kim~Guldstrand Larsen, and Jacob~Illum Rasmussen.
\newblock Optimal scheduling using priced timed automata.
\newblock {\em {SIGMETRICS} Performance Evaluation Review}, 32(4), 2005.

\bibitem[BT03]{BT03}
Roberto Barbuti and Luca Tesei.
\newblock A decidable notion of timed non-interference.
\newblock {\em \FundInf{}}, 54(2-3), 2003.

\bibitem[Cas09]{Cassez09}
Franck Cassez.
\newblock The dark side of timed opacity.
\newblock In {\em {ISA}}, volume 5576 of {\em \LNCS{}}. Springer, 2009.

\bibitem[CHL24]{CHL24}
Pieter J.~L. Cuijpers, Jonas Hansen, and Kim~Guldstrand Larsen.
\newblock Safe and infinite resource scheduling using energy timed automata.
\newblock In {\em {TASE}}, volume 14777 of {\em \LNCS{}}. Springer, 2024.

\bibitem[CM14]{CM14}
Dmitry Chistikov and Rupak Majumdar.
\newblock Unary pushdown automata and straight-line programs.
\newblock In {\em {ICALP}, Part {II}}, volume 8573 of {\em \LNCS{}}. Springer,
  2014.

\bibitem[DQY25]{DQY25}
Weilin Deng, Daowen Qiu, and Jingkai Yang.
\newblock New insights into the decidability of opacity in timed automata.
\newblock Technical Report abs/2504.00625, {arXiv}, 2025.

\bibitem[GMR07]{GMR07}
Guillaume Gardey, John Mullins, and Olivier~H. Roux.
\newblock Non-interference control synthesis for security timed automata.
\newblock {\em \ENTCS{}}, 180(1), 2007.

\bibitem[KKG24]{KKG24}
Julian Klein, Paul Kogel, and Sabine Glesner.
\newblock Verifying opacity of discrete-timed automata.
\newblock In {\em {FormaliSE}}. {ACM}, 2024.

\bibitem[KMT17]{KMT17}
Markus Krötzsch, Tom{\'{a}}s Masopust, and Michaël Thomazo.
\newblock Complexity of universality and related problems for partially ordered
  {NFAs}.
\newblock {\em Information and Computation}, 255, 2017.

\bibitem[KT10]{KT10}
Eryk Kopczyński and Anthony~Widjaja To.
\newblock Parikh images of grammars: Complexity and applications.
\newblock In {\em {LiCS}}. {IEEE} Computer Society, 2010.

\bibitem[LLH18]{LLH18}
Stéphane Lafortune, Feng Lin, and Christoforos~N. Hadjicostis.
\newblock On the history of diagnosability and opacity in discrete event
  systems.
\newblock {\em Annual Reviews in Control}, 45, 2018.

\bibitem[LLHL22]{LLHL22}
Jun Li, Dimitri Lefebvre, Christoforos~N. Hadjicostis, and Zhiwu Li.
\newblock Observers for a class of timed automata based on elapsed time graphs.
\newblock {\em {IEEE} Transactions on Automatic Control}, 67(2), 2022.

\bibitem[Maz04]{Mazare04}
Laurent Mazaré.
\newblock Using unification for opacity properties.
\newblock In {\em {WITS}}, 2004.

\bibitem[Min67]{Minsky67}
Marvin~L. Minsky.
\newblock {\em Computation: Finite and infinite machines}.
\newblock Prentice-Hall, Inc., Upper Saddle River, NJ, USA, 1967.

\bibitem[Par66]{Parikh66}
Rohit Parikh.
\newblock On context-free languages.
\newblock {\em Journal of the {ACM}}, 13(4), 1966.

\bibitem[SLR23]{SLR23}
Anthony Spriet, Didier Lime, and Olivier~H. Roux.
\newblock Timed non-interference under partial observability and bounded
  memory.
\newblock In {\em {FORMATS}}, volume 14138 of {\em \LNCS{}}. Springer, 2023.

\bibitem[WZ18]{WZ18}
Lingtai Wang and Naijun Zhan.
\newblock Decidability of the initial-state opacity of real-time automata.
\newblock In {\em Symposium on Real-Time and Hybrid Systems - Essays Dedicated
  to Professor {C}haochen {Z}hou on the Occasion of His 80th Birthday}, volume
  11180 of {\em \LNCS{}}. Springer, 2018.

\bibitem[WZA18]{WZA18}
Lingtai Wang, Naijun Zhan, and Jie An.
\newblock The opacity of real-time automata.
\newblock {\em {IEEE} Transactions on Computer-Aided Design of Integrated
  Circuits and Systems}, 37(11), 2018.

\bibitem[Zha24]{Zhang24}
Kuize Zhang.
\newblock State-based opacity of labeled real-time automata.
\newblock {\em \TCS{}}, 987, 2024.

\end{thebibliography}
	\clearpage
	\appendix

	\section{Mathematical concepts}\label{appendix:maths}

	\begin{definition}[Ordered partition of $\setRgeqzero$]\label{definition:ordered-partition}
		We call \emph{ordered integer partition of~$\setRgeqzero$} (for short \emph{ordered partition of~$\setRgeqzero$}) a finite sequence of disjoint intervals $\subinterval_1, \dots, \subinterval_n$ such that:
		\begin{enumerate}
			\item $\bigcup_{j=1}^{n}\subinterval_j = \setRgeqzero$,
			\item all boundaries except the right boundary of $\subinterval_n$ belong to~$\setN$,
			\item each $\subinterval_j$ is non empty,
				and
			\item for all $j \in \{1, \dots ,n-1\}$, the right boundary of $\subinterval_j$ is the left boundary of $\subinterval_{j+1}$.
		\end{enumerate}
	\end{definition}

	\begin{example}
		The sequence $[0,2], (2,5), [5,10), [10, +\infty)$ is an ordered partition of~$\setRgeqzero$.
	\end{example}

	\section{Common automata concepts}\label{appendix:automata-concepts}
	\subsection{Formal definition of pushdown automata}\label{appendix:definition:pushdown}
	\begin{definition}\label{definition:pushdown}
	A \emph{pushdown automaton (PDA)} is a tuple $(\PDAStates, \Actions, \PDAstackAlphabet, \PDAstateinit, \initialStackSymbol, \PDAStatesFinal, \PDAEdges)$ where:
	\begin{enumerate}
		\item $\PDAStates$ is a finite set of states,
		\item $\Actions$ is a finite set of actions (``input alphabet''),
		\item $\PDAstackAlphabet$ is a finite set of stack symbols (``stack alphabet''),
		\item $\PDAstateinit \in \PDAStates$ is the initial state,
		\item $\initialStackSymbol \in \PDAstackAlphabet$ is the initial stack symbol,
		\item $\PDAStatesFinal \subseteq \PDAStates$ is the set of accepting states,
		\item $\PDAEdges$ is a finite set of edges $\PDAedge = (\PDAstate, \action, \stackSymbol, \PDAstate', \stackSymbol* )$ where $\PDAstate, \PDAstate' \in \PDAStates$ are the source and target states, $\action \in \Actions \cup \{ \silentaction \}$ is the transition symbol, and $\stackSymbol \in \PDAstackAlphabet$, $\stackSymbol* \in (\PDAstackAlphabet)^*$ are the symbols respectively popped from and pushed to the stack.
	\end{enumerate}
	\end{definition}
	\subsection{Region automaton}\label{appendix:regions}
We recall that the region automaton of timed automata is obtained by quotienting the set of clock valuations out by an equivalence relation $\simeq$ recalled below.
As an abuse of notation, we denote a TA $\TA = (\Actions, \LocSet, \locinit, \LocsPriv, \LocsFinal, \LabelFunc, \Clock, \invariant, \Edges)$, since the set of energy variables is empty, and all rates are~1.

Given a TA~$\TA$ and its set of clocks $\Clock$, we define $\LargestConstant : \Clock \rightarrow \setN$ the map that associates to a clock $\clock$ the greatest value to which the interpretations of $\clock$ are compared within the guards and invariants; if $\clock$ appears in no constraint, we set $\LargestConstant(\clock) = 0$.

Given $\alpha \in \setR$, we write $\intpart{\alpha}$ and $\fract{\alpha}$ for the integral and fractional parts of $\alpha$, respectively.

\begin{definition}[Equivalence relation $\simeq$ for valuations~\cite{AD94}]
	Let $\clockval$, $\clockval'$ be two clock valuations. %
	We say that $\clockval$ and $\clockval'$ are \emph{equivalent}, denoted by $\clockval \simeq \clockval'$ when, for each $\clock \in \Clock$, either $	\clockval (\clock) > \LargestConstant(\clock)$ and $\clockval' (\clock) > \LargestConstant(\clock)$ or the three following conditions hold:

	\begin{enumerate}
	\item $\intpart{\clockval (\clock)} = \intpart{\clockval' (\clock)}$;
	\item $\fract{\clockval(\clock)} = 0$ if and only if $\fract{\clockval'(\clock)} = 0$; and
	\item for each $\clocky \in \Clock$, $\fract{\clockval(\clock)} \leq  \fract{\clockval(\clocky)}$  if and only if $\fract{\clockval'(\clock)} \leq  \fract{\clockval'(\clocky)}$.
	\end{enumerate}
\end{definition}

The equivalence relation is extended to the concrete states of~$\TA$: let $\concstate = (\loc, \clockval)$ and $\concstate' = (\loc', \clockval')$ be two concrete states in~$\TA$, then
\(\concstate \simeq \concstate' \text{ if and only if } \loc = \loc' \text{ and } \clockval \simeq \clockval'. \)

The equivalence class of a valuation $\clockval$ is denoted $[ \clockval ]$ and is called a \emph{clock region}, and the equivalence class of a concrete state $\concstate = (\loc, \clockval)$ is denoted $[ \concstate ]$ and called a \emph{region} of~$\TA$.
In other words, a region is a pair made of a location and of a clock region.
Clock regions are denoted by the enumeration of the constraints defining the equivalence class. Thus, values of a clock $\clock$ that go beyond $\LargestConstant(\clock)$ are merged and described in the regions by the inequality ``$\clock > \LargestConstant(\clock)$''.

The set of regions of~$\TA$ is denoted by~$\Regions{\TA}$.
These regions are of finite number: this allows us to construct a finite ``untimed'' regular automaton, the \emph{region automaton}~$\RegionAutomaton{\TA}$.
Locations of $\RegionAutomaton{\TA}$ are regions of~$\TA$, and the transitions of $\RegionAutomaton{\TA}$ convey the reachable valuations associated with each concrete state in~$\TA$.

To formalize the construction, we need to transform discrete and time-elapsing transitions of~$\TA$ into transitions between the regions of~$\TA$. To do that, we define a ``time-successor'' relation that corresponds to time-elapsing transitions.

\begin{definition}[Time-successor relation~\cite{ALM23}]
	Let $\region = (\loc, [\clockval]), \region' = (\loc', [\clockval']) \in \Regions{\TA}$.
	We say that $\region'$ is a time-successor of $\region$ when $\region \neq \region'$, $\loc = \loc'$ and for each concrete state $(\loc, \clockval)$ in $\region$, there exists $d \in \setRgeqzero$ such that $(\loc, \clockval + d)$ is in~$\region'$ and for all $d' < d, (\loc, \clockval + d') \in \region \cup \region'$.
\end{definition}

A region $\region = (\loc, [\clockval])$ is \emph{unbounded} when, for all $\clock$ in $\Clock$ and all $\clockval' \in [\clockval]$, $\clockval'(\clock) > \LargestConstant(\clock)$.

\begin{definition}[Region automaton \cite{AD94}\label{definition:region-automaton}]
	Given a TA $\TA = (\Actions, \LocSet, \locinit, \LocsPriv, \LocsFinal, \LabelFunc, \Clock, \invariant, \Edges)$, the region automaton is the
	tuple $\RegionAutomaton{\TA} = (\Actions_{\Regions{}}, \Regions{}, \region_0, {\Regions{}}_f, \Edges_{\Regions{}})$ where
	\begin{enumerate}%
	\item $\Actions_{\Regions{}} = \Actions \cup \lbrace \silentaction \rbrace$;
	\item ${\Regions{}} = \Regions{\TA}$;
	\item $\region_0 = [\qinit]$;
	\item ${\Regions{}}_f$ is the set of regions whose first component is a final location~$\in \LocsFinal$;
	\item $\Edges_{\Regions{}}$ is defined as follows:
	\begin{itemize}%
		\item \emph{(discrete transitions)} For every $\region = (\loc, [\clockval])$ with $\loc \notin \LocsFinal$, $\region' = (\loc', [\clockval']) \in \Regions{\TA}$ and $\action \in \Actions \cup \lbrace \silentaction \rbrace$:
	\( (\region, \action, \region') \in \Edges_{\Regions{}} \text{ if } \exists \clockval'' \in [\clockval], \exists \clockval''' \in [\clockval'], \big( (\loc, \clockval'') , \edge , (\loc', \clockval''') \big) \in \TransConcrete \),
	with $\edge = (\loc, \guard, \action, \resets, \loc') \in \Edges$ for some guard~$\guard$ and some clock set~$\resets$.

	\item \emph{(delay transitions)}
		For every $\region = (\loc, [\clockval])$ with $\loc \notin \LocsFinal$, $\region' \in \Regions{\TA}$:
	\( (\region, \silentaction, \region') \in \Edges_{\Regions{}} \text{ if } \region' \text{ is a time-successor of } \region \text{ or if } \region=\region'\text{ is unbounded.}\)
	\end{itemize}%
	\end{enumerate}%
\end{definition}

A \emph{run} of~$\RegionAutomaton{\TA}$ is an alternating sequence of regions of~$\RegionAutomaton{\TA}$ and actions starting from the initial region $\regioni{0}$ and ending in a final region, of the form
$\regioni{0}, \action_0, \regioni{1}, \action_1, \ldots \regioni{n-1}, \action_{n-1}, \regioni{n}$ for some $n \in \setN$,
with $\regioni{n} \in {\Regions{}}_f$ and for each $0 \leq i \leq n-1$, $\regioni{i} \notin {\Regions{}}_f$, and $(\regioni{i}, \action_i, \regioni{i+1}) \in \Edges_{\Regions{}}$.

$\RegionAutomaton{\TA}$ is a non-deterministic finite automaton (NFA); its language is defined as usual as the set of accepting words, \ie{} as the sequence of actions ($\silentaction$~excluded) contained in a run.
The (untimed) language of~$\RegionAutomaton{\TA}$ is denoted by $\Language(\RegionAutomaton{\TA})$.

	\section{Formal definitions of opacity}\label{appendix:definitions}
	\subsection{ET-EN-opacity}\label{appendix:definitions:ETEN}
	\begin{definition}[ET-EN-opacity]\label{def:ET-EN-opacity}
		A guarded META~$\TA$ is \emph{fully ET-EN-opaque} when, for each $d \in \setRgeqzero$ and~$\enerval \in \setRgeqzero^{\cardinality{\Energies}}$,
		there exists a private run~$\varrun$ such that $\runduration{\varrun} = d$ and $\finalEnergy(\varrun) = \enerval$
		iff
		there exists a public run~$\varrun'$ such that $\runduration{\varrun'} = d$ and $\finalEnergy(\varrun') = \enerval$.

		A guarded META~$\TA$ is \emph{weakly ET-EN-opaque} when for each private run~$\varrun$, there exists a public run~$\varrun'$ such that $\runduration{\varrun} = \runduration{\varrun'}$ and $\finalEnergy(\varrun) = \finalEnergy(\varrun')$.

		A guarded META~$\TA$ is \emph{$\exists$-ET-EN-opaque} when there exist a private run~$\varrun$ and a public run~$\varrun'$ such that $\runduration{\varrun} = \runduration{\varrun'}$ and $\finalEnergy(\varrun) = \finalEnergy(\varrun')$.
	\end{definition}
	\begin{remark}
		Note that verifying \existsWeakFull-ET-opacity and \existsWeakFull-EN-opacity is not a sufficient condition for verifying \existsWeakFull-ET-EN opacity.
		Indeed, a model can be \existsWeakFull-ET-opaque and \existsWeakFull-EN-opaque but not \existsWeakFull-ET-EN opaque.
		The guarded META in \cref{figure:example-not-ET-EN} is such that:
		\begin{itemize}
			\item $\PrivDurVisit{\TA} = \PubDurVisit{\TA} = [0,10]$
			\item $\PrivEnerVisit{\TA} = \PubEnerVisit{\TA} = [0,10]$
		\end{itemize}
		It is therefore both fully-ET-opaque and fully-EN-opaque.
		However, there exists a private run such that no public run has the same execution time and the same final energy (for example, execution time of~4 and final energy at~6).
		Therefore, it is neither fully nor weakly-ET-EN-opaque.
	\end{remark}

	\begin{figure}[tb]
	\centering

			\centering
			\begin{tikzpicture}[PTA, node distance=1cm and 2cm]
				\node[location, initial] (l0) {$\locinit$};
				\node[location, right = of l0] (l1) {$\loc_1$ \\ $\styleenergy{\enervar:+1} $};
				\node[location, private, below = of l1] (lp) {$\locpriv$ \\ $\styleenergy{\enervar:-1} $};
				\node[location, final, right = of l1] (lF) {$\locfinal$};

				\node[invariantNorth] at (lp.north) {$\clock \leq 10$};
				\node[invariantNorth] at (l1.north) {$\clock \leq 10$};

				\path (l0) edge[bend right] node[ below left, align=center]{$ \clock = 0$ \\ $\styleact{a}$ \\ $\styleenergy{\enervar:+10} $} (lp);

				\path (l0) edge  node[above, align=center]{$\clock = 0$ \\ $\styleact{b}$} (l1);

				\path (l1) edge node[above, align=center] {$\styleact{c}$}(lF);

				\path (lp) edge[bend right]  node[below right, align=center] {$\styleact{c}$} (lF);

			\end{tikzpicture}

		\caption{Guarded META ET-opaque and EN-opaque but not fully-ET-EN-opaque}
		\label{figure:example-not-ET-EN}

	\end{figure}
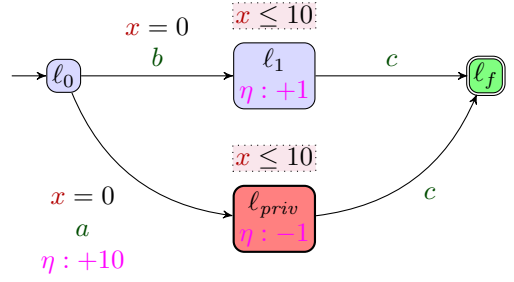

	\subsection{Formal definition of function $\EnerLevelBufferT$}\label{appendix:definition:bEL}

		We first need to embed the absolute time of each state of a run, and project onto the states, \ie{} removing the transitions.
		Given a run \(\varrun = (\loci{0}, \clockval_0, \enerval_0), (\paramd_0, \edgei{0}), (\loci{1}, \clockval_1, \enerval_1), \ldots, (\loci{n}, \clockval_n, \enerval_n)\), let $\absTimeRun(\varrun) = (\loci{0}, \clockval_0, \enerval_0, t_0), (\loci{1}, \clockval_1, \enerval_1, t_1), \ldots, (\loci{n}, \clockval_n, \enerval_n, t_n)$, where \( t_i = \sum_{j=0}^{i-1} \paramd_j \).

		We also define a function $\destutter$ that, given a non-empty sequence~$\seqrun$ of quadruples made of a location, a clock valuation, an energy valuation and an absolute time (typically the result of $\absTimeRun(\varrun)$), returns the subsequence of~$\seqrun$ where only the quadruples that modify at least one energy variable are kept.
		The formal definition is given in \cref{figure:definition:destutter}.

		\begin{figure*}
			{\centering
			\[ \destutter(\seqrun) = \left\{
			\begin{array}{ll}
				\seqrun & \mbox{if } \seqrun = (\loci{0}, \clockval_0, \enerval_0, t_0) \text{ or } \seqrun = \emptyseq\\

				\destutter \big((\loci{i}, \clockval_i, \enerval_i, t_i) , \seqrun'\big) & \mbox{if } \seqrun = (\loci{i}, \clockval_i, \enerval_i, t_i), (\loci{j}, \clockval_j, \enerval_j, t_j) , \seqrun' \text{ and }\enerval_i = \enerval_j\\

				(\loci{i}, \clockval_i, \enerval_i, t_i) , \destutter\big((\loci{j}, \clockval_j, \enerval_j, t_j) , \seqrun'\big) & \mbox{if } \seqrun = (\loci{i}, \clockval_i, \enerval_i, t_i), (\loci{j}, \clockval_j, \enerval_j, t_j) , \seqrun' \text{ and }\enerval_i \neq \enerval_j\\
			\end{array}
			\right.
			\]

			 }
			\caption{Formal definition of $\destutter$}
			\label{figure:definition:destutter}
		\end{figure*}
		Then, given a non-empty sequence~$\seqrun$ of quadruples made of a location, a clock valuation, an energy valuation and an absolute time, and given a time~$\abstime \in \setN \setminus \{ 0 \}$, we define the \emph{subsequence in the interval ending in~$\abstime$, projected onto the energy variables}, denoted by $\subseqAndProject(\seqrun, \abstime)$, as in \cref{figure:definition:subseqInterval}.

		\begin{figure*}
			{\centering
		\[ \subseqAndProject(\seqrun, \abstime) = \left\{
		\begin{array}{ll}

			(\enerval_{0}, \dots ,\enerval_{k} ) \mbox{ where } t_{k} \leq 1 < t_{k+1} & \mbox{ if } \abstime = 1\\

			(\enerval_{i}, \dots ,\enerval_{k} ) \mbox{ where } t_{i-1} < \abstime - 1 \leq t_{i}  \land t_{k} \leq \abstime < t_{k+1} & \text{ otherwise }
		\end{array}
		\right.
		\]

			 }
			\caption{Formal definition of $\subseqAndProject$, with $\seqrun = (\loci{0}, \clockval_0, \enerval_0, t_0), \ldots, (\loci{n}, \clockval_n, \enerval_n, t_n)$} %
			\label{figure:definition:subseqInterval}
		\end{figure*}

		We can finally define function $\EnerLevelBufferT$.

		\begin{definition}\label{definition:bEL}
			Let $\varrun$ be a run, and let $\abstime \in \setN \setminus \{ 0 \}$ be an absolute time.
			The sequence of energy valuations $\EnerLevelBufferT(\varrun, \abstime)$ is defined by

			\[ \EnerLevelBufferT(\varrun, \abstime) = \subseqAndProject \Big( \destutter \big(\absTimeRun(\varrun) \big) , \abstime \Big) \text{.} \]
		\end{definition}
	\section{Definitions and proofs of \cref{section:generic}}\label{appendix:proofs:section:generic}
	\subsection{Formal definitions of $\Apub{\TA}$ and $\Apriv{\TA}$}\label{appendix:def:ApubApriv}
	\begin{definition}[$\Apub{\TA}$]\label{definition:Apub}
		Let $\TA = (\Actions, \LocSet, \locinit, \LocsPriv, \LocsFinal, \LabelFunc, \Clock, \Energies, \invariant, \rates, \Edges)$ be a guarded META.
		Then, $\Apub{\TA}$ is defined as
		$(\Actions, \LocSet, \locinit, \LocsPriv, \LocsFinal, \LabelFunc, \Clock, \Energies, \invariant, \rates, \Edges')$,
		where $\Edges' = \Edges \setminus \{ (\loc, \guard, \action, \resets, \enerupdates, \loc') \mid \loc \in \locpriv \lor \loc' \in \locpriv \}$.
	\end{definition}
	\begin{definition}[$\Apriv{\TA}$]\label{definition:Apriv}
		Let $\TA = (\Actions, \LocSet, \locinit, \LocsPriv, \LocsFinal, \LabelFunc, \Clock, \Energies, \invariant, \rates, \Edges)$ be a guarded META.
		Then, $\Apriv{\TA}$ is defined as
		$(\Actions, \LocSet', \locinit', \LocsPriv', \LocsFinal', \LabelFunc', \Clock, \Energies, \invariant', \rates', \Edges')$, where
		\begin{itemize}
			\item $\LocSet' = \{ \loc^{\bar{V}} \mid \loc \in \LocSet \} \cup \{ \loc^{V} \mid \loc \in \LocSet \}$,
			\item $\locinit' = \locinit^{\bar{V}}$,
			\item $\LocsFinal' = \{ \locfinal^{V} \mid \locfinal \in \LocsFinal \}$,
			\item $\forall \loc \in \LocSet, \LabelFunc'(\loc^{V}) = \LabelFunc'(\loc^{\bar{V}}) = \LabelFunc(\loc)$,
			\item $\forall \loc \in \LocSet, \invariant'(\loc^{V}) = \invariant'(\loc^{\bar{V}}) = \invariant(\loc)$,
			\item $\forall \loc \in \LocSet, \rates'(\loc^{V}) = \rates'(\loc^{\bar{V}}) = \rates(\loc)$,
			\item $\Edges' = \big\{ (\loc_1^{\bar{V}}, \guard, \action, \resets, \enerupdates, \loc_2^{\bar{V}}) \mid (\loc_1, \guard, \action, \resets, \enerupdates, \loc_2) \in \Edges \land \loc_2 \notin \LocsPriv \big\} \cup \big\{ (\loc_1^{V}, \guard, \action, \resets, \enerupdates, \loc_2^{V}) \mid (\loc_1, \guard, \action, \resets, \enerupdates, \loc_2) \in \Edges \big\} \cup \big\{ (\loc_1^{\bar{V}}, \guard, \action, \resets, \enerupdates, \loc_2^{V}) \mid (\loc_1, \guard, \action, \resets, \enerupdates, \loc_2) \in \Edges \land \loc_2 \in \LocsPriv \big\}$.
		\end{itemize}
	\end{definition}
	\subsection{Proof of \cref{proposition:removing-guards-discrete}}\label{appendix:proof:proposition:removing-guards-discrete}
	\propRemovingGuardsDiscrete*
	\begin{proof}
		We first give a construction for a discrete positive ETA, and then generalise it to discrete positive METAs.
		Let $\maxConstantEnergy$ be the maximum constant appearing in energy guards (and invariants).
		Let us build~$\TA'$ as follows:
		\begin{enumerate}
			\item we build $\maxConstantEnergy+2$ copies of~$\TA$, where the first $\maxConstantEnergy+1$ copies encode the exact value of the energy from~0 to~$\maxConstantEnergy$, while the last one encodes $> \maxConstantEnergy$.
			\item in each copy~$i$, we redirect each transition from $\loc$ to~$\loc'$ where the energy increases by~$m$ units to location $\loc'$ of the copy $i+m$---unless $i+m > \maxConstantEnergy$, in which case we redirect to the last copy;
			\item for each transition with energy guard $\enervar \compop m$ in copy~$0 \leq i \leq \maxConstantEnergy$, we either remove the guard whenever $i \compop m$ holds, or we remove the transition otherwise (the case for the last copy is similar).
		\end{enumerate}
		This transformation clearly preserves the semantics of~$\TA$.
		This transformation extends to multiple energy variables in a straightforward manner, by multiplying the number of location by~$(\maxConstantEnergy+2)^{|\Energies|}$, which leads to an exponential blowup.
		\cref{figure:example-gMETA} illustrates this transformation.
	\end{proof}
	\begin{figure}[tb]
		\begin{subfigure}[b]{.3\textwidth}
			\centering
			\begin{tikzpicture}[PTA, node distance = 0.5cm and 2cm]

				\node[location, initial] (l0) {$\locinit$};
				\node[location, right = of l0] (l1) {$\loc_1$};
				\node[location, private, below = of l1, align=center] (lp) {$\locpriv$};
				\node[location, final, right = of l1] (lF) {$\locfinal$};

				\path (l0) edge[loop above] node[above, align=center]{ $\styleenergy{\enervar_1 \geq 1}$ \\ $\styleact{d}$ \\ $\styleenergy{\enervar_1 : +1}$} (l0);

				\path (l0) edge[bend right] node[align=center]{$\styleact{a}$} (lp);

				\path (l0) edge[bend left] node[above, align=center]{ $\styleact{c}$ \\ $\styleenergy{\enervar_1 : +1}$} (l1);

				\path (l1) edge[bend left] node[above, align=center]{ $\styleact{d}$} (l0);

				\path (l1) edge node[above, align=center]{ $\styleact{c}$} node[below, align=center]{$\styleenergy{\enervar_2 : +1}$} (lF);

				\path (lp) edge[loop below] node[below, align=center]{$\styleenergy{\enervar_1 \leq 1}$ \\ $\styleact{b}$ \\ $\styleenergy{\enervar_1 : +1}$ } (lp);
				\path (lp) edge[bend right] node[below right, align=center]{ $\styleact{b}$ \\ $\styleenergy{\enervar_2 : +1}$} (lF);
			\end{tikzpicture}

			\caption{Example with $\maxConstantEnergy=1$}

		\end{subfigure}
		\hfill{}
		\begin{subfigure}[b]{.45\textwidth}
			\centering
			\scalebox{.8}{
			\begin{tikzpicture}[PTA, node distance = 1cm and 2cm]

				\node[draw, rectangle, fill=lightgray!20, fit={(-1,0) (6,-2)}, inner sep=0.5cm, label=right:Copy~0] {};

				\node[draw, rectangle, fill=lightgray!20, fit={(-1,-3.75) (6,-5.75)}, inner sep=0.5cm, label=right:Copy~1] {};

				\node[draw, rectangle, fill=lightgray!20, fit={(-1,-7.5) (6,-9.5)}, inner sep=0.5cm, label=right:Copy $> 1$] {};

				\node[location, initial] (l0-0) {$\locinit^0$};
				\node[location, right = of l0-0] (l1-0) {$\loc_1^0$};
				\node[location, private, below = of l1-0, align=center] (lp-0) {$\locpriv^0$};
				\node[location, final, right = of l1-0] (lF-0) {$\locfinal^0$};

				\node[location, below =of l0-0, yshift=-7em] (l0-1) {$\locinit^1$};
				\node[location, right = of l0-1] (l1-1) {$\loc_1^1$};
				\node[location, private, below = of l1-1, align=center] (lp-1) {$\locpriv^1$};
				\node[location, final, right = of l1-1] (lF-1) {$\locfinal^1$};

				\node[location, below = of l0-1, yshift=-7em] (l0-2) {$\locinit^2$};
				\node[location, right = of l0-2] (l1-2) {$\loc_1^2$};
				\node[location, private, below = of l1-2, align=center] (lp-2) {$\locpriv^2$};
				\node[location, final, right = of l1-2] (lF-2) {$\locfinal^2$};

				\path (l0-0) edge node[right, align=center]{$\styleact{a}$} (lp-0);
				\path (l0-1) edge node[right, align=center]{$\styleact{a}$} (lp-1);
				\path (l0-2) edge[bend right] node[right, align=center]{$\styleact{a}$} (lp-2);

				\path (l0-0) edge[bend right] node[above, align=center]{ $\styleact{c}$ \\ $\styleenergy{\enervar_1 : +1}$} (l1-1);
				\path (l0-1) edge node[above, align=center]{ $\styleact{c}$ \\ $\styleenergy{\enervar_1 : +1}$} (l1-2);
				\path (l0-2) edge[bend left] node[above, align=center]{ $\styleact{c}$ \\ $\styleenergy{\enervar_1 : +1}$} (l1-2);

				\path (l1-0) edge node[above, align=center]{ $\styleact{d}$} (l0-0);
				\path (l1-1) edge node[above, align=center]{ $\styleact{d}$} (l0-1);
				\path (l1-2) edge[bend left] node[above, align=center]{ $\styleact{d}$} (l0-2);

				\path (l1-0) edge[bend left]  node[right, align=center]{$\styleact{c}$ \\$\styleenergy{\enervar_2 : +1}$} (lF-1);
				\path (l1-1) edge[bend left]  node[right, align=center]{$\styleact{c}$ \\$\styleenergy{\enervar_2 : +1}$} (lF-2);
				\path (l1-2) edge  node[below, align=center]{$\styleact{c}$ \\$\styleenergy{\enervar_2 : +1}$} (lF-2);

				\path (lp-0) edge[bend left] node[below left , align=center, yshift = -2em, xshift = 1em]{$\styleact{b}$ \\ $\styleenergy{\enervar_1 : +1}$ } (lp-1);
				\path (lp-1) edge[bend left] node[below left, align=center, yshift = -2em, xshift = 1em]{ $\styleact{b}$ \\ $\styleenergy{\enervar_1 : +1}$ } (lp-2);

				\path (lp-0) edge node[above, align=center]{ $\styleact{b}$ \\ $\styleenergy{\enervar_2  +1}$} (lF-1);
				\path (lp-1) edge node[above, align=center]{ $\styleact{b}$ \\ $\styleenergy{\enervar_2 : +1}$} (lF-2);
				\path (lp-2) edge node[below, align=center]{ $\styleact{b}$ \\ $\styleenergy{\enervar_2 : +1}$} (lF-2);

				\path (l0-1) edge node[left, align=center]{ $\styleact{d}$ \\ $\styleenergy{\enervar_1 : +1}$} (l0-2);
				\path (l0-2) edge[loop below] node[below, align=center]{$\styleact{d}$ \\ $\styleenergy{\enervar_1 : +1}$} (l0-2);

			\end{tikzpicture}
			}

			\caption{Transformation}

		\end{subfigure}
		\caption{Exemplifying the energy guards removal}
		\label{figure:example-gMETA}
	\end{figure}
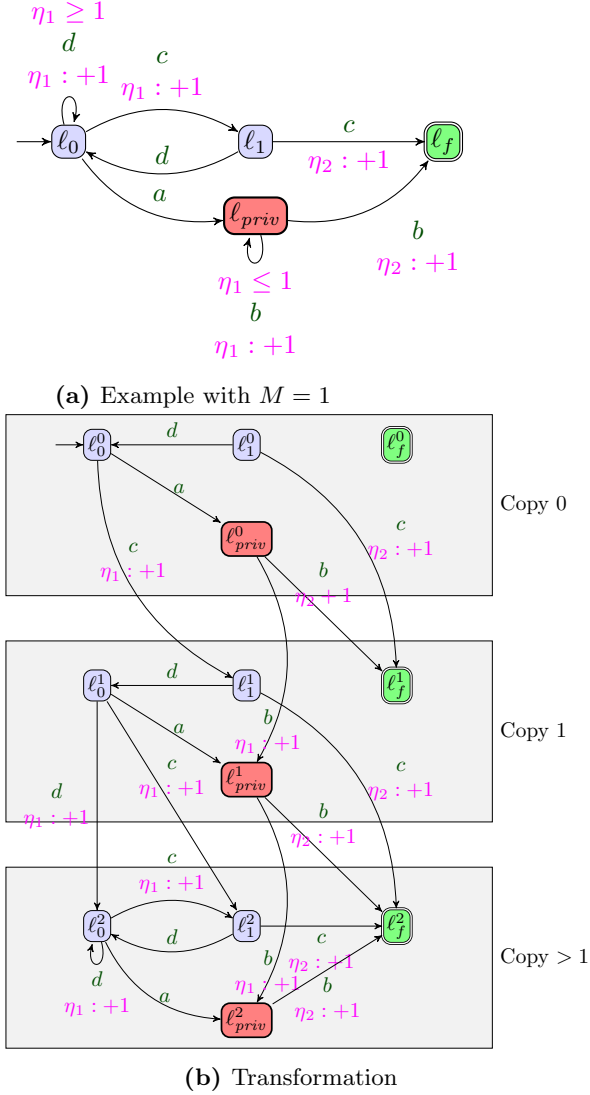
	\section{Proofs of \cref{section:discrete}}
	\subsection{Proof of \cref{proposition:undecidability-2-energies}}\label{appendix:proof:proposition:undecidability-2-energies}
\subsubsection{Two-counter machines}\label{ss:2CM}
Our subsequent undecidability proof works by reduction from
the halting problem for 2-counter machines.

We borrow the following material from~\cite{ALR22}.
A deterministic 2-counter machine (``2CM'')~\cite{Minsky67} has two non-negative counters $\Counter_1$ and~$\Counter_2$, a finite number of states and a finite number of transitions, which can be of the form (for $\counterGenericIndex \in \{ 1, 2\}$):
	\begin{enumerate}
		\item ``when in state~$\cms_i$, increment~$\Counter_\counterGenericIndex$ and go to~$\cms_j$''; or
		\item ``when in state~$\cms_i$, if $\Counter_\counterGenericIndex = 0$ then go to~$\cms_k$, otherwise decrement $\Counter_\counterGenericIndex$ and go to~$\cms_j$''.
	\end{enumerate}

The 2CM starts in state~$\cms_0$ with the counters set to~0.
The machine follows a deterministic transition function, meaning for each combination of state and counter conditions, there is exactly one action to take.
The \emph{halting problem} consists in deciding whether some distinguished state called \cmshalt{} can be reached or not.
This problem is known to be undecidable~\cite{Minsky67}.
\subsubsection{Undecidability of (constant-time) reachability in discrete guarded METAs}\label{appendix:lemma:undecidability-META-2-1}
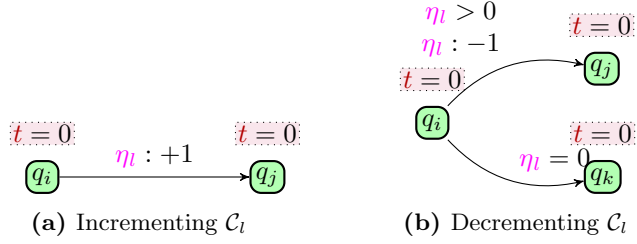
\begin{figure*}[tb]
	{\centering
	\begin{subfigure}[b]{.45\columnwidth}
	\centering
	\scalebox{\generalFigsScaleFactor}{
			\begin{tikzpicture}[PTA, node distance=2.5cm]
			\node[location2CM] (li) {$\cmspta_i$};
			\node[location2CM, right=of li] (lj) {$\cmspta_j$};

			\node[invariantNorth] at (li.north) {$\clockCMt = 0$};
			\node[invariantNorth] at (lj.north) {$\clockCMt = 0$};

			\path (li) edge node[above, align=center]{$\styleenergy{\enervar_\counterGenericIndex} : + 1$}  (lj);
			\end{tikzpicture}
	}

	\caption{Incrementing~$\Counter_\counterGenericIndex$}
	\label{figure:2CM:increment}
	\end{subfigure}
	\hspace{3em}
	\begin{subfigure}[b]{.45\columnwidth}
	\centering
	\scalebox{\generalFigsScaleFactor}{
			\begin{tikzpicture}[PTA, node distance=2.5cm]
			\node[location2CM] (li) {$\cmspta_i$};
			\node[location2CM, above right=of li, yshift=-1.5cm] (lj) {$\cmspta_j$};
			\node[location2CM, below right=of li, yshift=+1.5cm] (lk) {$\cmspta_k$};

			\node[invariantNorth] at (li.north) {$\clockCMt = 0$};
			\node[invariantNorth] at (lj.north) {$\clockCMt = 0$};
			\node[invariantNorth] at (lk.north) {$\clockCMt = 0$};

			\path (li) edge[bend left] node[align=center]{$\styleenergy{\enervar_\counterGenericIndex} > 0$ \\ $\styleenergy{\enervar_\counterGenericIndex} : - 1$}  (lj);
			\path (li) edge[bend right] node[align=center]{$\styleenergy{\enervar_\counterGenericIndex} = 0$}  (lk);
			\end{tikzpicture}
	}

	\caption{Decrementing~$\Counter_\counterGenericIndex$}
	\label{figure:2CM:decrement}
	\end{subfigure}

	}
	\caption{Encoding a 2CM as a discrete guarded META}
	\label{figure:2CM}
\end{figure*}

We first prove an intermediate lemma; the construction is straightforward, and similar to the construction in the proof of \cite[Proposition~1]{BDFP04}, that was given in a slightly different context (updatable timed automata).

\begin{lemma}\label{lemma:undecidability-META-2-1}
	Reachability in discrete METAs is undecidable, with two energy variables and a single clock.
\end{lemma}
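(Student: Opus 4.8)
The plan is to reduce from the halting problem for deterministic two-counter machines (see \cref{ss:2CM}), which is undecidable~\cite{Minsky67}; the construction is a routine adaptation of the one already sketched, so the emphasis will be on checking that it lives in the \emph{discrete} subclass and uses only one clock. Given a deterministic 2CM~$\calM$ with control states $\cms_0, \dots$, a distinguished state~$\cmshalt$, and counters $\Counter_1, \Counter_2$, I will build a discrete guarded META~$\TA_\calM$ with two energy variables~$\enervari{1}, \enervari{2}$, whose current values encode the values of $\Counter_1$ and~$\Counter_2$, and a single clock~$\clockCMt$. The clock will be inert: every location carries the invariant $\clockCMt = 0$, so no time ever elapses, all locations are ``urgent'', and every run of~$\TA_\calM$ has duration~$0$; hence the argument establishes undecidability even for \emph{constant-time} reachability. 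All energy rates are set to~$0$, so~$\TA_\calM$ is discrete.

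Next I will introduce, for each control state~$\cms_i$ of~$\calM$, one location~$\cmspta_i$, with $\cmspta_0$ initial; since energy valuations start at~$\enervalZero$, the initial state $(\cmspta_0, \clockvalZero, \enervalZero)$ of~$\TA_\calM$ faithfully mirrors the initial configuration of~$\calM$, in which both counters are~$0$. Each instruction is then translated into a gadget: an increment ``in~$\cms_i$, increment~$\Counter_\counterGenericIndex$, go to~$\cms_j$'' becomes the single edge from~$\cmspta_i$ to~$\cmspta_j$ with energy update $\enervari{\counterGenericIndex} : {+}1$, as in \cref{figure:2CM:increment}; a test-and-decrement ``in~$\cms_i$, if $\Counter_\counterGenericIndex = 0$ go to~$\cms_k$, else decrement~$\Counter_\counterGenericIndex$ and go to~$\cms_j$'' becomes the two edges of \cref{figure:2CM:decrement}, one with energy guard $\enervari{\counterGenericIndex} > 0$ and update $\enervari{\counterGenericIndex} : {-}1$ leading to~$\cmspta_j$, and one with energy guard $\enervari{\counterGenericIndex} = 0$ leading to~$\cmspta_k$. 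Because the energy variables range over~$\setRgeqzero$ and are only ever modified by~$\pm 1$ starting from~$0$, their values stay in~$\setN$ along every run; and because~$\calM$ is deterministic, in every reachable configuration exactly one outgoing edge of the current location is enabled, so the discrete transitions of~$\TA_\calM$ from $(\cmspta_0, \clockvalZero, \enervalZero)$ step precisely through the unique execution of~$\calM$.

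A straightforward induction on the number of steps will then show that $(\cmspta_i, \clockvalZero, \enerval)$ is reachable in~$\TA_\calM$ iff $\calM$ reaches~$\cms_i$ with $\Counter_1 = \enerval(\enervari{1})$ and $\Counter_2 = \enerval(\enervari{2})$; in particular the location~$\cmspta_{\cmshaltname}$ encoding~$\cmshalt$ (whose outgoing edges we drop, to respect the paper's convention on target locations) is reachable in~$\TA_\calM$ iff~$\calM$ halts. Since $\TA_\calM$ is a discrete guarded META of size linear in~$\calM$ with two energy variables and a single clock, and the 2CM halting problem is undecidable, this yields the claim. I do not anticipate a genuine obstacle here: the only point deserving a sentence is that the lone clock contributes nothing to the encoding, so all the undecidability is carried by the two \emph{unbounded} energy variables together with the zero-test guards---which is precisely the contrast exploited later, where restricting to a single energy variable (or to only positive rates) recovers decidability.
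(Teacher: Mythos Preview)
Your proposal is correct and follows essentially the same approach as the paper: a direct encoding of a 2-counter machine where the two energy variables track the counters, each machine state becomes a location, the single clock is frozen via the invariant $\clockCMt = 0$, and increments and guarded decrements are realised exactly as in \cref{figure:2CM:increment,figure:2CM:decrement}. You supply slightly more detail (values remaining in~$\setN$, determinism ensuring a unique enabled edge, the step-by-step correspondence), but the construction and the reduction are the same.
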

\begin{proof}
	We reduce from the halting problem of 2-counter machines, which is undecidable~\cite{Minsky67}.
	Given a 2CM~$\calM$, we encode it as a discrete guarded META~$\TA$ over two energy variables $\{\enervar_1, \enervar_2 \}$ and a single clock~$\clockCMt$.

	Each state $\cms_i$ of the machine is encoded as a location of the discrete guarded META, which we call~$\cmspta_i$.
	The encoding is such that energy $\enervar_\counterGenericIndex$ directly encodes the value of $\Counter_\counterGenericIndex$.
	The entire computation is done over 0-time, \ie{} the unique clock only serves as a global invariant $\clockCMt = 0$.
	See \cref{figure:2CM:increment,figure:2CM:decrement} for the META fragment encoding increment, and zero-test and decrement, respectively.

	Obviously, $\TA$ can reach~$\lochalt$ iff $\calM$ can reach~$\cmshalt$.
	Since reachability of~$\cmshalt$ in~$\calM$ is undecidable, then reachability of~$\lochalt$ in~$\TA$ is undecidable.
\end{proof}
\begin{corollary}\label{corollary:undecidability-constant-time-META-2-1}
	Constant-time reachability in discrete METAs is undecidable, with two energy variables and a single clock.
\end{corollary}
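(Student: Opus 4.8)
The plan is to observe that the reduction already carried out in the proof of \cref{lemma:undecidability-META-2-1} in fact witnesses the stronger statement, with no extra work. Recall that in that construction the discrete guarded META~$\TA$ simulating a 2-counter machine~$\calM$ carries the invariant $\clockCMt = 0$ on every location: no time can elapse anywhere, so \emph{every} run of~$\TA$ has duration~$0$. In particular, $\lochalt$ is reachable in~$\TA$ if and only if it is reachable by a run of duration exactly~$0$, i.e.\ if and only if some concrete state $(\lochalt, \enerval)$ is reached at the fixed constant time $c = 0$.

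First I would recall the invariant $\clockCMt = 0$ and conclude that, on the family of instances produced by the reduction, plain reachability of~$\lochalt$ and constant-time reachability of~$\lochalt$ with $c = 0$ coincide. Then, since the halting problem for 2-counter machines is undecidable~\cite{Minsky67} and the reduction of \cref{lemma:undecidability-META-2-1} maps halting of~$\calM$ to reachability of~$\lochalt$ in~$\TA$, it equally maps it to constant-time reachability of~$\lochalt$ at time~$0$. This already shows undecidability even when the constant is fixed to~$0$; a fortiori the problem is undecidable when the constant is part of the input. The shape restriction (two energy variables, a single clock) is inherited verbatim from the META output by the reduction.

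There is essentially no obstacle here: the only point requiring a moment of care is the precise reading of ``constant-time reachability'' — whether the constant is universally quantified, existentially chosen, or given as input — but in all three readings the argument goes through, because the reduction produces runs of a single fixed duration~($0$), so it simultaneously rules out decidability for the specific constant~$0$ and hence for the general problem.
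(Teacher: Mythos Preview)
Your proposal is correct and takes essentially the same approach as the paper: both rely on the observation that the construction from \cref{lemma:undecidability-META-2-1} runs entirely in $0$-time, so reachability and constant-time reachability at $c=0$ coincide. The paper adds one small extra remark you omit, namely that any other constant $c$ can be handled by delaying the start of the 2-counter-machine simulation using the single clock; your ``a fortiori'' argument is a valid alternative to this explicit construction.
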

\begin{proof}
	From the fact the construction in the proof of \cref{lemma:undecidability-META-2-1} works in 0-time.
	Any other constant can be achieved by delaying the start of the simulation of the 2-counter machine, thanks to the (unique) clock.
\end{proof}
\subsubsection{Opacity is harder than reachability for discrete METAs}\label{appendix:hardness:opacity-reachability-META}
\begin{figure*}[tb]
	{\centering
	\begin{tikzpicture}[PTA]

		\node[location, initial] (l0) at (-2, 0) {$\locinit$};
		\node[location] (lf) at (+1.8, 0) {$\locfinal$};
		\node[cloud, cloud puffs=15.7, cloud ignores aspect, minimum width=5cm, minimum height=2cm, align=center, draw] (cloud) at (0cm, 0cm) {$\TA$};

		\node[location, private] (lpriv) at (0, -2) {$\locpriv$};
		\node[location] (l1) at (+4, 0) {$\loci{1}$};
		\node[location, final] (lf') at (+6, -1.2) {$\locfinal'$};

		\path
		(l0) edge[bend right] (lpriv)
		(lf) edge (l1)
		(lpriv) edge[out=0,in=200] (lf')
		(l1) edge[bend left] (lf')
		(l1) edge[loop above] node[align=center] {$\styleenergy{\enervari{i}} : + 1$} (l1)
		(l1) edge[loop below] node[align=center] {$\styleenergy{\enervari{i}} : - 1$} (l1)
		(lpriv) edge[loop above] node[right, align=center] {$\styleenergy{\enervari{i}} :+ 1$} (lpriv)
		;

	\end{tikzpicture}

	}
	\caption{Solving reachability using EN-opacity for discrete METAs}
	\label{figure:EN-opacity-harder-than-reachability-discrete-ETAs}
\end{figure*}
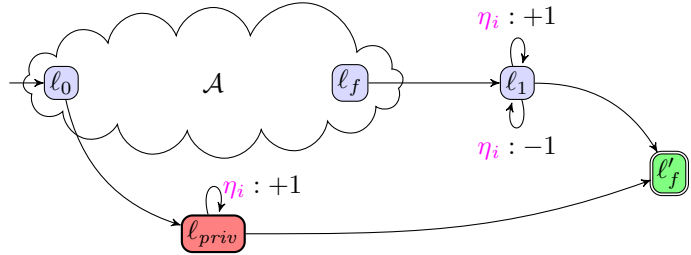
\begin{proposition}\label{proposition:EN-opacity-harder-reachability-discrete-METAs}
	For each $\existsWeakFull \in \{ \exists, \text{weak}, \text{full} \}$,
	reachability in discrete METAs can be solved using $\existsWeakFull$-EN-opacity.
\end{proposition}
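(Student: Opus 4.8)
The plan is to reduce location reachability in discrete METAs to $\existsWeakFull$-EN-opacity, following the construction in \cref{figure:EN-opacity-harder-than-reachability-discrete-ETAs}. Given a discrete META~$\TA$ with initial location~$\locinit$ and a distinguished location~$\locfinal$, I would build a discrete META~$\TA'$ that contains~$\TA$ verbatim and adds three fresh locations: (i)~a private location~$\locpriv$, reached from~$\locinit$ by an edge with trivially true guard and no update, and equipped, for every energy variable~$\enervari{i} \in \Energies$, with a self-loop incrementing~$\enervari{i}$; (ii)~a location~$\loci{1}$, reached from~$\locfinal$, equipped, for every~$\enervari{i}$, with one self-loop incrementing and one self-loop decrementing~$\enervari{i}$; (iii)~a final location~$\locfinal'$, reached from both~$\locpriv$ and~$\loci{1}$. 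All new edges carry a trivially true clock guard and no reset, all new locations a trivially true invariant, and one sets $\LocsPriv = \{ \locpriv \}$ and $\LocsFinal = \{ \locfinal' \}$. As no energy guard is introduced, $\TA'$ is a discrete (guarded) META of size linear in~$\TA$.

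I would then characterise the accepting runs of~$\TA'$. Since the three fresh locations only have outgoing edges towards one another, no run re-enters the embedded copy of~$\TA$; hence the locations of~$\TA$ reachable in~$\TA'$ are exactly those reachable in~$\TA$, and the sole edge into~$\loci{1}$ comes from~$\locfinal$. Therefore every accepting run is either \emph{private}---``go from~$\locinit$ to~$\locpriv$, loop on~$\locpriv$, then move to~$\locfinal'$''---such a run always exists, enters~$\locpriv$ with valuation~$\enervalZero$ (in a discrete META energy changes only via discrete updates, and none occur before~$\locpriv$), and by taking the $+1$ self-loops realises exactly every valuation of~$\setN^{\cardinality{\Energies}}$, so $\PrivEnerVisit{\TA'} = \setN^{\cardinality{\Energies}}$; or \emph{public} (it avoids~$\locpriv$)---``reach~$\locfinal$ inside~$\TA$, move to~$\loci{1}$, loop on~$\loci{1}$, then move to~$\locfinal'$''---such a run exists iff $\locfinal$ is reachable in~$\TA$, and when it does it enters~$\loci{1}$ with some integer valuation and, using the $+1$ and $-1$ self-loops (never forced below~$0$, since all targets are non-negative, so the constraint $\enerval \in \setRgeqzero^{\cardinality{\Energies}}$ is respected), realises all of~$\setN^{\cardinality{\Energies}}$. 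Thus $\PubEnerVisit{\TA'} = \setN^{\cardinality{\Energies}}$ if $\locfinal$ is reachable in~$\TA$, and $\PubEnerVisit{\TA'} = \emptyset$ otherwise.

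Finally, since $\PrivEnerVisit{\TA'} = \setN^{\cardinality{\Energies}} \supseteq \PubEnerVisit{\TA'}$ in all cases, the three notions collapse: $\TA'$ is $\exists$-EN-opaque iff $\PrivEnerVisit{\TA'} \cap \PubEnerVisit{\TA'} \neq \emptyset$ iff $\PubEnerVisit{\TA'} \neq \emptyset$; and $\TA'$ is weakly- (resp.\ fully-) EN-opaque iff $\PrivEnerVisit{\TA'} \subseteq \PubEnerVisit{\TA'}$ (resp.\ $\PrivEnerVisit{\TA'} = \PubEnerVisit{\TA'}$) iff $\PubEnerVisit{\TA'} = \setN^{\cardinality{\Energies}}$. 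In every case this holds iff $\locfinal$ is reachable in~$\TA$, which gives the reduction (and, with the undecidability of reachability in discrete guarded METAs of \cref{lemma:undecidability-META-2-1}, the undecidability of $\existsWeakFull$-EN-opacity). I expect the only delicate point to be the bookkeeping on the public side: because~$\locfinal$ may be entered with several different energy valuations, the two-way self-loops on~$\loci{1}$ are essential so that, from an arbitrary non-negative integer valuation, one can reach every non-negative integer valuation without violating~$\enerval \in \setRgeqzero^{\cardinality{\Energies}}$; dually, the private side must produce exactly~$\setN^{\cardinality{\Energies}}$ (including~$\enervalZero$ via zero loop iterations, and nothing more), so that the equalities and inclusions above genuinely reduce to reachability.
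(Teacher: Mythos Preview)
Your proposal is correct and follows essentially the same approach as the paper: the same gadget with $\locpriv$, $\loci{1}$, $\locfinal'$ and the same self-loops, and the same case analysis showing $\PrivEnerVisit{\TA'} = \setN^{\cardinality{\Energies}}$ always while $\PubEnerVisit{\TA'}$ is either $\setN^{\cardinality{\Energies}}$ or~$\emptyset$ according to reachability of~$\locfinal$. Your write-up is in fact more careful than the paper's (you make explicit why the $\pm 1$ loops on~$\loci{1}$ never force a negative intermediate valuation, and why private runs realise exactly~$\setN^{\cardinality{\Energies}}$), so nothing substantive needs to change.
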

\begin{proof}
	Let $\TA$ be a discrete META with initial location $\locinit$; let $\locfinal$ be one of its locations.
	We let $\TA'$ be the discrete META extending~$\TA$ as follows:
	\begin{itemize}
		\item we add a fresh location $\locpriv$ reachable from~$\locinit$ without guard;
		\item we add a fresh location $\loci{1}$ reachable from~$\locfinal$ without guard;
		\item we add a fresh location $\locfinal'$ reachable from~$\loci{1}$ and from~$\locpriv$ without guard;
		\item for each energy variable $\enervari{i}$ ($1 \leq i \leq |\Energies|$), we add a self-loop over~$\locpriv$ incrementing~$\enervari{i}$;
		\item for each energy variable $\enervari{i}$, we add two self-loops over~$\loci{1}$, one incrementing and one decrementing~$\enervari{i}$.
	\end{itemize}
	$\{ \locpriv \}$ is the private location set of~$\TA'$, while $\{ \locfinal' \}$ is its final location set.
	\cref{figure:EN-opacity-harder-than-reachability-discrete-ETAs} exemplifies this construction, where $\enervari{i} : + 1$ denotes $|\Energies|$ distinct self-loops incrementing one energy variable each.

	First note that $\PrivEnerVisit{\TA} = \setN^\Energies$, \ie{} the final energy can be any integer value for each energy variable.
	Furthermore, if $\locfinal$ is reachable, then $\PubEnerVisit{\TA} = \setN^\Energies$ (note that the energy can be arbitrary when reaching~$\locfinal$, but the self-loops over~$\loci{1}$ ensure that any value, smaller or larger, can be reached); in that case, all $\exists$, weak and full EN-opacity hold.
	Conversely, if $\locfinal$ is not reachable, then $\PubEnerVisit{\TA} = \emptyset$; in that case, none of the opacity definitions hold in~$\TA'$.

	As a consequence, for each $\existsWeakFull \in \{ \exists, \text{weak}, \text{full} \}$, $\existsWeakFull$-EN-opacity holds in~$\TA'$ iff $\locfinal$ is reachable in~$\TA$.
\end{proof}
\begin{figure*}[tb]
	{\centering
	\begin{tikzpicture}[PTA]

		\node[location, initial] (l0) at (-2, 0) {$\locinit$};
		\node[location] (lf) at (+1.8, 0) {$\locfinal$};
		\node[cloud, cloud puffs=15.7, cloud ignores aspect, minimum width=5cm, minimum height=2cm, align=center, draw] (cloud) at (0cm, 0cm) {$\TA$};

		\node[location, private] (lpriv) at (0, -2) {$\locpriv$};
		\node[location] (l1) at (+4, 0) {$\loci{1}$};
		\node[location] (l2) at (+4, +2) {$\loci{2}$};
		\node[location, final] (lf') at (+6, -1.2) {$\locfinal'$};

		\path
		(l0) edge[bend right] (lpriv)
		(l0) edge[bend angle=40, bend left] node[align=center]{$\clock > \maxConstantTReachability$} (l1)
		(l0) edge[bend left] node[align=center]{$\clock < \maxConstantTReachability$ \\ $\clock \assign 0$} (l2)
		(lf) edge (l1)
		(lpriv) edge[out=0,in=200] (lf')
		(l1) edge[bend left] (lf')
		(l2) edge[bend left] node[align=center]{$\clock = 0$} (lf')
		(l1) edge[loop above] node[align=center] {$\enervari{i} : + 1$} (l1)
		(l1) edge[loop below] node[align=center] {$\enervari{i} : - 1$} (l1)
		(l2) edge[loop above] node[align=center] {$\enervari{i} : + 1$} (l2)
		(lpriv) edge[loop above] node[right, align=center] {$\enervari{i} : + 1$} (lpriv)
		;

	\end{tikzpicture}

	}
	\caption{Solving reachability in constant time~$\maxConstantTReachability$ using ET-EN-opacity for discrete METAs}
	\label{figure:ETEN-opacity-harder-than-reachability-discrete-ETAs}
\end{figure*}
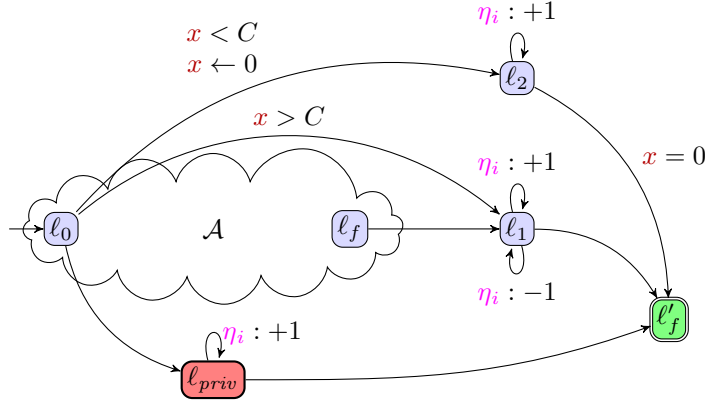
\begin{proposition}\label{proposition:exists-ETEN-opacity-harder-reachability-discrete-METAs}
	Reachability in discrete METAs can be solved using $\exists$-ET-EN-opacity.
\end{proposition}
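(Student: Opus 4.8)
The plan is to reduce (constant-time) reachability in discrete METAs with two energy variables and a single clock --- undecidable by \cref{corollary:undecidability-constant-time-META-2-1} (equivalently, by \cref{lemma:undecidability-META-2-1}, since the 2-counter machine encoding runs in $0$-time) --- to $\exists$-ET-EN-opacity, using the construction of \cref{figure:ETEN-opacity-harder-than-reachability-discrete-ETAs}. Given an instance consisting of a discrete META~$\TA$ with initial location~$\locinit$ and a distinguished location~$\locfinal$, I would build the discrete META~$\TA'$ that keeps~$\TA$ and adds: a fresh private location~$\locpriv$ reachable from~$\locinit$; two fresh locations $\loci{1},\loci{2}$; a fresh (and only) final location~$\locfinal'$; an edge $\locfinal \to \loci{1}$; and $0$-time self-loops pumping every energy variable $\enervari{i}$ upwards on~$\locpriv$ and~$\loci{2}$, and both upwards and downwards on~$\loci{1}$. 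The guards, resets and invariants are arranged (as in the figure, exploiting the constant~$\maxConstantTReachability$) so that the $\loci{2}$-branch yields exactly the durations of $[0,\maxConstantTReachability)$, the direct $\loci{1}$-branch exactly those of $(\maxConstantTReachability,+\infty)$, and every private run has duration exactly~$\maxConstantTReachability$. The private location set of~$\TA'$ is $\{\locpriv\}$ and its final location set is $\{\locfinal'\}$.

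Next I would characterise the relevant run sets. The private runs of~$\TA'$ are exactly $\locinit \to \locpriv \to^{\ast} \locpriv \to \locfinal'$; the up-loops over~$\locpriv$ (fired in $0$-time) let the final energy be any $\enerval \in \setN^{|\Energies|}$, and the timing gadget fixes the duration to~$\maxConstantTReachability$, so $\{(\runduration{\varrun},\finalEnergy(\varrun)) \mid \varrun \in \PrivVisit{\TA'}\} = \{\maxConstantTReachability\}\times\setN^{|\Energies|}$. The public runs come in three shapes --- through~$\loci{2}$ (duration $<\maxConstantTReachability$), directly through~$\loci{1}$ (duration $>\maxConstantTReachability$), and through~$\TA$ to~$\locfinal$ then~$\loci{1}$ --- and in each shape the self-loops over~$\loci{1}$ (resp.\ over~$\loci{2}$) can re-tune the final energy to any value of~$\setN^{|\Energies|}$, independently of the energy that~$\TA$ leaves behind. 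The first two shapes never produce duration~$\maxConstantTReachability$; the third produces it iff~$\locfinal$ is reachable in~$\TA$ (at a time at most~$\maxConstantTReachability$, which for the instances of interest --- those coming from \cref{corollary:undecidability-constant-time-META-2-1} --- is arranged to be the case). Consequently, $\TA'$ is $\exists$-ET-EN-opaque iff some public run has duration~$\maxConstantTReachability$ (its energy being then matchable by a private run, and conversely), i.e.\ iff~$\locfinal$ is reachable in~$\TA$; since reachability is undecidable, so is $\exists$-ET-EN-opacity, which --- together with \cref{proposition:EN-opacity-harder-reachability-discrete-METAs} --- completes \cref{proposition:undecidability-2-energies}.

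The main obstacle is engineering the timing and energy gadgets so that the private runs occupy \emph{exactly} the duration slice~$\{\maxConstantTReachability\}$ while the two auxiliary public branches cover \emph{exactly} its complement: getting the open/closed boundary at~$\maxConstantTReachability$ right (strict versus non-strict guards, and the invariants placed on~$\locpriv$ and~$\loci{2}$), and ensuring that the energy pumped along~$\loci{1}$ starting from whatever~$\TA$ leaves can still reach every target valuation without ever dropping below~$0$. Degenerate cases --- updates or transitions at time~$0$ --- also need a short separate treatment. Once the gadgets are fixed, the equivalence is a routine case split on the possible shapes of accepting runs of~$\TA'$, essentially identical in spirit to the analysis already carried out for \cref{proposition:EN-opacity-harder-reachability-discrete-METAs}.
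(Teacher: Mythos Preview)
Your plan works for the downstream undecidability argument, but it takes a detour that the paper avoids and, as stated, proves a slightly weaker proposition than the one you are asked to prove.

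The paper does \emph{not} use the construction of \cref{figure:ETEN-opacity-harder-than-reachability-discrete-ETAs} here. It reuses the simpler construction from \cref{proposition:EN-opacity-harder-reachability-discrete-METAs} (\cref{figure:EN-opacity-harder-than-reachability-discrete-ETAs}): only $\locpriv$, $\loci{1}$, $\locfinal'$ are added, with \emph{no} invariant on~$\locpriv$. Since time may elapse freely in~$\locpriv$ and the up-loops realise every energy, the private side already covers $\setRgeqzero \times \setN^{|\Energies|}$. Hence $\exists$-ET-EN-opacity holds iff there exists \emph{any} public accepting run at all, i.e.\ iff $\locfinal$ is reachable in~$\TA$. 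No constant~$\maxConstantTReachability$, no~$\loci{2}$, no boundary engineering; the decrement loops on~$\loci{1}$ are not even needed.

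Two points about your route. First, you mis-describe \cref{figure:ETEN-opacity-harder-than-reachability-discrete-ETAs}: in that figure $\locpriv$ carries no timing constraint, so private runs do \emph{not} have duration exactly~$\maxConstantTReachability$---they cover all durations; the r\^ole of~$\loci{1},\loci{2}$ there is to make the \emph{public} side cover $\setRgeqzero \setminus \{\maxConstantTReachability\}$, which is what weak/full opacity requires. Second, once you force private runs to the single duration~$\maxConstantTReachability$ (your modification), the equivalence you obtain is ``$\exists$-ET-EN-opaque iff $\locfinal$ is reachable in time $\leq \maxConstantTReachability$''. That is constant-time reachability, not reachability; it suffices for \cref{proposition:undecidability-2-energies} via \cref{corollary:undecidability-constant-time-META-2-1}, but it is not the statement of \cref{proposition:exists-ETEN-opacity-harder-reachability-discrete-METAs}. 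Dropping the duration constraint on~$\locpriv$ (and then discarding~$\loci{2}$ and the direct edge $\locinit \to \loci{1}$, which become useless) recovers the paper's argument immediately.
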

\begin{proof}
	One can reuse the construction from the proof of \cref{proposition:EN-opacity-harder-reachability-discrete-METAs} (given in \cref{figure:EN-opacity-harder-than-reachability-discrete-ETAs}).
	(The self-loops over $\loci{1}$ are actually not needed here.)
	Due to the absence of invariant in~$\locpriv$, $\locfinal'$ can be reached for any duration and energy valuation via $\locpriv$; then, there is at least one duration and one energy valuation reaching~$\locfinal'$ without visiting~$\locpriv$ (via $\locfinal$ and~$\loci{1}$) iff $\locfinal$ is reachable in~$\TA$.
	As a consequence, $\exists$-ET-EN-opacity holds in~$\TA'$ iff $\locfinal$ is reachable in~$\TA$.
\end{proof}
\begin{proposition}\label{proposition:ETEN-opacity-harder-reachability-discrete-METAs}
	Constant-time reachability in discrete METAs can be solved using weak-ET-EN-opacity (resp.\ full-ET-EN-opacity).
\end{proposition}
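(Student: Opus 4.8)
The plan is to reduce constant-time reachability for discrete METAs---undecidable by \cref{corollary:undecidability-constant-time-META-2-1}---simultaneously to weak-ET-EN-opacity and to full-ET-EN-opacity (which, together with \cref{corollary:undecidability-constant-time-META-2-1}, also supplies the ET-EN case of \cref{proposition:undecidability-2-energies}). Fix a discrete META~$\TA$, w.l.o.g.\ with no invariant on its initial location~$\locinit$, one of its locations~$\locfinal$, and a constant $\maxConstantTReachability \in \setN$. I would build the discrete META~$\TA'$ of \cref{figure:ETEN-opacity-harder-than-reachability-discrete-ETAs}: add a fresh clock~$\clockCMt$ that is never reset inside~$\TA$ (so it measures absolute time), a fresh private location~$\locpriv$ with no invariant, a fresh urgent location~$\loci{1}$, a fresh location~$\loci{2}$, and a fresh final location~$\locfinal'$; take $\{ \locpriv \}$ as private-location set of~$\TA'$ and $\{ \locfinal' \}$ as its final-location set. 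From~$\locinit$, add an unguarded edge to~$\locpriv$, an edge to~$\loci{1}$ guarded by $\clockCMt > \maxConstantTReachability$, and an edge to~$\loci{2}$ guarded by $\clockCMt < \maxConstantTReachability$ and resetting~$\clockCMt$; add an unguarded edge from the original~$\locfinal$ to~$\loci{1}$, edges from~$\loci{1}$ and from~$\locpriv$ to~$\locfinal'$, and an edge from~$\loci{2}$ to~$\locfinal'$ guarded by $\clockCMt = 0$ (which, with the entry reset, makes~$\loci{2}$ behave urgently). Finally, over~$\locpriv$ and over~$\loci{2}$ add, for every energy variable~$\enervari{i}$, a self-loop incrementing~$\enervari{i}$; and over~$\loci{1}$ add, for every~$\enervari{i}$, two self-loops, one incrementing and one decrementing~$\enervari{i}$.

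The key step is then to compute the set of pairs (a duration, a final energy valuation) realised by private, resp.\ public, runs of~$\TA'$. A private run must visit~$\locpriv$, hence is of the form: a step $\locinit \to \locpriv$, then some self-loops at~$\locpriv$ (instantaneous, updates being discrete), then $\locpriv \to \locfinal'$; since $\locinit$ may be left at once and~$\locpriv$ carries no invariant an arbitrary delay can be inserted, and the increment self-loops at~$\locpriv$ reach every valuation of~$\setN^\Energies$ from~$\enervalZero$. So the private runs realise exactly $\setRgeqzero \times \setN^\Energies$. A public run avoids~$\locpriv$ and is of one of three kinds: $(i)$ $\locinit \to \loci{2} \to \locfinal'$, realising, by the guards $\clockCMt < \maxConstantTReachability$ and $\clockCMt = 0$ and the reset, exactly $[0, \maxConstantTReachability) \times \setN^\Energies$; $(ii)$ $\locinit \to \loci{1} \to \locfinal'$, realising, by the guard $\clockCMt > \maxConstantTReachability$ and the $\pm 1$ self-loops at~$\loci{1}$, exactly $(\maxConstantTReachability, +\infty) \times \setN^\Energies$; $(iii)$ a run of~$\TA$ reaching~$\locfinal$ followed by $\locfinal \to \loci{1} \to \locfinal'$, which exists iff $\locfinal$ is reachable in~$\TA$ and, since~$\loci{1}$ is urgent, realises exactly $\{ (d, \enerval) \mid d \text{ is a duration at which } \locfinal \text{ is reachable in } \TA, \ \enerval \in \setN^\Energies \}$ --- here the $\pm 1$ self-loops at~$\loci{1}$ shift the (non-negative integer) energy with which~$\locfinal$ is entered to any target of~$\setN^\Energies$.

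Finally I would conclude. The set~$P$ of pairs realised by public runs is always contained in the set $Q = \setRgeqzero \times \setN^\Energies$ realised by private runs; moreover $Q \subseteq P$ iff the single missing slice $\{ \maxConstantTReachability \} \times \setN^\Energies$ is covered by~$P$, which by the case analysis holds iff $\maxConstantTReachability$ is a duration at which~$\locfinal$ is reachable in~$\TA$. Hence $\TA'$ is weakly ET-EN-opaque iff it is fully ET-EN-opaque iff $\locfinal$ is reachable in~$\TA$ in time exactly~$\maxConstantTReachability$, so any decision procedure for either opacity notion decides constant-time reachability. The delicate point I anticipate is the tightness of the public-run analysis: one must verify both that public runs realise \emph{every} duration distinct from~$\maxConstantTReachability$ (which is exactly what the two gadget edges out of~$\locinit$ buy, and what forces the ``only if'' direction) and that routing through~$\loci{1}$ preserves durations exactly rather than yielding an upward-closed set --- which is precisely why $\loci{1}$ must be urgent; dropping urgency would make the construction capture only reachability \emph{within} time~$\maxConstantTReachability$.
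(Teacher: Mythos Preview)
Your construction and case analysis are essentially the paper's: the same gadget with $\locpriv$, $\loci{1}$, $\loci{2}$, $\locfinal'$, the same $<\maxConstantTReachability$ / $>\maxConstantTReachability$ split out of~$\locinit$, and the same self-loops. Two small deviations are worth recording. First, you introduce a fresh clock, whereas the paper reuses ``an arbitrary clock from~$\TA$''; your choice is cleaner and sidesteps potential issues with that clock being reset or constrained inside~$\TA$. Second---and this is the point you rightly flag---you make $\loci{1}$ urgent while the paper does not. Without urgency, a public run of duration exactly~$\maxConstantTReachability$ exists as soon as $\locfinal$ is reachable at \emph{some} time $d\leq\maxConstantTReachability$ (reach~$\loci{1}$ from~$\locfinal$ at time~$d$ and then wait), so the paper's construction as written reduces from ``reachable within time~$\maxConstantTReachability$'' rather than ``at time exactly~$\maxConstantTReachability$''. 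Both variants are undecidable for the 2CM encoding of \cref{lemma:undecidability-META-2-1} (which runs in time~$0$), so the downstream \cref{proposition:undecidability-2-energies} is unaffected either way; but your urgency requirement is what makes the equivalence with constant-time reachability exact.
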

\begin{proof}
	Let $\TA$ be a discrete META with initial location $\locinit$; let $\locfinal$ be one of its locations.
	We let $\TA'$ be the discrete META extending~$\TA$ as follows:
	\begin{itemize}
		\item we add a fresh location $\locpriv$ reachable from~$\locinit$ without guard;
		\item we add a fresh location $\loci{1}$ reachable from~$\locfinal$ without guard;
		\item we add a fresh location $\locfinal'$ reachable from~$\loci{1}$ and from~$\locpriv$ without guard;
		\item we add a fresh location $\loci{2}$ reachable from~$\locinit$, guarded by $\clock < \maxConstantTReachability$ and resetting~$\clock$ (where $\clock$ is an arbitrary clock from~$\TA$);
		\item we add a transition from~$\locinit$ to~$\loci{1}$ guarded by $\clock > \maxConstantTReachability$;
		\item we add an urgent transition from~$\loci{2}$ to~$\locfinal'$ (\ie{} guarded by $\clock = 0$);
		\item for each energy variable $\enervari{i}$, we add a self-loop over~$\locpriv$ (resp.\ over~$\loci{2}$) incrementing~$\enervari{i}$;
		\item for each energy variable $\enervari{i}$, we add two self-loops over~$\loci{1}$, one incrementing and one decrementing~$\enervari{i}$.
	\end{itemize}
	$\{ \locpriv \}$ is the private location set of~$\TA'$, while $\{ \locfinal' \}$ is its final location set.
	\cref{figure:ETEN-opacity-harder-than-reachability-discrete-ETAs} exemplifies this construction.

	Let $\maxConstantTReachability \in \setN$.
	First note that any energy and any duration can reach~$\locfinal'$ via~$\locpriv$.
	Second, note that any duration $> \maxConstantTReachability$ (resp.\ $< \maxConstantTReachability$) and any energy can reach~$\locfinal'$ without visiting~$\locpriv$, via~$\loci{1}$ (resp.\ via~$\loci{2}$).
	Third, any energy for duration $= \maxConstantTReachability$ can reach~$\locfinal'$ via~$\locfinal$ and~$\loci{1}$ without visiting~$\locpriv$, iff $\locfinal$ is reachable in time~$\maxConstantTReachability$.

	From this construction, it follows that weak-ET-EN-opacity and full-ET-EN-opacity hold in~$\TA'$ iff $\locfinal$ is reachable in~$\TA$ in time~$\maxConstantTReachability$.
\end{proof}
\subsubsection{Proof of \cref{proposition:undecidability-2-energies}}

We can finally prove \cref{proposition:undecidability-2-energies}.

\undecTwoEnergies*
\begin{proof}
	Recall that:
	\begin{itemize}
		\item For each $\existsWeakFull \in \{ \exists, \text{weak}, \text{full} \}$, $\existsWeakFull$-EN-opacity is harder than reachability in discrete METAs (\cref{proposition:EN-opacity-harder-reachability-discrete-METAs});
		\item $\exists$-ET-EN-opacity is harder than reachability in discrete METAs (\cref{proposition:exists-ETEN-opacity-harder-reachability-discrete-METAs});
		\item Weak-ET-EN-opacity and full-ET-EN-opacity are harder than constant-time reachability in discrete METAs (\cref{proposition:ETEN-opacity-harder-reachability-discrete-METAs});
		\item Reachability is undecidable in discrete guarded METAs with two energy variables and a single clock (\cref{lemma:undecidability-META-2-1});
			and
		\item Constant-time reachability is undecidable in discrete guarded METAs with two energy variables and a single clock (\cref{corollary:undecidability-constant-time-META-2-1}).
	\end{itemize}

	As a consequence, for each $\existsWeakFull \in \{ \exists, \text{weak}, \text{full} \}$, and $\ENorETEN \in \{ \text{EN}, \text{ET-EN} \}$, $\existsWeakFull$-$\ENorETEN$-opacity is undecidable in discrete guarded METAs with 2~energy variables and 1~clock.
\end{proof}
\subsection{Proof of \cref{theorem:EN:discrete-guarded-ETAs}}\label{appendix:theorem:EN:discrete-guarded-ETAs}

A first attempt to prove \cref{theorem:EN:discrete-guarded-ETAs} would be to follow the reasoning from \cref{proposition:removing-guards-discrete}, by using copies of the guarded META, each copy encoding the current (discrete) value of the energy.
However, this leads to the following problem: the last copy (encoding a value of energy greater than the maximum constant~$\maxConstantEnergy$ used in energy guards) does not keep track of the exact energy value---this prevents decreasing the energy to the appropriate copy.
We will therefore combine two ideas, namely from the proofs of \cref{proposition:removing-guards-discrete,theorem:EN:discrete-ETAs}.

\theoremENDiscreteGuardedETAs*
\begin{proof}[Proof (sketch)]
	To circumvent the issue in the proof of \cref{proposition:removing-guards-discrete} losing the value of the energy after it exceeds the maximum constant~$\maxConstantEnergy$ used in energy guards, we will use a stack.
	To avoid using two stacks (in addition to the stack used in the proof of \cref{theorem:EN:discrete-ETAs}), we use the \emph{same} stack, in addition to the copy mechanism, which we use this time in the region automaton.

	We embed the energy guards into the actions, \ie{} each energy guard becomes a fresh action, potentially associated to the original action (if any).
	The region automaton is therefore equivalent to the system without any energy guard---but we will reintroduce them subsequently.
	Once the region automaton is built, we first copy $\maxConstantEnergy + 2$ times this automaton, in the line of the mechanism in the proof of \cref{proposition:removing-guards-discrete}.
	We then evaluate the transitions when needed: an energy ``hard-coded'' as an action in one of the first $\maxConstantEnergy+1$ copies can now either be replaced with a silent action (or with the original action) if the current copy satisfies the energy guard, or its transition completely deleted otherwise. Note that if a decrement is found in the copy encoding the energy 0, then the action must be removed, as our systems doesn't allow energies below zero.
	In the last copy of the region automaton (encoding $> \maxConstantEnergy$), we use the stack:
	when performing increments or decrements (actions $\tickIncr$ and~$\tickDecr$ in the region automata), we start pushing symbols or popping symbols, in the line of the proof of \cref{theorem:EN:discrete-ETAs}.
	Finally, we modify the last items from the proof of \cref{theorem:EN:discrete-ETAs} as follows:
	in addition to the self-loops over accepting regions~$\qfinal$, we then add for each such region a transition labelled via~$\action$ to the copy encoding an immediately lower energy level.
	Only regions $\qfinal'$ in the copy encoding energy~0 are made accepting.
	This ensures that, for a value of energy greater than the maximum constant, first the stack is emptied with actions~$\action$; only after the stack is emptied, the automaton has to go down to the level~0, using an action~$\action$ each time a level is traversed.
	The number of $\action$ correctly encodes the final energy level.

	\cref{figure:example-theorem:EN:discrete-guarded-ETAs-d-g-ETA} exemplifies our construction.
\end{proof}

	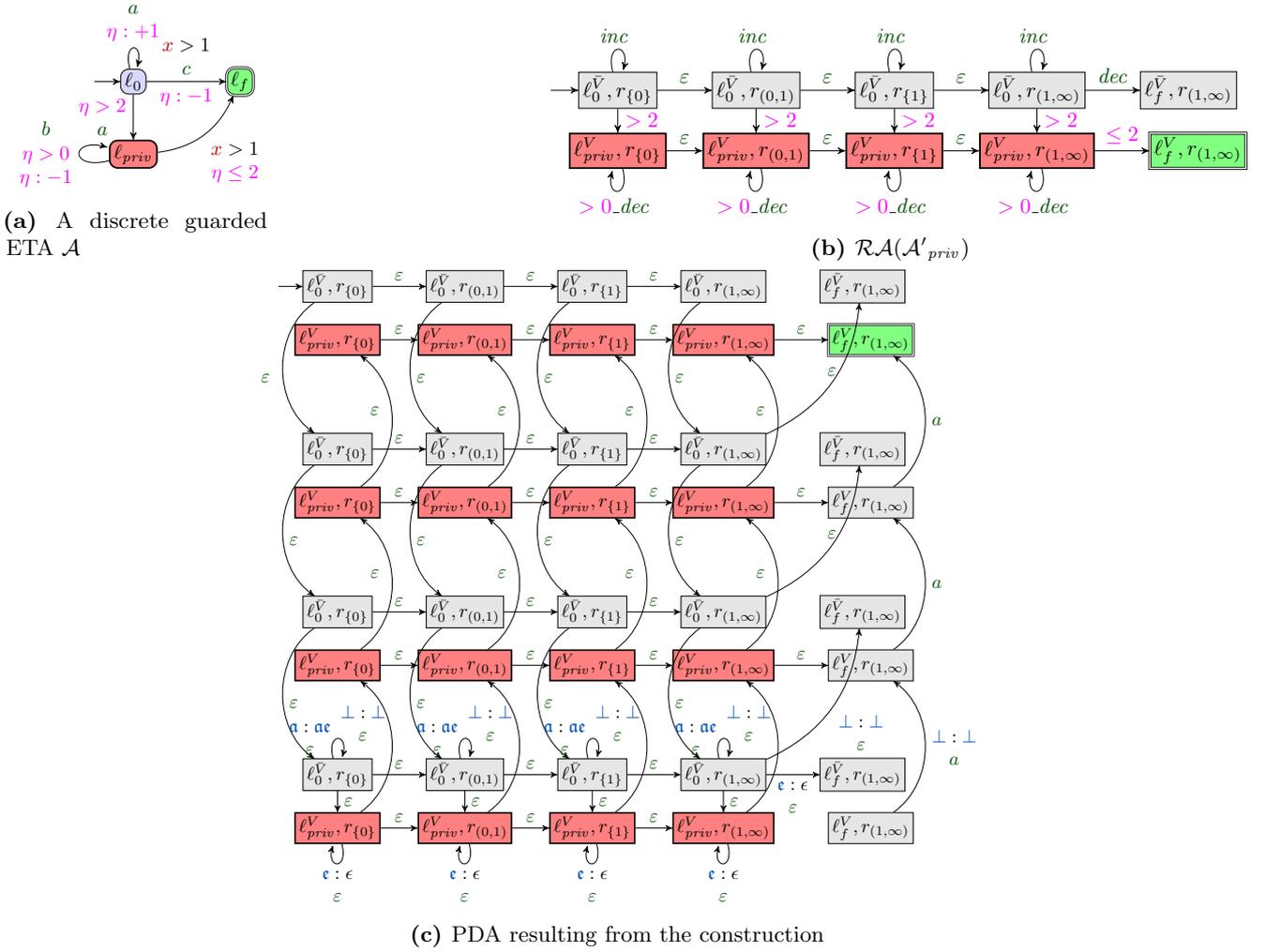
\begin{figure*}[tb]
		\begin{subfigure}[b]{.45\columnwidth}
			\centering
			\scalebox{.8}{
				\begin{tikzpicture}[PTA, node distance=.8cm and 1.4cm]
					\node[location, initial] (l0) {$\locinit$};
					\node[location, private, below =of l0] (lpriv) {$\locpriv$};
					\node[location, final, right =of l0] (lF) {$\locfinal$};

					\path (l0) edge[loop above] node[above, align=center]{$\styleact{a}$ \\ $\styleenergy{\enervar : +1}$} (l0);
					\path (l0) edge node[left, align=center]{$\styleenergy{\enervar > 2}$ \\ $\styleact{a}$} (lpriv);
					\path (l0) edge node[above, align=center]{$\clock > 1$ \\ $\styleact{c}$} node[below, align=center]{$\styleenergy{\enervar : -1}$} (lF);
					\path (lpriv) edge[loop left] node[left, align=center]{ $\styleact{b}$ \\ $\styleenergy{\enervar > 0}$ \\ $\styleenergy{\enervar : -1}$} (lpriv);
					\path (lpriv) edge[bend right] node[below right, align=center]{$\clock > 1$ \\ $\styleenergy{\enervar \leq 2}$} (lF);
				\end{tikzpicture}
			}
			
			\caption{A discrete guarded ETA~$\TA$}
			\label{figure:example-d-g-ETA}
		\end{subfigure}
		\hfill{}
		\begin{subfigure}[b]{\linewidth}
			\centering
			\scalebox{.85}{
				\begin{tikzpicture}[NFA, node distance=.4cm and .9cm]
					\node[state, initial] (l0s0) {$\locinit^{\bar{V}},\regioni{\{0\}}$};
					\node[state, private, below = of l0s0] (lps0) {$\locpriv^V,\regioni{\{0\}}$};
					\node[state,  right = of l0s0] (l0s1) {$\locinit^{\bar{V}},\regioni{(0,1)}$};
					\node[state, private, below = of l0s1] (lps1) {$\locpriv^V,\regioni{(0,1)}$};
					\node[state, right = of l0s1, xshift=0em] (l0s2) {$\locinit^{\bar{V}},\regioni{\{1\}}$};
					\node[state, private, below = of l0s2] (lps2) {$\locpriv^V,\regioni{\{1\}}$};
					\node[state, right = of l0s2] (l0s3) {$\locinit^{\bar{V}},\regioni{(1,\infty)}$};
					\node[state, private, below = of l0s3] (lps3) {$\locpriv^V,\regioni{(1,\infty)}$};
					\node[state, right = of l0s3] (lFs3) {$\locfinal^{\bar{V}},\regioni{(1,\infty)}$};
					\node[state, final, right = of lps3] (lFps3) {$\locfinal^V,\regioni{(1,\infty)}$};
					
					\path (l0s0) edge node[right, align=center]{$\styleenergy{>2}$} (lps0);
					\path (l0s1) edge node[right, align=center]{$\styleenergy{>2}$} (lps1);
					\path (l0s2) edge node[right, align=center]{$\styleenergy{>2}$} (lps2);
					\path (l0s3) edge node[right, align=center]{$\styleenergy{>2}$} (lps3);
					
					\path (l0s0) edge[loop above] node[above, align=center]{ $\styleact{\tickIncr}$ } (l0s0);
					\path (l0s1) edge[loop above] node[above, align=center]{ $\styleact{\tickIncr}$ } (l0s1);
					\path (l0s2) edge[loop above] node[above, align=center]{ $\styleact{\tickIncr}$ } (l0s2);
					\path (l0s3) edge[loop above] node[above, align=center]{ $\styleact{\tickIncr}$ } (l0s3);

					\path (lps0) edge[loop below] node[below, align=center]{ $\styleenergy{>0}\_\styleact{\tickDecr}$ } (lps0);
					\path (lps1) edge[loop below] node[below, align=center]{ $\styleenergy{>0}\_\styleact{\tickDecr}$ } (lps1);
					\path (lps2) edge[loop below] node[below, align=center]{ $\styleenergy{>0}\_\styleact{\tickDecr}$ } (lps2);
					\path (lps3) edge[loop below] node[below, align=center]{ $\styleenergy{>0}\_\styleact{\tickDecr}$ } (lps3);
					
					\path (l0s3) edge node[above, align=center]{$\styleact{\tickDecr}$} (lFs3);
					\path (lps3) edge node[above, align=center]{$\styleenergy{\leq 2}$} (lFps3);
					
					\path (l0s0) edge node[above, align=center]{$\silentaction$} (l0s1);
					\path (l0s1) edge node[above, align=center]{$\silentaction$} (l0s2);
					\path (l0s2) edge node[above, align=center]{$\silentaction$} (l0s3);	
					
					\path (lps0) edge node[above, align=center]{$\silentaction$} (lps1);
					\path (lps1) edge node[above, align=center]{$\silentaction$} (lps2);
					\path (lps2) edge node[above, align=center]{$\silentaction$} (lps3);		

				\end{tikzpicture}
			}
			
			\caption{$\RegionAutomaton{\Apriv{\TA'}}$}
			\label{figure:example-d-g-ETA-Apriv}
		\end{subfigure}

			\begin{subfigure}[b]{\linewidth}
			\centering
			\scalebox{.75}{
				\begin{tikzpicture}[NFA, node distance=.4cm and 1cm]
					\node[state, initial] (l0s0-0) {$\locinit^{\bar{V}},\regioni{\{0\}}$};
					\node[state, private, below = of l0s0-0] (lps0-0) {$\locpriv^V,\regioni{\{0\}}$};
					\node[state,  right = of l0s0-0] (l0s1-0) {$\locinit^{\bar{V}},\regioni{(0,1)}$};
					\node[state, private, below = of l0s1-0] (lps1-0) {$\locpriv^V,\regioni{(0,1)}$};
					\node[state, right = of l0s1-0, xshift=0em] (l0s2-0) {$\locinit^{\bar{V}},\regioni{\{1\}}$};
					\node[state, private, below = of l0s2-0] (lps2-0) {$\locpriv^V,\regioni{\{1\}}$};
					\node[state, right = of l0s2-0] (l0s3-0) {$\locinit^{\bar{V}},\regioni{(1,\infty)}$};
					\node[state, private, below = of l0s3-0] (lps3-0) {$\locpriv^V,\regioni{(1,\infty)}$};
					\node[state, right = of l0s3-0] (lFs3-0) {$\locfinal^{\bar{V}},\regioni{(1,\infty)}$};
					\node[state, final, right = of lps3-0] (lFps3-0) {$\locfinal^V,\regioni{(1,\infty)}$};
					
					\node[state, below = of lps0-0, yshift = -3em] (l0s0-1) {$\locinit^{\bar{V}},\regioni{\{0\}}$};
					\node[state, private, below = of l0s0-1] (lps0-1) {$\locpriv^V,\regioni{\{0\}}$};
					\node[state,  right = of l0s0-1] (l0s1-1) {$\locinit^{\bar{V}},\regioni{(0,1)}$};
					\node[state, private, below = of l0s1-1] (lps1-1) {$\locpriv^V,\regioni{(0,1)}$};
					\node[state, right = of l0s1-1, xshift=0em] (l0s2-1) {$\locinit^{\bar{V}},\regioni{\{1\}}$};
					\node[state, private, below = of l0s2-1] (lps2-1) {$\locpriv^V,\regioni{\{1\}}$};
					\node[state, right = of l0s2-1] (l0s3-1) {$\locinit^{\bar{V}},\regioni{(1,\infty)}$};
					\node[state, private, below = of l0s3-1] (lps3-1) {$\locpriv^V,\regioni{(1,\infty)}$};
					\node[state, right = of l0s3-1] (lFs3-1) {$\locfinal^{\bar{V}},\regioni{(1,\infty)}$};
					\node[state, right = of lps3-1] (lFps3-1) {$\locfinal^V,\regioni{(1,\infty)}$};
					
					\node[state, below = of lps0-1, yshift = -3em] (l0s0-2) {$\locinit^{\bar{V}},\regioni{\{0\}}$};
					\node[state, private, below = of l0s0-2] (lps0-2) {$\locpriv^V,\regioni{\{0\}}$};
					\node[state,  right = of l0s0-2] (l0s1-2) {$\locinit^{\bar{V}},\regioni{(0,1)}$};
					\node[state, private, below = of l0s1-2] (lps1-2) {$\locpriv^V,\regioni{(0,1)}$};
					\node[state, right = of l0s1-2, xshift=0em] (l0s2-2) {$\locinit^{\bar{V}},\regioni{\{1\}}$};
					\node[state, private, below = of l0s2-2] (lps2-2) {$\locpriv^V,\regioni{\{1\}}$};
					\node[state, right = of l0s2-2] (l0s3-2) {$\locinit^{\bar{V}},\regioni{(1,\infty)}$};
					\node[state, private, below = of l0s3-2] (lps3-2) {$\locpriv^V,\regioni{(1,\infty)}$};
					\node[state, right = of l0s3-2] (lFs3-2) {$\locfinal^{\bar{V}},\regioni{(1,\infty)}$};
					\node[state, right = of lps3-2] (lFps3-2) {$\locfinal^V,\regioni{(1,\infty)}$};
					
					\node[state, below = of lps0-2, yshift = -3em] (l0s0-3) {$\locinit^{\bar{V}},\regioni{\{0\}}$};
					\node[state, private, below = of l0s0-3] (lps0-3) {$\locpriv^V,\regioni{\{0\}}$};
					\node[state,  right = of l0s0-3] (l0s1-3) {$\locinit^{\bar{V}},\regioni{(0,1)}$};
					\node[state, private, below = of l0s1-3] (lps1-3) {$\locpriv^V,\regioni{(0,1)}$};
					\node[state, right = of l0s1-3, xshift=0em] (l0s2-3) {$\locinit^{\bar{V}},\regioni{\{1\}}$};
					\node[state, private, below = of l0s2-3] (lps2-3) {$\locpriv^V,\regioni{\{1\}}$};
					\node[state, right = of l0s2-3] (l0s3-3) {$\locinit^{\bar{V}},\regioni{(1,\infty)}$};
					\node[state, private, below = of l0s3-3] (lps3-3) {$\locpriv^V,\regioni{(1,\infty)}$};
					\node[state, right = of l0s3-3] (lFs3-3) {$\locfinal^{\bar{V}},\regioni{(1,\infty)}$};
					\node[state, right = of lps3-3] (lFps3-3) {$\locfinal^V,\regioni{(1,\infty)}$};

					\path (l0s0-0) edge[bend right=55] node[below left, align=center]{ $\silentaction$ } (l0s0-1);
					\path (l0s1-0) edge[bend right=55] node[below right, align=center]{ $\silentaction$ } (l0s1-1);
					\path (l0s2-0) edge[bend right=55] node[below right, align=center]{ $\silentaction$ } (l0s2-1);
					\path (l0s3-0) edge[bend right=55] node[below right, align=center]{ $\silentaction$ } (l0s3-1);
					
					\path (l0s0-1) edge[bend right=55] node[below right, align=center]{ $\silentaction$ } (l0s0-2);
					\path (l0s1-1) edge[bend right=55] node[below right, align=center]{ $\silentaction$ } (l0s1-2);
					\path (l0s2-1) edge[bend right=55] node[below right, align=center]{ $\silentaction$ } (l0s2-2);
					\path (l0s3-1) edge[bend right=55] node[below right, align=center]{ $\silentaction$ } (l0s3-2);
					
					\path (l0s0-2) edge[bend right=55] node[below right, align=center]{ $\silentaction$ } (l0s0-3);
					\path (l0s1-2) edge[bend right=55] node[below right, align=center]{ $\silentaction$ } (l0s1-3);
					\path (l0s2-2) edge[bend right=55] node[below right, align=center]{ $\silentaction$ } (l0s2-3);
					\path (l0s3-2) edge[bend right=55] node[below right, align=center]{ $\silentaction$ } (l0s3-3);

					\path (lps3-0) edge node[above, align=center]{$\silentaction$} (lFps3-0);
					
					\path (l0s0-0) edge node[above, align=center]{$\silentaction$} (l0s1-0);
					\path (l0s1-0) edge node[above, align=center]{$\silentaction$} (l0s2-0);
					\path (l0s2-0) edge node[above, align=center]{$\silentaction$} (l0s3-0);	
					
					\path (lps0-0) edge node[above, align=center]{$\silentaction$} (lps1-0);
					\path (lps1-0) edge node[above, align=center]{$\silentaction$} (lps2-0);
					\path (lps2-0) edge node[above, align=center]{$\silentaction$} (lps3-0);

					\path (lps0-1) edge[bend right=55] node[above left, align=center]{ $\silentaction$ } (lps0-0);
					\path (lps1-1) edge[bend right=55] node[above left, align=center]{ $\silentaction$ } (lps1-0);
					\path (lps2-1) edge[bend right=55] node[above left, align=center]{ $\silentaction$ } (lps2-0);
					\path (lps3-1) edge[bend right=55] node[above left, align=center]{ $\silentaction$ } (lps3-0);
					
					\path (l0s3-1) edge[bend right] node[above, align=center]{$\silentaction$} (lFs3-0);
					\path (lps3-1) edge node[above, align=center]{$\silentaction$} (lFps3-1);
					
					\path (l0s0-1) edge node[above, align=center]{$\silentaction$} (l0s1-1);
					\path (l0s1-1) edge node[above, align=center]{$\silentaction$} (l0s2-1);
					\path (l0s2-1) edge node[above, align=center]{$\silentaction$} (l0s3-1);	
					
					\path (lps0-1) edge node[above, align=center]{$\silentaction$} (lps1-1);
					\path (lps1-1) edge node[above, align=center]{$\silentaction$} (lps2-1);
					\path (lps2-1) edge node[above, align=center]{$\silentaction$} (lps3-1);

					\path (lps0-2) edge[bend right=55] node[above left, align=center]{ $\silentaction$ } (lps0-1);
					\path (lps1-2) edge[bend right=55] node[above left, align=center]{ $\silentaction$ } (lps1-1);
					\path (lps2-2) edge[bend right=55] node[above left, align=center]{ $\silentaction$ } (lps2-1);
					\path (lps3-2) edge[bend right=55] node[above left, align=center]{ $\silentaction$ } (lps3-1);
					
					\path (l0s3-2) edge[bend right] node[above, align=center]{$\silentaction$} (lFs3-1);
					\path (lps3-2) edge node[above, align=center]{$\silentaction$} (lFps3-2);
					
					\path (l0s0-2) edge node[above, align=center]{$\silentaction$} (l0s1-2);
					\path (l0s1-2) edge node[above, align=center]{$\silentaction$} (l0s2-2);
					\path (l0s2-2) edge node[above, align=center]{$\silentaction$} (l0s3-2);	
					
					\path (lps0-2) edge node[above, align=center]{$\silentaction$} (lps1-2);
					\path (lps1-2) edge node[above, align=center]{$\silentaction$} (lps2-2);
					\path (lps2-2) edge node[above, align=center]{$\silentaction$} (lps3-2);

					\path (l0s0-3) edge node[right, align=center]{$\silentaction$} (lps0-3);
					\path (l0s1-3) edge node[right, align=center]{$\silentaction$} (lps1-3);
					\path (l0s2-3) edge node[right, align=center]{$\silentaction$} (lps2-3);
					\path (l0s3-3) edge node[right, align=center]{$\silentaction$} (lps3-3);
					
					\path (l0s0-3) edge[loop above] node[left, align=center]{ $\stackSymbol : \stackSymbol \PDAstackEnergy$ \\ $\silentaction$} (l0s0-3);
					\path (l0s1-3) edge[loop above] node[left, align=center]{ $\stackSymbol : \stackSymbol \PDAstackEnergy$ \\ $\silentaction$} (l0s1-3);
					\path (l0s2-3) edge[loop above] node[left, align=center]{ $\stackSymbol : \stackSymbol \PDAstackEnergy$ \\ $\silentaction$} (l0s2-3);
					\path (l0s3-3) edge[loop above] node[left, align=center]{ $\stackSymbol : \stackSymbol \PDAstackEnergy$ \\ $\silentaction$} (l0s3-3);

					\path (lps0-3) edge[bend right=55] node[above left, align=center]{ $\initialStackSymbol : \initialStackSymbol$ \\ $\silentaction$ } (lps0-2);
					\path (lps1-3) edge[bend right=55] node[above left, align=center]{ $\initialStackSymbol : \initialStackSymbol$ \\ $\silentaction$ } (lps1-2);
					\path (lps2-3) edge[bend right=55] node[above left, align=center]{ $\initialStackSymbol : \initialStackSymbol$ \\ $\silentaction$ } (lps2-2);
					\path (lps3-3) edge[bend right=55] node[above left, align=center]{ $\initialStackSymbol : \initialStackSymbol$ \\ $\silentaction$ } (lps3-2);
					
					\path (lps0-3) edge[loop below] node[below, align=center]{ $\PDAstackEnergy : \emptyword$ \\ $\silentaction$ } (lps0-3);
					\path (lps1-3) edge[loop below] node[below, align=center]{ $\PDAstackEnergy : \emptyword$ \\ $\silentaction$ } (lps1-3);
					\path (lps2-3) edge[loop below] node[below, align=center]{ $\PDAstackEnergy : \emptyword$ \\ $\silentaction$ } (lps2-3);
					\path (lps3-3) edge[loop below] node[below, align=center]{ $\PDAstackEnergy : \emptyword$ \\ $\silentaction$ } (lps3-3);
					
					\path (l0s3-3) edge node[below, align=center]{$\PDAstackEnergy : \emptyseq$ \\ $\silentaction$} (lFs3-3);
					\path (l0s3-3) edge[bend right] node[below right, align=center]{$\initialStackSymbol : \initialStackSymbol$ \\ $\silentaction$} (lFs3-2);

					\path (l0s0-3) edge node[above, align=center]{$\silentaction$} (l0s1-3);
					\path (l0s1-3) edge node[above, align=center]{$\silentaction$} (l0s2-3);
					\path (l0s2-3) edge node[above, align=center]{$\silentaction$} (l0s3-3);	
					
					\path (lps0-3) edge node[above, align=center]{$\silentaction$} (lps1-3);
					\path (lps1-3) edge node[above, align=center]{$\silentaction$} (lps2-3);
					\path (lps2-3) edge node[above, align=center]{$\silentaction$} (lps3-3);

					\path (lFps3-3) edge[bend right=55] node[right, align=center]{ $\initialStackSymbol : \initialStackSymbol$ \\ $\styleact{\action}$ } (lFps3-2);
					\path (lFps3-2) edge[bend right=55] node[right, align=center]{ $\styleact{\action}$ } (lFps3-1);
					\path (lFps3-1) edge[bend right=55] node[right, align=center]{ $\styleact{\action}$ } (lFps3-0);

				\end{tikzpicture}
			}
			
			\caption{PDA resulting from the construction}
			\label{figure:example-PDA-theorem:EN:discrete-guarded-ETAs}
		\end{subfigure}
			
		\caption{Exemplifying the construction of \cref{theorem:EN:discrete-guarded-ETAs}}
		\label{figure:example-theorem:EN:discrete-guarded-ETAs-d-g-ETA}
	\end{figure*}
	\section{Proofs of \cref{section:DE}}
	\subsection{Proof of \cref{proposition:undecidability-2-energies-DE}}\label{appendix:proof:proposition:undecidability-2-energies-DE}
	\undecTwoEnergiesDE*
	\begin{proof}
		The proof if a straightforward adaptation of the proof of \cref{proposition:undecidability-2-energies}.
		First note that, for a system running in time $\leq 1$, DE-opacity is equivalent to EN-opacity, since observing the energy every time unit is equivalent to observe energy only at the end of the execution.
		The proof then follows by choosing $\maxConstantTReachability = 1$ in the proof of \cref{proposition:ETEN-opacity-harder-reachability-discrete-METAs,proposition:undecidability-2-energies}.
	\end{proof}
	\subsection{Proof of \cref{theorem:DE-Opacity}}\label{appendix:proof:theorem:DE-Opacity}
	\theoremDEopacity*
	\begin{proof}
		Let $\TA$ be a discrete positive ETA.
		We construct $\Language(\RegionAutomaton{\Apriv{\TA''}})$ and $\Language(\RegionAutomaton{\Apub{\TA''}})$ the same way as in the proof of \cref{theorem:ET-EN-fullweak-discrete-positive-guarded-METAs}, except that we replace the last action $\tickLastTimeRational$ with~$\tickTime$;
			this way, we encode the fact that we observe the last time unit, without making a difference between energy changes occurring in 0-time before the last observable time, or in the preceding open interval.
		Verifying weak DE-opacity amounts to checking $\Language(\RegionAutomaton{\Apriv{\TA''}}) \subseteq \Language(\RegionAutomaton{\Apub{\TA''}})$, which is a language inclusion problem in non-deterministic finite automata (over 2~symbols), which is \PSPACE-complete~\cite{KMT17}.
		Full DE-opacity is similar.
		$\exists$-DE-opacity is solved by checking the non-emptiness of the intersection, which can be verified in polynomial time~\cite{KMT17}.
		The complexity remains unchanged for positive \emph{guarded} ETAs since the transformation in the proof of \cref{proposition:removing-guards-discrete} is linear for~ETAs.
	\end{proof}
	\subsection{Details on the proof of \cref{theorem:DE-Opacity-META}}\label{appendix:proof:theorem:DE-Opacity-META}

	We give additional details regarding the formalisation of our construction of ``Parikh by block'' automaton.
	Given an NFA~$\TA$ with alphabet $\Actions$, and given $\Actions' \subseteq \Actions$, the construction $\ParikbB(\TA)$ will construct an NFA (technically an extension of NFAs with two labels, one being an action as usual, and the other one being a semilinear set), encoding the number of actions of each symbol in $\Actions \setminus \Actions'$ in between two consecutive actions in~$\Actions'$.
	\begin{enumerate}
			\item Add to a list $L$ the initial state of~$\TA$ and every state that has an incoming transition with some action~$\tickTime$, for $\tickTime \in \Actions'$.
			\item For all $\q \in L$, and for all state $\q''$ such that
	there exists in~$\TA$ a transition $(\q',\tickTime,\q'')$,
			compute the Parikh's image of the automaton where
			\begin{itemize}
				\item All transitions with action~$\tickTime$, for $\tickTime \in \Actions'$, are removed,
				\item $\q$ is the initial state,
				\item %
					$\q'$ is the only final state.
			\end{itemize}
			Let $P_\tickTime(\q, \q'')$ denote this semilinear set. %

			\item Finally build a new automaton $\ParikbB(\TA)$ such that:
			\begin{itemize}
				\item The initial state and final states are the same as the original~$\TA$,
				\item For all non-empty set $P_\tickTime(\q, \q'), \tickTime \in \Actions'$,
					add states $\q$ and~$\q'$, and add a transition from $\q$ to~$\q'$ with $\tickTime$ as action, and with the semilinear $P_\tickTime(\q,\q')$ as label.
			\end{itemize}
		\end{enumerate}

		\theoremDEopacityMETA*

		\begin{proof}
		We begin by constructing the Parikh by block automaton of each NFA over $\Actions' = \{ \tickTime , \tickFin \}$, \ie{} we build $\ParikbB(\RegionAutomaton{\Apriv{\TA'}}) $ and $\ParikbB(\RegionAutomaton{\Apub{\TA'}})$.
		These structures encode exactly the number of increments between two time ticks.

		Now construct the synchronised product of $\ParikbB(\RegionAutomaton{\Apriv{\TA'}}) $ and $\ParikbB(\RegionAutomaton{\Apub{\TA'}})$ over $\{ \tickTime, \tickFin \}$, as follows:
		in the resulting product, each transition has the original synchronised label (in $\{ \tickTime, \tickFin \}$), while the other component is the pair of semilinear sets, coming from each of the two synchronised automata.
		Then, for each synchronised transition, we verify the emptiness of the intersection between these two semilinear sets.
		Because the Parikh image of a regular language is a semilinear set~\cite{Parikh66}, this can be checked using Presburger arithmetic~\cite{BHK17}.
		After removing transitions with such an empty intersection, verifying $\exists$-DE-opacity simply consists in exhibiting an accepting run in that structure.
		\end{proof}

		Finally recall that we exemplify the construction in \cref{example-Parikh-by-blocks}.

	\subsection{Proof of \cref{theorem:bDE-Opacity}}\label{appendix:proof:theorem:bDE-Opacity}
	\theorembDEopacityPositive*
	\begin{proof}
		We use inclusion and intersection problems over non-deterministic finite automata, with a modified construction.
		Given a discrete positive META~$\TA$, we transform it into~$\TA'$ as follows:
		\begin{enumerate}
			\item we add a global ``ticking clock'' $\clockTickOne$ reset every time unit via action~$\tickTime$ as in the proof of \cref{theorem:ET-EN-fullweak-discrete-positive-guarded-METAs};

			\item we eliminate discrete increments %
				$\neq 1$ by splitting them into consecutive
					$+1$
				updates in 0-time, using an extra clock~$\clockZeroTime$;
			\item we add after each update (or series of updates in 0-time) a last transition in 0-time via a fresh action~$\tickFin$;
			\item we relabel each transition of the form $\enervari{i}:+1$ %
				with action $\tickIncr_i$%
				, and make any other action silent;
			\item for each final location~$\locfinal$, we add a new location~$\locfinal'$ (the only ones to be final) reachable from $\locfinal$ via action~$\tickTime$ if $\locfinal$ is reachable on an integer time, and via~$\silentaction$ otherwise.
		\end{enumerate}
		We exemplify our transformation (applied to \cref{figure:example-DiscPosGuarETA}) in \cref{figure:example-bDEO}.

		Now, there is a one-to-one correspondence between the buffered discrete energy observation of the runs of~$\TA$, and the untimed language of~$\TA'$.
		For example, given the private run~$\varrun_1 = (\locinit, 0, 0) \FlecheConcrete{(0.8, \styleact{a})} (\locpriv, 0.8, 0) \FlecheConcrete{(0.3, \styleact{b})} (\locpriv, 1.1, 1) \FlecheConcrete{(0, \styleact{b})} (\locpriv, 1.1, 2) \FlecheConcrete{(0, \styleact{b})} (\locfinal, 1.1, 2) $
		of~$\TA$ in \cref{figure:example-DiscPosGuarETA},
		the untimed word will be $\tickTime \styleact{\tickIncr_1} \tickFin \styleact{\tickIncr_1} \tickFin$.
		The public run $\varrun_2 = (\locinit, 0, 0) \FlecheConcrete{(1.1, \styleact{c})} (\locfinal, 1.1, 2) $ gives in contrast $\tickTime \styleact{\tickIncr_1} \styleact{\tickIncr_1} \tickFin$.

		Solving $\existsWeakFull$-bDE-opacity therefore amounts to checking inclusion and intersection over the NFAs $\RegionAutomaton{\Apub{\TA'}}$ and $\RegionAutomaton{\Apriv{\TA'}}$.

		Again, the complexity for discrete positive \emph{guarded} METAs follows from \cref{proposition:removing-guards-discrete} and the fact that the transformation in the proof of \cref{proposition:removing-guards-discrete} for discrete positive guarded METAs leads to an exponential blowup of the META.
	\end{proof}
	\subsection{Proof of \cref{theorem:bDE-Opacity-ETAs}}\label{appendix:proof:theorem:bDE-Opacity-ETAs}
	\theorembDEopacity*
	\begin{proof}
		Given a discrete ETA~$\TA$, we build $\RegionAutomaton{\Apub{\TA'}}$ and $\RegionAutomaton{\Apriv{\TA'}}$ the same way as in \cref{theorem:bDE-Opacity} (see \cref{appendix:proof:theorem:bDE-Opacity}), where increments are encoded by action~$\tickIncr$ and decrements by~$\tickDecr$.
		These regular languages recognize some words that are not valid, typically leading to the energy dropping below~0, \ie{} not in $\LanguageIncrDecr = \{\omega = (\tickDecr+\tickIncr+\tickFin+\tickTime)^* \mid $ for each prefix $ \omega'$ of $\omega, \#_{\tickDecr}(\omega') \leq \#_{\tickIncr}(\omega')\}$.
		This language is context-free: see \cref{figure:PDA-dec-inc} for a pushdown automaton recognizing the language~$\LanguageIncrDecr$.

		\begin{figure}[tb]
				\begin{minipage}{.35\columnwidth}
				\centering
				\begin{tikzpicture}[NFA, node distance=2.5cm]
					\node[state, initial, final] (l0) {$\q$};

					\path (l0) edge[loop right] node[right, align=center]{$\stackSymbol : \stackSymbol$ \\ $\tickIncr$, $\tickTime$, $\tickFin$} (l0);
					\path (l0) edge[loop above] node[above, align=center]{$\stackSymbol : \stackSymbol \PDAstackEnergy$ \\ $\tickIncr$} (l0);
					\path (l0) edge[loop below] node[below, align=center]{$\PDAstackEnergy : \emptyword$ \\ $\tickDecr$} (l0);
				\end{tikzpicture}
				\end{minipage}
				\hfill{}
				\begin{minipage}{.5\columnwidth}
					$\PDAstackAlphabet = \{ \initialStackSymbol , \PDAstackEnergy \}$

					$\stackSymbol$ denotes $\initialStackSymbol$ or $\PDAstackEnergy$
				\end{minipage}

			\caption{Non-deterministic pushdown automaton recognizing the language $\LanguageIncrDecr$}
			\label{figure:PDA-dec-inc}
		\end{figure}

		To verify $\exists$-bDE-Opacity, verifying $ ((\RegionAutomaton{\Apub{\TA'}} \cap \RegionAutomaton{\Apriv{\TA'}} ) \cap \LanguageIncrDecr) $ emptiness is sufficient: the intersection of both region automata is quadratic, but determinising it adds an exponential factor.
		Intersection between a regular language and a context-free language expressed by a PDA is polynomial.
		This gives a \twoEXPSPACE{} procedure.

		For weak and full bDE-opacity, verifying the inclusions $(\RegionAutomaton{\Apub{\TA'}} \cap \LanguageIncrDecr) \subseteq (\RegionAutomaton{\Apriv{\TA'}} \cap \LanguageIncrDecr)$ and $(\RegionAutomaton{\Apriv{\TA'}} \cap \LanguageIncrDecr) \subseteq (\RegionAutomaton{\Apub{\TA'}} \cap \LanguageIncrDecr)$ cannot be done directly, as the inclusion of two context-free languages is not decidable in general \cite{CM14}.
		However, in our case, the only non-regular language is~$\LanguageIncrDecr$.
		Therefore $((\RegionAutomaton{\Apub{\TA'}} \cap (\neg \RegionAutomaton{\Apriv{\TA'}})  \cap \LanguageIncrDecr)$ and $((\RegionAutomaton{\Apriv{\TA'}} \cap (\neg \RegionAutomaton{\Apub{\TA'}})  \cap \LanguageIncrDecr)$ are computable, and the emptiness of both languages gives us the inclusion that we seek.
		Computing these intersections can be done in polynomial time.
		The automaton to be intersected with~$\LanguageIncrDecr$ is doubly exponential in the original~ETA (one exponential for the region automaton, and another one for the determinisation).
		This gives a \twoEXPSPACE{} procedure.
	\end{proof}
\end{document}
